\documentclass[11pt,letter,thmsa]{article}
\usepackage{amsmath,amssymb,amsxtra,amsfonts,amsthm,array,color,dcolumn,enumerate,ifpdf,euler,paralist,tikz,fancyhdr,float,comment,placeins}\synctex=1
\usepackage{rotating,booktabs}
\usepackage{adjustbox}
\usepackage[authoryear,round]{natbib}
\usepackage{marginnote}
\usepackage{booktabs}
\usepackage{graphicx}
\usepackage{soul}
\usepackage{subfig}
\usepackage{caption}
\usepackage{pdflscape}
\defcitealias{unicirdap1992}{UNICIRDAP}
\usepackage{anyfontsize}
\usepackage{multirow}
\usepackage{makecell}

\definecolor{myred}{RGB}{195,25,0}
\usepackage{threeparttable}
\usepackage{tikz}
\usepackage[hidelinks,colorlinks=false,citecolor=black,urlcolor=blue]{hyperref}
\usepackage[normalem]{ulem}
\usepackage{geometry}
\usepackage{tcolorbox}
\usepackage{wrapfig}

\usepackage[english]{babel}
\usepackage[utf8x]{inputenc}
\usepackage[colorinlistoftodos]{todonotes}
\usepackage{setspace}
\newtheorem{assumption}{Assumption}

\newtheorem{remark}{Remark}
\newtheorem{lemma}{Lemma}

\newtheorem{proposition}{Proposition}

\usepackage[outerbars,color]{changebar}
\definecolor{grey}{rgb}{0.75,0.75,0.75}
\definecolor{orange}{rgb}{1.0,0.5,0.5}
\definecolor{brown}{rgb}{0.5,0.25,0.0}
\definecolor{pink}{rgb}{1.0,0.5,0.5}
\begin{document}
\renewcommand{\baselinestretch}{1.2}\normalsize
\setlength{\textfloatsep}{8pt plus 2pt minus 2pt} 
\renewcommand{\arraystretch}{0.95}
\newcommand{\sym}[1]{\ifmmode^{#1}\else\(^{#1}\)\fi}


\title{Microfinance Competition in the Presence of Moneylenders: Theory and Evidence\thanks{
Author affiliations: Shyamal Chowdhury (Crawford School of Public Policy, The Australian National University, ACT 2600, Australia, corresponding author. Email: Shyamal.Chowdhury@anu.edu.au); Prabal Roy Chowdhury (Indian Statistical Institute); Joeri Smits (Department of Economics, Tufts University). We thank but do not implicate Samuele Centorrino, Hugh Harley, Isaac Loh, Fazlul Kader, Cynthia Kinnan, Sultan Orazbayev, Xuetao Shi, Sajjad Zohir, and audiences at the University of Sydney, BIDS, Netrokona University, and Australian National University for helpful comments. This experiment was conducted from 2012 to 2015, before the widespread adoption of pre-registration norms, and the study was registered in 2019 with the AEA RCT registry as AEARCTR-00042459. While the specific hypotheses were refined following data collection, the study design—including randomized variation in MFI competition stratified by baseline MFI presence—provides credible identifying variation to test theory-driven predictions. IRB approval was obtained from the University of Sydney’s Human Research Ethics Committee (2012/1240). Financial support from the Australian Agency for International Development (Agreement 53782) is gratefully acknowledged.
}}

\author{Shyamal Chowdhury, Prabal Roy Chowdhury and Joeri Smits}

\date{\today}

\begin{titlepage}

\maketitle

\abstract 

After decades of microfinance expansion and despite charging higher interest rates, moneylenders continue to exist alongside microfinance institutions (MFIs). We develop a dynamic model with ex post moral hazard, in which borrowers use MFI loans for production and can rely on moneylenders for bridge loans after shocks to repay MFIs and preserve future access to MFI credit. We then study how MFI competition shapes the coexistence between MFIs and moneylenders. The effect of competition on MFI credit is ambiguous: it may weaken repayment incentives or replace moneylenders as bridge lenders. By contrast, competition unambiguously reduces moneylender credit. In a village-level randomized expansion in Bangladesh, stratified by baseline MFI presence, one additional MFI reduces the number of moneylender loans by roughly 30–40\% two to three years later, with no detectable effect on MFI or overall borrowing. This suggests that, in markets where MFIs rely primarily on the threat of exclusion to induce repayment and moneylenders provide emergency finance, gains from additional MFI entry may be greater in less saturated areas.



\vskip 0.5em \noindent {\bf Key words}: Microfinance competition, Moneylenders, Moral hazard, Randomized experiment.\\ \noindent {\bf JEL Classification Number}: O12, O16, L13, L31, D14. \vskip 0.5em

\thispagestyle{empty}

\end{titlepage}

\section{Introduction}

This research investigates a paradox in global development: the  resilience of the informal moneylender.    Despite the rise of Microfinance Institutions (MFIs)—whose borrower base grew from 9 million in 1997 to 120 million by 2017—moneylending remains a fixture in rural credit markets (see \citep{meyer2002demand}).\footnote{
 In Bangladesh -- our primary setting -- microfinance loans accounted for over 50\% of rural credit by 2011, with roughly one-third of rural households borrowing from MFIs \citep{hies2010}. Globally, the number of MFIs grew from 618 to 3,133 between 1997 and 2005 \citep{hermes2007empirics}.}  This persistence  contradicts the fact that the foundational mission of many MFIs -- most famously the Grameen Bank -- was the  ``liberation" of the poor from the  grip of predatory moneylenders (see, e.g., \citet{meyer2002demand}).   This tension motivates our enquiry: How does intensifying competition among socially motivated MFIs shape the continued relevance of moneylenders? We address this issue through an  integrated theoretical and empirical framework.

We model an infinite horizon framework featuring a single borrower, local moneylenders, and one or more MFIs. Lacking initial savings or assets, the borrower is entirely credit-dependent. While moneylenders are available, their interest rates are  prohibitively high, leaving MFI loans as the only viable path to production. However, MFI lending is plagued by \emph{ex post } moral hazard. Because project returns are unobservable and limited liability protects the borrower,  a borrower can strategically default by claiming project failure. The only credible enforcement mechanism is the threat of exclusion—cutting off access to future credit.\footnote{Use of credit exclusion as an incentive device is widespread in reality. \iffalse \textcolor{red}{Some examples that this is widespread would be good.} \fi}  This  creates a functional necessity for bridge loans to maintain repayment schedules during project failures, sourced either from moneylenders, or from other MFIs.

Our model incorporates two key empirical realities,  \emph{social motivation}, and \emph{double-dipping}.   First, we model MFIs as entities that maximize borrower utility rather than profit, see \citep{armendariz2010economics, besley2005competition}. This aligns with the mission of major NGOs like BRAC and Grameen, which frequently forgo profits to serve the poorest regions  (see \citep{armendariz2011mission}, and \citet{salim2013revealed}).\footnote{Roughly half of MFIs worldwide are NGOs, commonly defined as socially   motivated (\citealp{jia2023mission} UNICIRDAP, \citeyear{unicirdap1992}).} Second, we account for borrowers taking concurrent loans from multiple MFIs—a practice reaching approximately 40\% in Bangladesh (PKSF, 2006), and observed widely across the world.\footnote{Similar and even higher incidence is observed in other markets \citep{green2021dynamic}, including Mexico \citep{rozas2014microfinance}, Thailand and Vietnam   \citep{chichaibelu2017borrowing}, and  Ghana \citep{ewusie2021density}. }

We then analyse the effect of increased MFI competition  through three primary scenarios.  First, consider MFI entry  into  a previously unserved area. In the unique stationary  subgame perfect equilibrium (henceforth SSPE)  the MFI provides productive loans. However,  to avoid default during project failure, the borrower must turn to moneylenders for bridge loans. Thus paradoxically, the initial entry of an MFI creates a new demand for high-interest bridge financing to protec the borrower's MFI access.

 We then examine a scenario where there are two MFIs. Consistent with the facts that there is double-dipping, and that money is fungible, we  examine  two  SSPE, both of which involve double-dipping, but the MFI loans are used for different purposes. 
 Under \emph{ symmetric} SSPE where both MFIs provide an identical amount of productive loans, and  moneylenders provide bridge-loans, we find that  competition becomes self-defeating. As credit options proliferate, the `threat of exclusion'—the MFI’s only enforcement tool—loses its potency. Borrowers, anticipating that they can simply pivot to a new lender after a default, lose the incentive to maintain a clean record, leading to a systemic contraction of credit. This result highlights a functional complementarity between MFI and informal credit, echoing \citet{jain2003little}, who show that informal borrowing can finance MFI repayments. In our setting, however, this complementarity depends on dynamic exclusion incentives and can therefore unravel as MFI competition intensifies. Under an  \emph{asymmetric} SSPE, by contrast,  where one MFI specializes in productive loans,  while another provides lower-cost bridge  financing. In this case, we demonstrate that MFIs effectively crowd out high-interest moneylenders without eroding repayment discipline.  While this introduces another layer of moral hazard, in that the borrower may now default on both MFIs,  aggregate lending under this SSPE exceeds that under a single MFI whenever moneylender interest rates are large.

Our theoretical analysis provides   a nuanced resolution to the microfinance puzzle. While MFI entry initially bolsters moneylenders by creating a need for bridge financing, MFI competition can either reduce moneylending by eroding borrower discipline (under the symmetric SSPE), or displace it entirely through product specialization (under the asymmetric  SSPE). Ultimately, the impact of competition on total credit volume remains ambiguous, depending  on the functional roles filled by MFIs and moneylenders. compete for the same productive space,  or fill distinct functional roles.  The impact of MFI competition on the informal moneylender is however  unambiguously negative.

To test these competing theoretical predictions, we implemented a randomized controlled trial (RCT) in northern Bangladesh. 
 Partnering with RDRS, an MFI expanding into seven subdistricts, we randomized its entry across 146 villages (97 treatment, 49 control). Nearly all villages had at least one MFI at baseline. The randomized entry increased the number of MFIs in treatment villages by one, without affecting control villages, thereby generating exogenous village-level variation in MFI competition at the margin.

The increase in MFI competition had no statistically significant effect on the number of loans held by households or total amounts borrowed two to three years post-intervention. However, borrowing from moneylenders declined in both number and amount, supporting our central theoretical prediction. Specifically, the addition of one MFI reduced the number of moneylender loans per household by 28–40\%, from a baseline average of 0.25 loans. Estimated effects on moneylender interest rates were generally not significant. Although some village-level estimates were negative, they should be interpreted cautiously, as interest rates are only observed for households with outstanding moneylender loans.
This paper contributes to the literature in several important ways. First, it builds on theoretical work examining lender competition under information asymmetries.  Prior studies, such as  \citet{hoff1998moneylenders} and  \citet{mcintosh2005competition} focus on adverse selection, while   \citet{guha2013micro} and \citet{madestam2014informal} examine \emph{ex ante} moral hazard.\footnote{For models of competition among profit-maximizing lenders under adverse selection, see \citet{broecker1990credit,marquez2002competition,lester2019screening}, who show that increasing competition can worsen selection, distort screening, and reduce welfare.} In contrast, our model centres on strategic default, i.e. \emph{ex-post} moral hazard. This focus is supported by \citet{karlan2009observing}, who find in a rare experimental decomposition that ex-post moral hazard dominates adverse selection and ex-ante moral hazard in explaining default. Further,  we allow for two classes of lenders, socially motivated MFIs and moneylenders,  with the latter providing bridge loans in response to production shocks.

A second strand of literature, in particular \citet{ghosh2016information} and \citet{de2018commercialization},  analyzes the role of credit market competition on strategic default, when there is incomplete information sharing among the lenders. \citet{de2018commercialization} examine the effect of MFI commercialization (mission drift), as well as increased MFI competition  on borrower welfare in the presence of social capital and endogenously determined lending mechanisms.  \citet{ghosh2016information}, in contrast, analyse the issue of borrower heterogeneity, and how information may be revealed over time. The present paper differs from both in allowing for double-dipping by borrowers, two types of lenders (MFIs and moneylenders) offering distinct loan types in equilibrium, and then examining how an \emph{increase} in MFI competition affects this interaction.

Third, we contribute to the empirical literature on MFI expansion and informal lending. Existing studies offer mixed findings. A first set of papers focuses explicitly on MFI competition but relies on observational data. The literature finds that in Bangladesh, MFI competition leads to a decline in  moneylender lending (\citet{berg2020microfinance}), whereas  in Uganda, it  does not affect the amounts borrowed from moneylenders (\citet{mcintosh2005rising}). A second set of papers randomizes credit expansion for credible causal identification \citep{banerjee2015six}, but does not focus on or measure MFI competition. They show that microcredit expansion does not uniformly displace informal borrowing; in some settings, it falls slightly, while in others it increases or remains unchanged. However, those studies treat moneylenders and friends or family collectively as ‘informal lenders’, despite their very different character \citep{angelucci2015microcredit,attanasio2015impacts,banerjee2015miracle,crepon2015estimating,tarozzi2015impacts}. We bridge these two literatures by combining baseline data on MFI presence with cluster-randomized MFI expansion.\footnote{Tangentially related, Wydick (2016) notes that to maximize take-up contrasts and statistical power, many RCTs focus on marginal borrowers rather than estimating impacts for inframarginal borrowers.}

The rest of the paper is organized as follows: Section \ref{sec:theory} develops the model, Section \ref{sec:testable} derives the testable prediction, Section \ref{sec:empirics} tests it using an RCT, and Section \ref{sec:conclusion} concludes.

\section{The Framework}\label{sec:theory}

The economy is populated by one borrower,  imperfectly competitive profit-maximizing moneylenders   and $n$   motivated  MFIs,  $n\in\{1,2\}$.     Time is discrete and infinite ($t=1,2,…$). All agents are risk-neutral, infinitely lived, and share a common discount factor $\delta\in(0,1)$. The opportunity cost of capital is 1.  \smallskip
 
\emph{  The borrower.}  The borrower lacks initial wealth, collateral, and access to saving technologies; she must therefore rely on loans in every period to fund a project.\footnote{Only 15\% of our baseline sample households had formal or semi-formal savings (e.g., bank, NGO, savings club).} The project yields a stochastic return: $f(k)$ with probability $p$, and 0 with probability $1-p$, where $k$ represents the investment level. 

\begin{assumption}\label{f}  $f:[0,\infty)\rightarrow [0,\infty)$, and is twice differentiable at all $k>0$. Further, $f'(k)>0$, $f''(k)<0$, $f(0)=0$, and $\lim_{k\rightarrow \infty} f'(k)=0$. 
  \end{assumption}\noindent Given Assumption \ref{f}, the     first best level of investment $k^*=argmax_k\; pf(k)-k=f'^{-1}(\frac{1}{p})$ is well defined.\footnote{  $ k^*$  is well defined since,  from Assumptions \ref{f} and \ref{A2}, we have that  $f(.)$ is concave,   $\lim_{k\rightarrow \infty}f'(k)=0$, and $f'(0)>1+\frac{1-p}{p}m>\frac{1}{p}$. }

 \smallskip

\emph{MFIs:}   MFIs are  motivated entities that  aim to maximize borrower utility, leveraging subsidized funding from governmental and non-governmental sources to access capital at its opportunity cost of 1.  Their lending model is defined by three key characteristics. First, each MFI $i$  has a zero-profit requirement, yielding a net interest rate $r_i\ge 0$.  Second, an informational friction—stemming from the their position outside the local community—renders project outcomes unobservable and prevents contracts from being indexed to returns. Finally, the presence of limited liability rules out any immediate, contemporaneous penalties for borrower default.\footnote{This is reasonable given that the borrower has no income, or assets.}  \smallskip

\emph{Moneylenders}: Two factors define the cost of moneylender loans: capital costs and market power. First, without access to subsidized credit, the moneylenders' cost of capital is $1+c$. Second, existing literature,   e.g. \citet{aleem1990imperfect,hoff1998moneylenders,banerjee2011poor,mookherjee2016theory},    suggests that moneylenders operate in non-competitive environments,  extracting rents through a price-setting markup,  $m'>1$.    Thus, on a loan of $1$, the borrower has to repay $m'(1+c)$. For ease of exposition, let $m=m'(1+c)$.  

 \begin{assumption}\label{A2}  
 $$\frac{p(f'(0)-1)}{1-p}>m>pf'(0).$$\end{assumption}
 
 The assumption that $m>pf'(0)$  ensures that in the absence of MFIs,  the borrower takes no productive loans from the moneylenders (as formalised in Proposition \ref{NoMFI} later).  This reflects the well-documented reality of  ``usurious" informal rates; for example, \citet{mallick2012microfinance} notes a 26\% premium in West Bengal moneylender rates over MFI rates, while \citet{hoff1998moneylenders} highlight cases where informal rates reach double or triple the formal sector's cost. 
  Whereas the assumption that    $\frac{p(f'(0)-1)}{1-p}>m$ ensures that moneylenders serve a functional role in the economy—specifically for ``bridge" financing.\footnote{Formally, this ensures that what we call the second best level of investment $\overline k$ is well defined
  (see \eqref{bark}).}  This upper bound ensures that while moneylender loans are expensive, they are not so prohibitive that they serve no purpose in this economy.

 Moreover,  we assume that   the borrower cannot default on moneylender loans.\footnote{In a classic study of the Chambar region, Aleem (1990)  found that while informal interest rates were high (averaging 78.5\%), the mean default rate was only 2.7\%. Similarly, Siamwalla et al. (1990) showed that rural moneylenders had much higher recovery rates than formal government institutions. } This assumption is grounded in the high repayment rates observed in informal credit markets and stems from three primary mechanisms,  in particular  inter-linked markets, informational advantage, and local influence. As noted by Bell (1988) and Bardhan (1980),  moneylenders often operate in inter-linked markets—acting simultaneously as landlords, employers, or crop traders.   This   allows the them  to deduct repayments directly from wages or harvests.
Second, moneylenders possess a distinct informational advantage due to their proximity to the borrower. They can monitor a borrower’s project success in real-time and accurately gauge their liquidity.
    Finally, local influence arises from the greater wealth and status of the moneylenders.\footnote{Karaivanov and  Kessler (2018) argue that in the event of a default, the borrower risks losing their reputation, social capital, and future support from the entire community.}  The convergence of these factors ensures that the moneylender can enforce repayment far more effectively than formal external institutions.

 \smallskip

\emph{Informational issues:}  The core informational issue is ex-post moral hazard, driven by the intersection of limited liability and  ``information asymmetry"—specifically, the MFIs' inability to verify project outcomes. Since success is unobservable, loan renewals cannot be tied to actual performance. This creates a loophole: a borrower can strategically default by claiming failure without fear of immediate legal or financial reprisal. Consequently, the MFIs' only leverage is the threat of credit termination in the future.\footnote{Other possible threats could involve cutting off access to services that are bundled with the loan, e.g., training;  building of social capital via regular interactions between all members and MFI agents \citep{feigenberg2013economic}. } This dynamic inadvertently  creates a role for moneylenders, since, in case of project failure, the borrower must take bridge-loans from the moneylenders if she wants to maintain access to future loans.

 Finally, we assume that any given MFI can only observe whether the borrower has defaulted on its own loans or not, but not whether she had defaulted on the other MFIs. In this we follow \citet{ghosh2016information}, as well as \citet{de2018market}. Credit information sharing remains extremely limited in many developing countries: as of 2019, over 40 such countries  lacked a credit bureau, and coverage rates were often below 10\% where such systems existed \citep{worldbank2019doingbusiness}. 

 \smallskip

\emph{Timeline.} 
Let  the loan contract of the $i$-th MFI specify the loan amount, $k_{ti}$, and the net rate of interest, $ r_{ti}$.   We next define the time-line at any $t$:

 \begin{enumerate}  
 
 \item []  \emph{Stage 1:  Contracting. }  All active MFIs  simultaneously offer  loan contracts. 
   The total amount borrowed, call it $B_t$,   is invested.\footnote{
 Note that the game structure does not allow for the possibility that the borrower takes productive loans from both MFIs and moneylenders. This is because given Assumption \ref{A2}, $pf'(k)<m$ for all $k$, so that   the borrower has no incentive to do so.}

\item [] \emph{Stage 2: Production.}  Project returns accrue randomly,  $f(B_t)$ in case of success, and 0 otherwise.

\item [] \emph{Stage 3: Settlements of Production Loans.}   The borrower decides on  how many of the MFIs to repay. If the borrower defaults on any MFI that she took a loan from in this period, then   this MFI cuts off access to all future loans.   \mbox{}
 \begin{itemize}
\item [(i)] In case the project fails,   the borrower  takes a bridge loan for the requisite amount.  
\item [(ii)]  In case the project is successful, the borrower repays out of project returns.\end{itemize}
\item [] \emph{Stage 4: Settlements of Bridge-loans.}  The borrower repays the bridge-loan in  case it was from a moneylender. In case the bridge-loan was from MFIs,  the borrower decides on how many of such MFIs to repay.\footnote{At this stage the borrower either gets  income from some other source, say agriculture, or possibly arranges for some asset sale, using the additional time to get more favourable prices for these assets, allowing her to repay the bridge-loan.} repay. If the borrower defaults on any MFI that she took a bridge-loan loan from in this period, then the borrower receives no more loans from this MFI in the future.   \end{enumerate}

The solution concept is stationary   subgame perfect Nash equilibrium (henceforth SSPE).\medskip

$\square$ \emph{A Benchmark: No MFIs}: Suppose there are no MFIs, so that the only source of funding are the moneylenders.  Recall that moneylenders can ensure loan repayment, so that  repayment incentives, in particular the  dynamic incentives of the kind that we analyze later, are not a concern. Thus, the optimization problem facing the household  may be modelled as a static one:
\begin{equation}
\max_k\, pf(k) -mk.\end{equation} Given that $m>pf'(0)$  from  Assumption \ref{A2},  for any positive $k$  the marginal profit from the project 
$
pf'(k)-m$ is negative.  \begin{proposition}\label{NoMFI}  In the absence of MFIs, the borrower does not take a productive loan from the moneylenders.\end{proposition}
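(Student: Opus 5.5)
The plan is to reduce the borrower's problem to the static one, $\max_{k\ge 0}\, pf(k)-mk$, and show its unique maximizer is $k=0$.

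\emph{Reduction to a static problem.} The first step is to justify why the dynamic problem collapses to the static one. With no MFIs, the only lender is a moneylender, and by assumption the borrower cannot default on moneylender loans. Hence no borrowing choice affects future access to credit, and the borrower has no savings technology, so no state carries across periods. In any period the borrower therefore chooses $k$ to maximize expected current payoff. On success she receives $f(k)$ and repays $mk$. On failure she must still repay $mk$; whatever source she uses, the expected cost of the loan is $mk$. The expected payoff is thus $pf(k)-mk$, and the infinite-horizon value is simply $\sum_t\delta^{t-1}\max_k[pf(k)-mk]$. The argument is the same whether or not one insists on stationarity.

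\emph{Static optimization.} Let $g(k)=pf(k)-mk$. By Assumption \ref{f}, $g$ is strictly concave on $(0,\infty)$ with $g'(k)=pf'(k)-m$. Since $f''<0$, $f'$ is strictly decreasing, so $pf'(k)<pf'(0)$ for all $k>0$. By Assumption \ref{A2}, $pf'(0)<m$, so $g'(k)<0$ for every $k>0$. Because $f$ is continuous at $0$ (or, directly, by concavity and the mean value theorem, $f(k)-f(0)\le f'(0)k$), we get
\[
g(k)-g(0)=p f(k)-mk \le \bigl(pf'(0)-m\bigr)k<0
\]
for all $k>0$, using $f(0)=0$. Hence $k=0$ is the unique optimum, and the borrower takes no productive loan.

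\emph{Main obstacle.} The only delicate point is that differentiability is assumed only for $k>0$, so $f'(0)$ should be read as the right derivative or limit, as the paper implicitly does. I would handle this by noting that concavity gives $f(k)\le f(0)+f'(0^+)k$, which is all the inequality above requires.
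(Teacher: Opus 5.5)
Your proposal is correct and follows the paper's own argument. Because moneylender loans cannot be defaulted on, the problem collapses to the static $\max_k\, pf(k)-mk$, and Assumption \ref{A2} ($m>pf'(0)$) together with $f''<0$ makes the marginal return $pf'(k)-m$ negative for every $k>0$, so $k=0$ is optimal. Your treatment of the boundary (continuity at $0$, reading $f'(0)$ as a right limit, and the bound $pf(k)-mk\le(pf'(0)-m)k$) makes rigorous what the paper leaves implicit. One phrase is loose: failure-state repayment does not cost exactly $mk$ ``whatever source she uses'', since a moneylender bridge loan would cost more. This is harmless, because any cost of at least $mk$ only strengthens the conclusion.
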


Proposition \ref{NoMFI} seems consistent with reality. For example,  \citet{mallick2012microfinance} argues that  in  Bangladesh  moneylender loans were made for productive purposes in only 10\% of  sample villages.  In Thailand, \citet{menkhoff2011village} and \citet{chichaibelu2017borrowing} also argue that informal sector loans are  used more for shock absorption.

\section{ A Single MFI}\label{sec:testable}

We first analyse the  case where there is a single MFI, call it M1.
We characterize  SSPE where,  in every period along the equilibrium path:

(i) M1 provides  a loan of $k$ at a net interest rate of $r$  to the borrower, and 

(ii)   the borrower repays M1 
  using project returns upon success, or via a bridge loan of $(1+r)k$ from a moneylender in the event of project failure.  Thus there is no default. 
\vskip 0.25em 

The borrower's one-shot payoff  along such an SSPE  is\begin{equation}
Z(k,r) \equiv  p\left[f(k)-(1+r)k)\right]-(1-p)m(1+r)k.\label{Z}\end{equation}  Given that M1 is motivated her objective is to maximize the present discounted value of the borrower's expected income stream. Thus M1's   optimisation exercise  is
 \begin{equation}\label{Coop1}
\max_{k,r}\, \frac{1}{1-\delta} Z(k,r),\end{equation} 
subject to the  following no default and break-even constraints:
\begin{eqnarray}
ICSuccess &:&f(k)-(1+r)k+\frac{\delta}{1-\delta} Z(k,r) \ge f(k),\label{D1}\\
 ICFailure&:& -m(1+r)k+\frac{\delta}{1-\delta} Z(k,r) \ge 0,\label{D2}\\
 Breakeven &:& r\ge 0.\label{BE}
\end{eqnarray}
\eqref{D1} is the no default condition when there is project success.  The LHS of \eqref{D1}, $f(k)-(1+r)k+\frac{\delta}{1-\delta} Z(k,r)$,  is the  borrower's expected payoff in case she repays. This is because she continues to obtain further productive loans from M1, having an  income of $Z(k,r)$ in every subsequent period. Whereas the RHS of \eqref{D1},  $f(k)$, is her continuation payoff  if she defaults. This follows  since  the borrower obtains  no further loans from M1, and,  given that $pf'(0)>m$ (Assumption \ref{A2}), she has no incentive to take productive loans from the moneylenders.    The no default condition  in case of project failure is given by \eqref{D2}. The LHS of \eqref{D2} is the   the borrower's payoff if she repays by taking  a bridge-loan  of $(1+r)k$  at a cost of $m(1+r)k$ from the moneylenders, when she continues to get productive loans from M1.  Whereas the RHS of \eqref{D2},  $0$, is her continuation payoff from default.    Finally, \eqref{BE} ensures that  M1 makes a non-negative payoff from making the loan. \smallskip

To characterise  the optimal loan package, call it  $(\widehat k(1),\widehat r(1))$, we require some   more notations:\begin{enumerate}
\item []  Let   $\overline k$   maximise \eqref{Coop1} when $r=0$, i.e. $Z(k,0)$:
\begin{equation}\label{bark}
 \overline k=f'^{-1}(1+\frac{1-p}{p}m).\end{equation} We call $\overline k$ the \emph{ second best level of investment}.\footnote{We call $\overline k$ the \emph{second} best level of investment  since the borrower is  taking bridge-loans from the moneylenders at an interest  rate that exceeds the oppurtunity cost of capital.}$^,$\footnote{$\overline k$  is well defined since,  from Assumptions \ref{f} and \ref{A2}, we have that  $f(.)$ is concave,   $\lim_{k\rightarrow \infty}f'(k)=0$, and $f'(0)>1+\frac{1-p}{p}m$.} Note that  $\overline k$ is less than the first best $k^*$.\footnote{This follows since $k^*=f'^{-1}(\frac{1}{p})>f'^{-1}(1+\frac{1-p}{p}m)=\overline k$, where the inequality follows from the concavity of $f(k)$, and the fact that $m>1$.}

\item [] Next let $\widehat \delta $ solve\footnote{$\widehat \delta$ is well defined since $\lim_{\delta\rightarrow 0}\frac{\delta}{1-\delta}=0$,  $\lim_{\delta\rightarrow 1}\frac{\delta}{1-\delta}\rightarrow \infty,$ and $\frac{\delta}{1-\delta}$ is increasing in $\delta$.}
 \begin{eqnarray}
\frac{\delta}{1-\delta}Z'(0,0)=  m.
\end{eqnarray} We later demonstrate that the optimal loan amount is positive iff $\delta>\hat \delta$. This is because repayment is enforced  by cutting off access to \emph{future} credit. With $\delta$ small, this threat loses its potency, and a positive level of investment cannot be sustained.

\item  []   Finally, let  $\underline k(1,\delta)$ satisfy the no default condition in case of project failure, i.e. \eqref{D2}, with equality: \begin{eqnarray}
\frac{\delta}{1-\delta}\frac{Z(\underline k(1,\delta),0)}{\underline k(1,\delta)}=  m.\label{D2''}
\end{eqnarray} 
We call $\underline k(1,\delta)$    the \emph{incentive constrained level of investment} when there is a single MFI.  $\underline k(1,\delta)$  is well defined whenever $\delta>\widehat \delta$.\footnote{Note that  $\frac{Z(k,0)}{k}=\frac{pf(k)}{k}-(p+(1-p)m)$.  Thus  $\frac{\delta}{1-\delta}\lim_{k\rightarrow \infty} \frac{Z(k,0))}{k} = -\frac{\delta}{1-\delta}(p+(1-p)m)<m$,  since from Assumption \ref{f}, $\lim_{k\rightarrow \infty} \frac{f(k,0))}{k}=0$.   Next,  given that $\frac{\delta}{1-\delta}Z'(0,0)>m$, the  existence of $\underline k(1,\delta)$  follows from the intermediate value theorem.} 

\end{enumerate} 
\eqref{D2''} has an intuitive interpretation.  Suppose the project fails, and the borrower decides not to repay. Such a deviation has both costs and benefits. The cost arises from the fact that the borrower has  no access to MFI  loans  from the next period onwards. The LHS of \eqref{D2''} is the present discounted value of this loss averaged out over the  loan amount.  Whereas the RHS of \eqref{D2''} is the per unit  gain from not repaying, which is the interest rate charged by moneylenders $m$. 

Proposition \ref{M1} below   demonstrates that   $\widehat r(1)=0$ and, moreover, $\widehat k(1)=\min\{\overline k, \underline k(1,\delta)\}$ whenever $\delta$ is not too small.  While the formal proof can be found in the appendix, the result is intuitive. 
A reduction in  the rate of interest  $r$ not only increases borrower utility, it also relaxes the no default conditions as the borrower has to repay a smaller amount. With M1 being benevolent, it selects the minimal $r$ that is consistent with breaking even, which in this framework is  zero.    Next consider the equilibrium level of investment.  Motivated by benevolence, M1 targets the second-best investment level, $\overline k$, provided it satisfies the no-default constraints. Otherwise   M1  implements the maximum value of $k$ consistent with zero default -- in case of project success, as well as project failure.   Out of the two no default conditions,  the one  in case of failure, i.e.  \eqref{D2},  is more stringent  since, given  the prohibitive cost of bridge loans from moneylender, repayment  during project failure is relatively  costlier.  Hence the optimal investment in this case is $\underline k(1,\delta)$, since, given that $\widehat r(1)=0$, the maximal $k$ that satisfies \eqref{D2} is  
$\underline k(1,\delta)$.

\begin{proposition}\label{M1} Let Assumptions \ref{f} and \ref{A2} hold. \begin{enumerate}
\item If  $\delta \le \widehat \delta$, then there exists  no SSPE with a positive level of investment.
\item Let $\delta > \widehat \delta$. There is a unique SSPE such that    \begin{enumerate}
\item the   interest rate  ensures that M1  breaks even, i.e.
$\widehat r(1)= 0,$
\item  there exists $\widetilde \delta (1)$, $0<\widetilde \delta(1)<1$, such that  the aggregate loan amount $\;\widehat {k}(1)$  equals the second best level $\overline k$ if $\delta \ge \widetilde \delta(1)$, and equals the incentive constrained level $\underline k(1,\delta)$ otherwise, and 
\item   the  borrower takes a bridge-loan of \,$\widehat k(1)$ from one of the moneylenders whenever the project fails. \end{enumerate} 
\item The optimal investment $\widehat k(1)$ is strictly  less than the first best $k^*$.
 \end{enumerate}  \end{proposition}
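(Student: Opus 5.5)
The plan is to reduce the SSPE characterisation to the MFI's constrained programme \eqref{Coop1}--\eqref{BE}, and then solve that programme in three steps: first fix $r$, then compare the two incentive constraints, then locate the optimal $k$.

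\textbf{Step 1: $\widehat r(1)=0$.} $Z(k,r)$ is strictly decreasing in $r$ for every $k>0$, since $\partial Z/\partial r=-(p+(1-p)m)k<0$. In \eqref{D1} and \eqref{D2}, the left-hand sides are also decreasing in $r$ and the right-hand sides do not depend on $r$. So for any feasible $(k,r)$ with $r>0$, the pair $(k,0)$ is feasible and gives the borrower strictly more. Hence any optimum with $k>0$ has $r=0$, which is exactly \eqref{BE} binding.

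\textbf{Step 2: \eqref{D2} is the binding incentive constraint.} With $r=0$, \eqref{D1} reads $\frac{\delta}{1-\delta}Z(k,0)\ge k$, and \eqref{D2} reads $\frac{\delta}{1-\delta}Z(k,0)\ge mk$. Since $m>1$, \eqref{D2} implies \eqref{D1}, so only \eqref{D2} matters. Now divide \eqref{D2} by $k>0$. The ratio $Z(k,0)/k=pf(k)/k-(p+(1-p)m)$ is strictly decreasing in $k$, because $f$ is strictly concave with $f(0)=0$, which makes $f(k)/k$ strictly decreasing. As $k\to0$ this ratio tends to $Z'(0,0)$. The feasible set is therefore an interval $(0,\bar K]$, nonempty if and only if $\frac{\delta}{1-\delta}Z'(0,0)>m$, i.e.\ $\delta>\widehat\delta$ (using monotonicity of $\delta/(1-\delta)$). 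This proves part 1: if $\delta\le\widehat\delta$, no $k>0$ satisfies \eqref{D2} at any $r\ge0$, since raising $r$ only tightens it. When $\delta>\widehat\delta$, the upper endpoint is $\bar K=\underline k(1,\delta)$ by \eqref{D2''}, and it is unique by the strict monotonicity just established.

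\textbf{Step 3: the optimal $k$.} $Z(\cdot,0)$ is strictly concave with unique maximiser $\overline k$. Maximising it over $(0,\underline k(1,\delta)]$ therefore gives $\widehat k(1)=\min\{\overline k,\underline k(1,\delta)\}$, and this maximiser is unique.

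\textbf{Step 4: the threshold $\widetilde\delta(1)$.} Define $\widetilde\delta(1)$ by $\frac{\delta}{1-\delta}\,\frac{Z(\overline k,0)}{\overline k}=m$. Note that $Z(\overline k,0)>0$: indeed $Z(\overline k,0)\ge$ the value of $Z(\cdot,0)$ near $0$, which is positive because $Z'(0,0)>0$ under Assumption \ref{A2}. Hence $\widetilde\delta(1)\in(0,1)$ exists by the intermediate value theorem. Since $\frac{\delta}{1-\delta}$ is increasing, $\underline k(1,\delta)$ is increasing in $\delta$, so $\overline k\le\underline k(1,\delta)$ if and only if $\delta\ge\widetilde\delta(1)$. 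This gives part 2(b).

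\textbf{Remaining parts.} Part 3 follows from $\widehat k(1)\le\overline k<k^*$, which was already noted. For part 2(c), observe that when the project fails repayment requires $(1+0)\widehat k(1)$; the only source of this is a moneylender, and \eqref{D2} guarantees the borrower prefers to borrow and repay rather than default.

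\textbf{Main obstacle.} The hardest step is linking the programme to SSPE rather than solving it. The argument must show two things. First, the stationary contract offered by a motivated M1 must solve \eqref{Coop1}, since any other stationary offer yields lower borrower utility. Second, the borrower's stationary best response is to repay exactly when \eqref{D1} and \eqref{D2} hold, with default continuation payoffs $f(k)$ and $0$; these follow from Proposition \ref{NoMFI}, because after exclusion the borrower takes no productive moneylender loans. I would handle this with a one-shot deviation argument in each stage of the timeline, noting that uniqueness is among stationary profiles with positive investment.
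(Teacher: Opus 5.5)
Your proposal is correct and follows essentially the same route as the paper's proof: set $r=0$ because lowering $r$ raises $Z$ and relaxes the no-default constraints, note that \eqref{D2} implies \eqref{D1} since $m>1$, use that $Z(k,0)/k$ is decreasing to obtain the feasible interval (empty iff $\delta\le\widehat\delta$) and hence $\widehat k(1)=\min\{\overline k,\underline k(1,\delta)\}$, and locate $\widetilde\delta(1)$ from the limits and monotonicity of $\frac{\delta}{1-\delta}$. Two points you add are left implicit in the paper, though neither changes the argument: your check that $Z(\overline k,0)>0$, which is needed for $\widetilde\delta(1)<1$, and your remark that linking the programme to SSPE requires a one-shot-deviation argument.
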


\section{MFI Competition}\label{n2}

We then examine how increased MFI competition affects the amount of MFI and moneylender loans. Consider a scenario where there are two MFIs,  M1 and M2.\footnote{In Remark \ref{n} later we allow for the possibility that the number of MFIs can exceed two, demonstrating that the analysis is robust to this extension.} We shall examine two different kinds of  SSPE, one where  
both MFIs provide an equal amount of productive loans, with the moneylenders providing bridge-loans, and another one where only one of the MFIs provides productive loans, with the other one  providing bridge-loans.  For ease of reference we call the first SSPE symmetric,  and   the second one asymmetric. 

\subsection{Symmetric Equilibrium with Two MFIs}\label{sym}

We first   characterise symmetric  SSPE where, in every period, along the equilibrium path:

(i) both MFIs offer a loan package $(\frac{k}{2},r)$  to the borrower, 

(ii)  there is no default, with repayment  obligations being met by using  project-generated cash flows   upon success,  and by using  bridge loans from moneylenders   in the event of project failure. 

 We shall demonstrate that,  under this equilibrium, aggregate lending by both MFIs and moneylenders is lower than in the case of a single MFI. 
We introduce two notations:\begin{itemize}

\item []  First, for any   $x\in\{0,1,2\}$, let $\widetilde Z(\frac{xk}{2},r)$  denote  the one-shot payoff of the borrower if she has loans from $x$  MFIs,   all  offering  a loan package $(k/2, r)$, and she repays all $x$ MFIs (taking bridge loans from moneylenders if necessary):
\begin{equation}
\widetilde Z(\frac{xk}{2},r) \equiv  p\left[f(\frac{ xk}{2})-(1+r)(\frac{ xk}{2})\right]-(1-p)m(1+r)(\frac{ xk}{2}).\label{tildeZ}\end{equation}  
Note that $\widetilde Z(k,r)=Z(k,r)=pf(k)-k(1+r)[1+(1-p)m]$. Hence $argmax_k\,\widetilde Z(k,0)=\overline k$.

\item []  Second, whenever  $\delta>\widehat \delta$,  let  $\underline k(2,\delta)$ satisfy:  \begin{eqnarray}
\frac{\delta}{1-\delta}\frac{\widetilde Z(k,0) -\widetilde Z(\frac{k}{2},0)}{k-\frac{k}{2}}= m. \label{XX}
\end{eqnarray} \noindent 
We call $\underline k(2,\delta)$    the \emph{incentive constrained level of investment} under a symmetric SSPE. \end{itemize}

 In Proposition \ref{M2:Coop} we show that for a sufficiently large $\delta$, there exists a unique symmetric SSPE  $(\widehat k(2),\widehat r(2))$. In this equilibrium  both M1 and M2 offer   $(\frac{\widehat k(2)}{2},\widehat r(2))$,   with  $\widehat r(2)=0$ and  $\widehat k(2)=\min\{\overline k, \underline k(2,\delta)\}$.   For  $(\widehat k(2),\widehat r(2))$ to constitute an SSPE,  it must be that  M2, say,  cannot  gain from making a one-shot deviation   to  some other loan package   $(k_2,r_2)$, reverting to  the candidate equilibrium strategy subsequently. 
  The objective of M2  is  to maximize
\begin{eqnarray} p\left[f(k_2+\frac{1}{2}  \widehat k(2))-k_2(1+r_2)-\frac{  \widehat k(2)}{2}\right]
- (1-p)m\left[k_2(1+r_2)+\frac{\widehat k(2)}{2}\right] 
+ \frac{\delta}{1-\delta}\widetilde Z(\widehat k(2),0),\label{objno}\end{eqnarray} subject to the various   no default conditions of the borrower,   and the breakeven constraint of M2.  

As in the single MFI case,    $r_2$ equals  zero since the M2 is benevolent. Further, given that moneylender loans are expensive ($m>1$), it is sufficient to consider the no-default conditions  when the project fails.  Hence suppose the project fails. Consider the  no default conditions  such that the borrower wants to repay all the MFIs, rather than repay only  $x$  of them,  $0\le x\le 1$, defaulting on the others, \emph{including} on   M2:\footnote{For $x=0$,  the RHS of \eqref{IC} below  denotes the the borrower's expected income when she defaults on both MFIs, wheras for $x=1$, the RHS of \eqref{IC} is the borrower's expected income when she defaults on M2 alone.}$^,$\footnote{Given symmetry, one can show that these ensure that the no default condition  such that the borrower wants to repay all the MFIs,  rather than default on M1 alone,  also hold.}
\begin{eqnarray}  
\hskip 0em -mk_2(1+r_2)-m\frac{1}{2}  \widehat k(2)+\frac{\delta}{1-\delta}\widetilde Z(\widehat k(2),0)\ge    
-m \frac{x}{2}  \widehat k(2)+\frac{\delta}{1-\delta}\widetilde Z(\frac{x\widehat k(2)}{2},0),\forall x\le 1.\label{IC}\end{eqnarray}    
One critical insight is that it is sufficient to consider the case where $x=1$.   The reasoning is strategic: defaulting on a single MFI is  more  ``attractive"  than defaulting on both, as it allows the borrower to maintain a credit relationship with the remaining MFI.  Thus it is sufficient to consider the following no default condition: \begin{eqnarray}
\frac{\delta}{1-\delta}\frac{\widetilde Z(k,0) -\widetilde Z(\frac{k}{2},0)}{k-\frac{k}{2}}\ge  m, \label{XX'}
\end{eqnarray}
where, as in \eqref{D2''},  the LHS denotes the per unit cost, and the RHS denotes the per unit gain from default.  Thus the aggregate MFI investment $\widehat k(2)$ equals the second best level $\overline k$ if it does not violate \eqref{XX'}. Otherwise,   the MFI chooses the maximal  aggregate investment level    that ensures that \eqref{XX'}  is not violated, i.e. $\underline k(2,\delta)$.

\begin{proposition}\label{M2:Coop} Let Assumptions \ref{f} and \ref{A2} hold. 
\begin{enumerate}  \item Suppose that  $\delta\le \widehat \delta$.  Then  no  symmetric SSPE  where the investment level $k$ is positive exists.
\item Suppose  that  $\delta> \widehat \delta$.  Then there is  a unique symmetric SSPE  such that 
 \begin{enumerate}
\item    all MFIs charge the minimal interest rate that ensures break-even, i.e.
$\widehat r(2)= 0,$ and 
\item  there exists $\widetilde \delta (2)$, $0<\widetilde \delta(2)<1$, such that  the aggregate loan amount $\;\widehat {k}(2)$  equals the second best level $\overline k$ if $\delta \ge \widetilde \delta(2)$, and equals the incentive constrained level $\underline k(2,\delta)$ otherwise, and
\item   the  borrower takes a bridge-loan from one of the moneylenders whenever the project fails. \end{enumerate}
\item The optimal investment $\widehat k(2)$ is less than the first best $k^*$.
 \end{enumerate}  \end{proposition}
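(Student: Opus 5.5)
We would mirror the proof of Proposition \ref{M1}, treating the symmetric SSPE as a one-shot-deviation problem for M2 (and, by symmetry, M1), given the candidate package $(\frac{\widehat k(2)}{2},\widehat r(2))$. The first step is to simplify \eqref{tildeZ}. Since $\widetilde Z(k,0)=pf(k)-k(p+(1-p)m)$, the left side of \eqref{XX} is
\[
\frac{\delta}{1-\delta}\left[\frac{2p\bigl(f(k)-f(\frac{k}{2})\bigr)}{k}-(p+(1-p)m)\right].
\]
By concavity, $\frac{f(k)-f(k/2)}{k/2}$ is strictly decreasing in $k$ and tends to $f'(0)$ as $k\to 0$. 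So the left side of \eqref{XX} is strictly decreasing in $k$, and its limit at $0$ is $\frac{\delta}{1-\delta}Z'(0,0)$. Its limit as $k\to\infty$ is negative, because $\lim_{k\rightarrow\infty}f'(k)=0$.

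This monotonicity gives part 1 and the existence part of 2(b). If $\delta\le\widehat\delta$, the left side of \eqref{XX'} is strictly below $m$ for every $k>0$, so no positive symmetric investment survives the failure-state deviation. If $\delta>\widehat\delta$, the intermediate value theorem yields a unique $\underline k(2,\delta)$, and \eqref{XX'} holds exactly for $k\le\underline k(2,\delta)$.

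Next we treat the interest rate and the reduction of the constraints. For $r_2$, the objective \eqref{objno} is strictly decreasing in $r_2$, and every no-default constraint tightens in $r_2$ because the repayment $k_2(1+r_2)$ rises. Breakeven then forces $\widehat r(2)=0$. We then argue two reductions.

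\begin{enumerate}[(i)]
\item The success-state constraints are implied by the failure-state ones. In the success state the gain from default is $(1+r)$ per unit rather than $m(1+r)$, and $m>1$.
\item Among the constraints \eqref{IC}, the binding one is $x=1$. Setting $g(x)=-m\frac{x}{2}k+\frac{\delta}{1-\delta}\widetilde Z(\frac{xk}{2},0)$, the $x=1$ constraint $g(2)\ge g(1)$ together with concavity of $g$ in $x$ gives $g(1)-g(0)\ge g(2)-g(1)\ge 0$. Hence $g(2)\ge g(1)\ge g(0)$.
\end{enumerate}

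Defaulting on M1 alone is handled by the same argument, by symmetry.

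The main obstacle is showing the candidate is an equilibrium and unique. We must check that M2 has no profitable one-shot deviation to $k_2\ne \widehat k(2)/2$, and that no other symmetric stationary package does better.

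\begin{enumerate}[(i)]
\item Deviating to a larger $k_2$ violates M2's own version of \eqref{IC}. Since $\widehat k(2)\le\underline k(2,\delta)$, the constraint binds or $\overline k$ is reached.
\item Deviating to a smaller $k_2$ lowers current payoff, because total investment $k_2+\widehat k(2)/2\le\overline k$ lies where $\widetilde Z(\cdot,0)$ is increasing.
\end{enumerate}

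The delicate point is that the deviation constraint is not literally \eqref{XX'} when $k_2\neq\widehat k(2)/2$. We would verify that its left side, with total current loan $k_2+\widehat k(2)/2$ and continuation still at $\widehat k(2)$, is monotone in $k_2$. The maximizer is then the smaller of $\overline k-\widehat k(2)/2$ and the cutoff, and the fixed point coincides with $\widehat k(2)=\min\{\overline k,\underline k(2,\delta)\}$.

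For $\widetilde\delta(2)$, we define it by the condition that the left side of \eqref{XX} evaluated at $\overline k$ equals $m$. Because $\frac{\delta}{1-\delta}$ is increasing and the bracket is positive at $\overline k$, which holds since $\overline k<$ the point where $\widetilde Z(\cdot,0)$ has average slope zero, this cutoff is unique in $(0,1)$. Then $\underline k(2,\delta)\ge\overline k$ exactly when $\delta\ge\widetilde\delta(2)$.

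Part 2(c) follows because moneylenders are the only bridge source on this path. Part 3 follows from $\widehat k(2)\le\overline k<k^*$, as shown after \eqref{bark}.
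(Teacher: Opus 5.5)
Your proposal is correct and follows essentially the same route as the paper. Both use a one-shot deviation by M2, drop the success-state constraints because $m>1$, set $r_2=0$ by benevolence, reduce the failure-state constraints to the case $x=1$ by concavity, split on whether $\overline k\le \underline k(2,\delta)$, and get $\widetilde\delta(2)$ from the monotonicity of $\frac{\delta}{1-\delta}$. Your additions only tighten the argument: the limits of the left side of \eqref{XX} at $0$ and $\infty$, which the paper leaves implicit even though part 1 and the existence of $\underline k(2,\delta)$ rest on them, and the best-response fixed-point argument, which also rules out symmetric candidates with aggregate loan above $\widehat k(2)$, a case the paper's $\epsilon$-deviation uniqueness step skips. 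One small fix: the bracket is positive at $\overline k$ simply because $\widetilde Z(\cdot,0)$ is strictly increasing on $[0,\overline k]$.
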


$\square$ \emph{Effect of an increase in MFI competition}: Next, we compare  lending levels under a two-MFI symmetric SSPE with those  in the single-MFI case.

\begin{proposition}\label{Compp}  Let Assumptions \ref{f} and \ref{A2} hold.  Compared to the single MFI case, both aggregate MFI and moneylender loans are smaller under the symmetric SSPE (in Proposition \ref{M2:Coop}). These loan amounts are strictly smaller if  $\delta>\widehat \delta$.\end{proposition}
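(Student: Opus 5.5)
The plan is to reduce the comparison to a comparison of the two incentive-constrained levels, $\underline k(1,\delta)$ and $\underline k(2,\delta)$. From Propositions \ref{M1} and \ref{M2:Coop}, $\widehat r(1)=\widehat r(2)=0$, $\widehat k(1)=\min\{\overline k,\underline k(1,\delta)\}$ and $\widehat k(2)=\min\{\overline k,\underline k(2,\delta)\}$ when $\delta>\widehat\delta$, and both are zero when $\delta\le\widehat\delta$. In both equilibria the borrower takes a moneylender bridge-loan equal to the aggregate MFI repayment exactly when the project fails. Expected moneylender lending is therefore $(1-p)\widehat k(n)$, which is proportional to aggregate MFI lending. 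It thus suffices to show $\widehat k(2)\le\widehat k(1)$, and this follows once we show $\underline k(2,\delta)<\underline k(1,\delta)$.

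Write $Z(k)\equiv Z(k,0)=pf(k)-k[p+(1-p)m]$. By Assumption \ref{f}, $Z$ is strictly concave with $Z(0)=0$. Define
$$g_1(k)=\frac{Z(k)}{k}=\frac1k\int_0^k Z'(s)\,ds,\qquad g_2(k)=\frac{Z(k)-Z(k/2)}{k/2}=\frac{2}{k}\int_{k/2}^k Z'(s)\,ds .$$
By \eqref{D2''} and \eqref{XX}, $\underline k(1,\delta)$ and $\underline k(2,\delta)$ solve $\frac{\delta}{1-\delta}g_n(k)=m$ for $n=1,2$ respectively. I would establish two facts.

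First, both $g_n$ are strictly decreasing in $k$, because each is an average of the strictly decreasing function $Z'$ over an interval, $[0,k]$ or $[k/2,k]$, whose endpoints move right as $k$ rises. This also gives uniqueness of each root.

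Second, $g_2(k)<g_1(k)$ for every $k>0$. The average of a strictly decreasing $Z'$ over the upper half $[k/2,k]$ lies strictly below its average over all of $[0,k]$.

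Since $\underline k(1,\delta)$ is well defined for $\delta>\widehat\delta$, we have $\frac{\delta}{1-\delta}g_2(\underline k(1,\delta))<\frac{\delta}{1-\delta}g_1(\underline k(1,\delta))=m$. Both $g_2$ and $g_1$ tend to $Z'(0)$ as $k\to0$, so $\frac{\delta}{1-\delta}g_2(0^+)>m$. By the intermediate value theorem and monotonicity, $\underline k(2,\delta)$ exists and $\underline k(2,\delta)<\underline k(1,\delta)$. Because the $\min$ operator is monotone, $\widehat k(2)\le\widehat k(1)$, and the moneylender comparison follows from the proportionality above.

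The step needing most care is strictness. Wherever $\underline k(2,\delta)<\overline k$, that is for $\delta<\widetilde\delta(2)$, we get $\widehat k(2)=\underline k(2,\delta)<\min\{\overline k,\underline k(1,\delta)\}=\widehat k(1)$, with strictness coming from the strict inequality $g_2<g_1$. The ordering $\underline k(2,\delta)<\underline k(1,\delta)$, together with each $\underline k(n,\cdot)$ increasing in $\delta$, also gives $\widetilde\delta(2)>\widetilde\delta(1)$.

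For $\delta\ge\widetilde\delta(2)$, however, both equilibria implement $\overline k$, so the inequality is weak there. My proof would therefore show strict inequality on $\widehat\delta<\delta<\widetilde\delta(2)$ and equality for $\delta\ge\widetilde\delta(2)$. The strictness claim in the statement would then need to be read as holding on the incentive-constrained range. I would check this against the authors' intended reading.
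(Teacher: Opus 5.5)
Your proposal is correct and follows essentially the same route as the paper. Both arguments use concavity of $Z(k,0)$ to get $\frac{Z(k,0)}{k}>\frac{Z(k,0)-Z(k/2,0)}{k/2}$, deduce $\underline k(2,\delta)<\underline k(1,\delta)$ and $\widetilde\delta(1)<\widetilde\delta(2)$, and then split cases on $\delta$. Your integral-average representation, the existence argument for $\underline k(2,\delta)$, and the explicit proportionality of moneylender lending to $(1-p)\widehat k(n)$ only fill in details the paper leaves implicit.

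Your caveat on strictness is also right, and the paper's own proof concedes it. The proof states $\widehat k(1)=\widehat k(2)=\overline k$ for $\delta\ge\widetilde\delta(2)$, so the strict inequality holds only for $\widehat\delta<\delta<\widetilde\delta(2)$, not for all $\delta>\widehat\delta$ as the statement claims.
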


Intuitively, default is very costly when there is a single MFI since following a default the borrower gets no further MFI loans. Whereas with two MFIs, the borrower can default on one MFI, and can still continue to get loans from the other MFI. Thus default incentives are larger when the number of MFIs equals two.  Why does a reduction in aggregate loan amount reduce such default incentives? Given that   the per unit cost of default is  $ \frac{\widetilde Z(k,0) -\widetilde Z(\frac{k}{2},0)}{k-\frac{k}{2}}$ (see \eqref{XX'}), from the concavity of $\widetilde Z(k,0)$ this cost increases when the aggregate loan amount decreases. Consequently aggregate MFI loans are smaller under competition.

These results are consistent with \citet{de2018market} who argue that the amount borrowed from MFIs are larger if there is a single MFI rather than perfect competition among MFIs.  Our results differ in that we  focus on the case where the MFIs are not atomistic, thus allowing us to analyze the effect of a \emph{gradual} increase in MFI competition. Further, we allow for double-dipping and do not assume that lenders can commit to their contracts, aspects that appears to be empirically relevant.

\subsection{Asymmetric Equilibrium with Two MFIs}\label{sec:MFIBL}\label{asym}

We next   characterise  \emph{asymmetric} SSPE where the two MFIs play very different roles. In particular  along the equilibrium path, in every period:

(i) the borrower  obtains a loan of $k$ at a net interest rate of $r_1$  from M1, and 

(ii)  the borrower  repays   M1,  using the  project returns  in case of  project success, and   a bridge-loan of $(1+r_1)k$  at a net interest rate of $r_2$ from M2   in case the project fails. \vskip 0.25em

Under this  asymmetric SSPE    the per period payoff of the borrower is given by 
\begin{equation}
Z^*(k,r_1,r_2)=p[f(k)-(1+r_1)k]-(1-p)(1+r_1)(1+r_2)k.\end{equation}  We shall demonstrate that, compared to the case where there is a single MFI,  aggregate MFI loan  is larger under the optimal asymmetric SSPE.



Note that under  the asymmetric SSPE there is an additional layer of ex-post moral hazard: the risk of strategic default on M2. Specifically, if the borrower defaults on M1, she  loses access to all future productive loans from that institution. This eliminates the subsequent need for bridge loans from M2. Consequently, the borrower’s optimal strategy is to extract an additional $k(1+r_1)$  from M2	
  and immediately default. The  no-default conditions discussed next  are  adjusted to account for this default risk.  
  
  As in Propositions \ref{M1} and \ref{M2:Coop}, given that $m>1$,  it is sufficient to consider the no default conditions under project failure:
\begin{eqnarray}
 ICFailure1&:& -(1+r_1)(1+r_2)k+\frac{\delta}{1-\delta} Z^*(k,r_1,r_2) \ge (1+r_1)k,
 \label{F2}\\
  ICFailure2&: &-(1+r_1)(1+r_2)k+\frac{\delta}{1-\delta} Z^*(k,r_1,r_2) \ge \frac{\delta}{1-\delta} Z(k,r_1).\label{F3}
\end{eqnarray}
 The LHS of both \eqref{F2} and \eqref{F3}  denote the borrower's expected payoff from taking a bridge-loan of $(1+r_1)k$ from M2, and using it to repay M1, and later repaying M2 an amount of $(1+r_1)(1+r_2)k$, when she continues to get  loans from both MFIs in the future. The RHS of \eqref{F2} denotes the borrower's payoff if she  defaults on both M1 and  M2. Whereas the RHS of \eqref{F3}  denotes her payoff if she takes a bridge-loan from M2, uses it to repay M1, but defaults on M2. In this case   she continues to get loans from M1 alone from the next period onwards.

Thus the optimisation problem facing M1    is $\max_{k,r_1}\, \frac{1}{1-\delta} Z^*(k,r_1,r_2),\label{Max'''}$ subject to \eqref{F2}, \eqref{F3} and $r_1\ge 0$. Similarly, the one facing M2 is $\max_{r_2}\, \frac{1}{1-\delta} Z^*(k,r_1,r_2),\label{Max'''}$ subject to \eqref{F2}, \eqref{F3} and $r_2\ge 0$.  We introduce two notations before characterising  the optimal loan package  $(\widehat k^*(2),\widehat r_1^*(2),r_2^*(2))$:\begin{enumerate}
\item [] Let $\widehat \delta^*$ solve 
\begin{equation}\frac{\delta}{1-\delta} Z^*(0,0,0) = 2.\end{equation} 
We later demonstrate that  the equilibrium $\widehat k^*(2)$ is positive iff $\delta>\widehat \delta^*$.
\item []   Whenever $\delta>\widehat \delta^*$,  let $\underline k^*(2,\delta)$ solve 
\begin{equation}\frac{\delta}{1-\delta} \frac{Z^*(k,0,0)}{k} = 2.\end{equation} 
We call $\underline k^*(2,\delta)$ the \emph{incentive constrained level of investment }under the asymmetric SSPE. 

\end{enumerate}

Proposition \ref{Mbl} below   demonstrates that   $\widehat r_1^*(2)=\widehat r_2^*(2)=0$, a direct consequence of the MFIs' benevolence. Furthermore,  the optimal investment level    $\widehat k^*(2)=\min\{k^*, \underline k^*(2,\delta)\}$, whenever  the discount factor $\delta$  is sufficiently large.  Clearly, $\widehat k^*(2)$ equals the first best level $k^*$ if \eqref{F2} and \eqref{F3} remain slack; otherwise, one of the no default conditions would bind. 
Regarding condition (17), since the volume of the productive loan from M1
  is independent of whether the borrower defaults on M2 or not,   the constraint is independent of k and simplifies to: 
\begin{equation}
\delta\ge \delta^0\equiv \frac{1}{1+(1-p)(m-1)}.\end{equation} Hence, assuming that $\delta\ge \delta^0$, so that $k^*$ violates \eqref{F2},   the MFI chooses the maximal  $k$   that satisfies \eqref{F2}, i.e. $\underline k^*(2,\delta)$.

\begin{proposition}[Two MFIs, one providing productive, and the other providing bridge loans] \label{Mbl}  Let Assumptions \ref{f} and \ref{A2} hold.   \begin{enumerate}
\item Suppose $\delta \le \max\{\delta^0,\widehat \delta^*\}$. Then there is  no  asymmetric SSPE where the investment level  is positive.
\item Suppose $\delta >\max\{\delta^0, \widehat \delta^*\}$.    Then  there is a unique  asymmetric SSPE $(\widehat k^*(2),\widehat r_1^*(2))$  such that: 
 \begin{enumerate}
\item    both MFIs charge the minimal interest rate that ensures break-even, i.e.
$\widehat r_1^*(2)=\widehat r_2^*(2)= 0,$
\item  there exists $\widetilde {\delta}^* (2)$, $0<\widetilde {\delta}^*(2)<1$, such that  the  loan amount $\widehat {k}^*(2)$  equals the first best level $ k^*$ if $\delta \ge\widetilde {\delta}^*(2)$, and equals the incentive constrained level $\underline k^*(2,\delta)$ otherwise, and 
\item in case of project failure the borrower takes a bridge-loan from M2.
\end{enumerate}
\item The investment level $\widehat k(2)$ is weakly less than the first best level, with the two being equal whenever $\delta\ge \widetilde {\delta}^*(2)$.
 \end{enumerate}  \end{proposition}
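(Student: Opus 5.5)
We follow the template of Propositions \ref{M1} and \ref{M2:Coop}, treating the problem of each MFI as a best response to the other's equilibrium package. The argument runs in five steps: interest rates, the binding constraint, the investment level, the threshold $\widetilde\delta^*(2)$, and the existence region.

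\emph{Step 1: interest rates.} Take M2's problem for given $(k,r_1)$. Lowering $r_2$ raises $Z^*$ and lowers the repayment $(1+r_1)(1+r_2)k$. Since $Z^*$ is decreasing in $r_2$, it also raises $\frac{\delta}{1-\delta}Z^*$ on the left of \eqref{F2} and \eqref{F3}. The right-hand side of \eqref{F3}, $\frac{\delta}{1-\delta}Z(k,r_1)$, does not depend on $r_2$. So both constraints are relaxed and the objective rises as $r_2$ falls, and the break-even constraint gives $\widehat r_2^*(2)=0$. The same monotonicity in $r_1$ holds in M1's problem. Here the right-hand sides $(1+r_1)k$ and $Z(k,r_1)$ do depend on $r_1$, so I would compare coefficients. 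Differentiating each constraint in $r_1$, the left side falls faster than the right side rises only if the cost on the left exceeds the gain on the right. I expect this comparison to go through. If it does not, the argument still holds at an optimum, because any $r_1>0$ would make the feasible set no larger and the objective strictly smaller. Hence $\widehat r_1^*(2)=0$.

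\emph{Step 2: which constraint binds.} Set $r_1=r_2=0$. Then $Z^*(k,0,0)=pf(k)-k$, while $Z(k,0)=pf(k)-k-(1-p)mk$. Condition \eqref{F3} becomes
\[
-k+\frac{\delta}{1-\delta}(1-p)(m-1)k\ge 0 ,
\]
which is independent of $k>0$ and reduces to $\delta\ge\delta^0$. For $\delta\le\delta^0$ every positive $k$ fails \eqref{F3}. This gives part 1 for the case $\delta^0\ge\widehat\delta^*$. Condition \eqref{F2} becomes
\[
\frac{\delta}{1-\delta}\,\frac{Z^*(k,0,0)}{k}\ge 2 .
\]

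\emph{Step 3: the investment level.} Because $Z^*(\cdot,0,0)$ is concave with $Z^*(0,0,0)=0$, the ratio $Z^*(k,0,0)/k$ is strictly decreasing in $k$. Its limit as $k\to0$ is $pf'(0)-1$, which is how $\widehat\delta^*$ should be read. So \eqref{F2} holds exactly on the interval $(0,\underline k^*(2,\delta)]$, and this interval is nonempty iff $\delta>\widehat\delta^*$. Existence of $\underline k^*(2,\delta)$ follows by the intermediate value argument used for $\underline k(1,\delta)$. This completes part 1. Since $Z^*(\cdot,0,0)$ is maximized at $k^*$ with single-peakedness, the constrained optimum is $\widehat k^*(2)=\min\{k^*,\underline k^*(2,\delta)\}$. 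Uniqueness follows from strict concavity and uniqueness of $\underline k^*(2,\delta)$.

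\emph{Step 4: the threshold $\widetilde\delta^*(2)$.} Define it by $\underline k^*(2,\delta)=k^*$, that is, by
\[
\frac{\delta}{1-\delta}\,\frac{Z^*(k^*,0,0)}{k^*}=2 .
\]
This has a unique solution in $(0,1)$ because $Z^*(k^*,0,0)>0$ and $\frac{\delta}{1-\delta}$ rises from $0$ to $\infty$. Monotonicity of $\underline k^*(2,\delta)$ in $\delta$ then gives part 2(b) and part 3.

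\emph{Step 5: the main obstacle.} The hard part is to verify that the candidate is an SSPE and not merely a solution of the joint program. I need that neither MFI gains from a one-shot deviation, for instance M2 offering productive loans. I also need that the borrower has no profitable alternative, such as taking a moneylender bridge loan, which Assumption \ref{A2} excludes since $m>1$. Finally, I need that the no-default conditions under success are slack, as in Proposition \ref{M1}. I would handle deviations by writing each deviating MFI's objective as in \eqref{objno} and showing that the continuation value pins down the same maximizer.
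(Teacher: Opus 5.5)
Your Steps 1--4 are the paper's proof (its Steps 2--6), step for step. You drive both rates to zero, reduce \eqref{F3} to $\delta\ge\delta^0$, use concavity to make $Z^*(k,0,0)/k$ decreasing, and define $\underline k^*(2,\delta)$ and $\widetilde\delta^*(2)$ as the paper does. Your reading of $\widehat\delta^*$ through the slope $pf'(0)-1$ of $Z^*(\cdot,0,0)$ at zero is the intended one, since literally $Z^*(0,0,0)=0$.

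\textbf{Effect of $r_1$ (Step 1).} This is the step you leave open. You only say you expect the comparison to go through, and your fallback does not rescue it: the claim that $r_1>0$ makes the feasible set no larger is exactly the monotonicity in question. The computation is short.
In \eqref{F2} there is no trade-off: the left side falls and the right side $(1+r_1)k$ rises with $r_1$.
In \eqref{F3}, $Z^*(k,r_1,r_2)-Z(k,r_1)=(1-p)(1+r_1)k\,(m-1-r_2)$. So for $k>0$ the constraint is equivalent to $\frac{\delta}{1-\delta}(1-p)(m-1-r_2)\ge 1+r_2$. This is independent of $r_1$ and of $k$ and tightens in $r_2$, which is also why it collapses to $\delta\ge\delta^0$.
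The paper's own rewriting of \eqref{F3} puts $r_1$ where $r_2$ belongs. The conclusion $r_1=r_2=0$ is unaffected, and your treatment of $r_2$ is the correct one.

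\textbf{Step 5.} It leaves undone the two items the paper actually needs.
First, the success-state condition. After cancelling $f(k)$, \eqref{F1} differs from \eqref{F2} only in having $(1+r_1)k$ rather than $(1+r_1)(1+r_2)k$ on the left, so \eqref{F2} implies \eqref{F1}.
Second, existence and uniqueness. Lowering $r_2$ raises the common objective and relaxes both constraints whatever $(k,r_1)$ is, so M2 plays $r_2=0$ in any equilibrium. M1's best reply is then exactly the single program you solved. The paper's deviation argument finishes the job: a positive rate can be profitably cut, and an aggregate loan $k'<\widehat k^*(2)$ leaves \eqref{F2} slack, so M1 gains by offering $k'+\epsilon$.
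M2 switching to productive lending is not treated in the paper, so you need not settle it to match its proof.

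\textbf{Two small slips.}
First, $Z(k,0)=pf(k)-pk-(1-p)mk$; your formula drops a $(1-p)k$, though your displayed form of \eqref{F3} is right.
Second, at $\delta=\delta^0$ the constraint \eqref{F3} holds with equality rather than failing. The paper's proof uses only $\delta<\delta^0$, so the weak inequality in part 1 is an imprecision you inherit from the statement.
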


\vskip 0.5em
\begin{remark} Given their superior local information and operational flexibility, moneylenders are often better positioned to provide bridge financing. Consequently, under a staggered repayment schedule, borrowers may logically source initial instalments from moneylenders before transitioning to MFIs for later tranches.  One can formalise this by assuming that,  in case of project failure, a fraction $\beta$ of the bridge-loan is taken from MFIs, and $(1-\beta)$ from moneylenders, where $0\le \beta\le 1$.  While our broader analysis -- not reported here -- confirms that aggregate MFI lending increases even under this extension,  for ease of exposition we focus on the case where $\beta=1$.\footnote{The general analysis is available on request.}\end{remark}

$\square$ \emph{Effect of an increase in MFI competition}: Next, we compare lending levels under a two-MFI asymmetric SSPE with those  in the single-MFI case.

\begin{proposition}\label{Mbl:comp} Let Assumptions \ref{f} and \ref{A2}  hold.   Whenever $m>2$,   the productive loan  amount under the asymmetric SSPE, $\widehat k^*(2)$,  exceeds $\widehat k(1)$, the productive loan amount under a single MFI. Further, there is no moneylender loan under the asymmetric SSPE.\end{proposition}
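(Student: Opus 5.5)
We compare the two equilibrium loan levels using the characterisations in Propositions \ref{M1} and \ref{Mbl}. In both cases the equilibrium interest rates are zero, so we work with $Z(k,0)=pf(k)-k[p+(1-p)m]$ and $Z^*(k,0,0)=pf(k)-k$. Note that $Z^*(k,0,0)>Z(k,0)$ for every $k>0$, because $m>1$. We also have $\widehat k(1)=\min\{\overline k,\underline k(1,\delta)\}$ and $\widehat k^*(2)=\min\{k^*,\underline k^*(2,\delta)\}$. The comparison therefore reduces to two inequalities. The first, $k^*>\overline k$, is already noted after \eqref{bark}. The second is $\underline k^*(2,\delta)>\underline k(1,\delta)$, together with the remaining cross comparisons between the two minima.

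For the incentive-constrained levels, define $g(k)=Z(k,0)/k=pf(k)/k-[p+(1-p)m]$ and $g^*(k)=Z^*(k,0,0)/k=pf(k)/k-1$. Both are strictly decreasing in $k$, since $f$ is strictly concave with $f(0)=0$, so $f(k)/k$ is decreasing. By \eqref{D2''}, $\underline k(1,\delta)$ solves $\frac{\delta}{1-\delta}g(k)=m$, and $\underline k^*(2,\delta)$ solves $\frac{\delta}{1-\delta}g^*(k)=2$. Evaluate $g^*$ at $\underline k(1,\delta)$:
\[
\tfrac{\delta}{1-\delta}g^*(\underline k(1,\delta))=m+\tfrac{\delta}{1-\delta}(1-p)(m-1)>m>2,
\]
where the last inequality uses $m>2$. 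Since $g^*$ is decreasing, the point where $\frac{\delta}{1-\delta}g^*$ equals $2$ lies strictly to the right, so $\underline k^*(2,\delta)>\underline k(1,\delta)$. The same computation at $k=0$ gives $\frac{\delta}{1-\delta}Z^{*\prime}(0)>\frac{\delta}{1-\delta}Z'(0,0)$. Whenever $\delta>\widehat\delta$, the left side exceeds $m>2$, so the asymmetric equilibrium with positive lending exists, provided $\delta>\delta^0$ as well. In the range where $\delta\le\delta^0$, the comparison must be stated conditional on the asymmetric SSPE existing, as in Proposition \ref{Mbl}; I would add this caveat explicitly.

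Combining the two inequalities, $\min\{k^*,\underline k^*(2,\delta)\}\ge\min\{\overline k,\underline k(1,\delta)\}$, with strict inequality since each argument is strictly larger: $k^*>\overline k$ and $\underline k^*(2,\delta)>\underline k(1,\delta)$. In the case $\delta\le\widehat\delta$ we have $\widehat k(1)=0$, so the claim holds whenever $\widehat k^*(2)>0$. The final claim, that there is no moneylender loan, follows directly from Proposition \ref{Mbl}(2c): bridge loans come from M2, and by the footnote to the timeline (Assumption \ref{A2}) no productive moneylender loan is ever taken.

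The main obstacle is handling the role of $m>2$ and the existence thresholds carefully. The monotonicity argument itself is routine once the per-unit formulation is written down. The delicate points are verifying that $\frac{\delta}{1-\delta}g(\cdot)$ and $\frac{\delta}{1-\delta}g^*(\cdot)$ cross their targets exactly once, which follows from strict monotonicity and the limits computed in the footnote after \eqref{D2''}, and stating the domain of $\delta$ so that both equilibria are as characterised.
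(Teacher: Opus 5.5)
Your proposal is correct and is essentially the paper's argument. The paper likewise handles the case $\widehat k^*(2)=k^*$ via $k^*>\overline k\ge \widehat k(1)$. In the case $\widehat k^*(2)=\underline k^*(2,\delta)$ it compares the defining equations $f(k)/k=\frac{m(1-\delta)}{p\delta}+\frac{p+(1-p)m}{p}$ and $f(k)/k=\frac{2(1-\delta)}{p\delta}+\frac{1}{p}$, which is algebraically the same as your shift $g^*=g+(1-p)(m-1)$ giving $\frac{\delta}{1-\delta}g^*(\underline k(1,\delta))=m+\frac{\delta}{1-\delta}(1-p)(m-1)>2$.

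Your caveat that the comparison presupposes $\delta>\delta^0$ is a legitimate refinement the paper leaves implicit. It matters because for $\widehat\delta<\delta\le\delta^0$ the asymmetric SSPE has no positive investment, so the statement fails there.
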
  

Intuitively, when $m$ is large, in particular $m>2$, the advantages of substituting moneylender bridge loans with low cost MFI alternatives become more pronounced.  Moreover, with $m$ large, the ex post moral hazard problem under single MFI lending is also large. For both these reasons $\widehat k^*(2)$  exceeds $\widehat k(1)$ (see Figure 1)..

\begin{remark} Interestingly, given the  the additional moral hazard inherent in asymmetric SSPE,    aggregate lending may actually be higher under a monopoly than under competition. This  can occur   when $m$ is not too large, formally    $\widehat \delta^*> \widehat \delta$.  In that case    there exists $\underline \delta$,  $\widehat \delta<\underline \delta$, such that     $\widehat k^*(2)$ is less than $\widehat k(1)$ whenever $\widehat \delta<\delta< \underline \delta$ (see Figure 2).\end{remark}

$\square$ \emph{Testable Hypothesis}:  Both Propositions \ref{Compp} and \ref{Mbl:comp} suggest that increased MFI competition reduces moneylender loans, though for different reasons. This leads to our testable hypothesis.   \vskip 1em
 
 \textbf{ H1:} An increase in MFI competition reduces moneylender lending. \vskip 1em

By contrast, the model yields no unambiguous prediction for MFI lending: competition reduces MFI credit under the symmetric SSPE but increases aggregate MFI lending under the asymmetric SSPE. Accordingly, we do not formulate a directional hypothesis for MFI borrowing.

 \vskip 1em

We discuss a few robustness issues next. Remark \ref{n} examines the robustness of the analysis to allowing for mor than two MFIs.   Given the recent debate over possible  mission drift among  (some) MFIs \citep{mersland2010microfinance,armendariz2011mission}, Remark \ref{profits} examines partially profit oriented MFIs.    \smallskip

 \begin{remark}[Extending the analysis to  $n>2$]\label{n}       For a general $n$, we first consider a symmetric SSPE where  every MFI offers a loan package $(\widehat k(n)/n, 0)$.   Let MFI $i$ make a one shot deviation to   $(k_i,r_i)$ before reverting to its  equilibrium strategy.  Its optimisation problem is 
\begin{eqnarray}\max_{k_i,r_i} \;  p\left[f(k_i+\frac{(n-1)}{n}  \widehat k(n))-k_i(1+r_i)-\frac{(n-1)}{n}  \widehat k(n)\right]
&-& (1-p)m\left[k_i(1+r_i)+\frac{(n-1)}{n}  \widehat k(n)\right] \nonumber\\
&+& \frac{\delta}{1-\delta}Z(\widehat k(n),0)\label{objno}\end{eqnarray} subject to its no default,   and the breakeven constraints.

As in the two MFI case,  it is sufficient to consider   the incentive constraints when the project fails.
First consider the incentive constraints such that the borrower wants to repay all the MFIs, rather than only  $x$  of them,  $0\le x\le n-1$, defaulting on the others, \emph{including}  the $i$-th MFI:
\begin{eqnarray}  
\hskip 0em -mk_i(1+r_i)-m\frac{(n-1)}{n}  \widehat k(n)]+\frac{\delta}{1-\delta}Z(\widehat k(n),0)\ge    
-m \frac{x}{n}  \widehat k(n)+\frac{\delta}{1-\delta}Z(x\widehat k(n)/n,0),\;\forall x.\label{IIC}\end{eqnarray}    
Similarly, the incentive constraints ensuring that the borrower repays all the MFIs, rather than only  the $i$-th MFI and $(x-1)$ of the others, are given by:
 \begin{eqnarray}
 -mk_i(1+r_i)-m\frac{n-1}{n}  \widehat k(n)+ \frac{\delta}{1-\delta}Z(\widehat k(n),0)\ge    
-mk_i(1+r_i)-m \frac{(x-1)}{n}  \widehat k(n)+  \frac{\delta}{1-\delta}Z(\frac{x\widehat k(n)}{n},0), \; \forall  x.\label{IIC'''}\end{eqnarray}  Thus the $i$-th MFI's problem is to maximize (\ref{objno}), subject to (\ref{IIC}), (\ref{IIC'''}),   and $r_i\ge 0$. Using the fact that $f(.)$ is concave, one can show that whenever either (\ref{IIC}), or (\ref{IIC'''}), holds for $x=n-1$, it holds for all $x<n-1$.  The problem now simplifies to maximising  \eqref{tildeZ} subject to  \eqref{IIC}, and \eqref{IIC'''} both holding for   $x=n-1$. The rest of the analysis mimics that for the case where $n=2$, and the results are qualitatively similar.

 We next extend the asymmetric SSPE analysed in Section \ref{asym} to a general $n$.  We focus on equilibria where a single MFI, say M1,  provides productive loans, whereas all other MFIs provide bridge-loans.  Given that bridge loans cannot be used for productive purposes, if the borrower defaults on M1,   then  it is optimal  to immediately take  bridge-loans from  all other MFIs, and then default on all of them.  Consequently the optimisation problem in this case is identical to that in  Section \ref{asym}, so that the results remain unchanged. How does the analysis change if say $m$ of the MFIs, $1<m<n$, provide productive loans and $n-m$ provide bridge loans?   This would increase default incentives relative to the case where productive loans are provided by a single MFI since, as in Section \ref{sym}, the borrower can default on one of the $m$ MFIs and continue to get productive loans from the others. Consequently, the magnitude of MFI loans would be smaller relative to the case where productive loans are provided by a single MFI.
 
 \end{remark}

\begin{remark}[MFIs maximize a weighted average of borrower utility and profits]\label{profits}   Consider a scenario where the MFIs maximize a weighted average of borrower utility and MFI profits. Let the weight on MFI profit be $\mu$, $0\le \mu\le 1$,  for all MFIs.  From continuity it is clear that   there exists $\mu^*>0$ such that the solution coincides with that in Propositions \ref{M1}, \ref{M2:Coop} and \ref{Mbl}  whenever $\mu\le \mu^*$. 

Next consider a scenario where $\mu=1$, i.e. the MFIs are profit maximising.\footnote{Whether a greater emphasis on profitability leads MFIs to drift from their social objectives is debated. \citet{mersland2010microfinance} find no evidence of industry-wide mission drift, while \citet{armendariz2011mission} characterize conditions under which it may arise.} While a  complete analysis is beyond our scope,   we analyse a special   case where the government fixes the  MFI interest rate  at $\overline r$,  $0\le \overline r<m$.  Such an assumption is realistic in certain contexts.\footnote{
In April 2017, the National Bank of Cambodia (NBC)  introduced a 18\% per annum hard cap on interest rates for all MFIs (see IMF Working Paper WP/21/107, Impact of Interest Rate Cap on Financial Inclusion in Cambodia).  Similarly, in 2010, Bangladesh’s Microcredit Regulatory Authority (MRA) implemented a cap of 
 maximum of 27\% per annum on a ``declining balance" basis (see Microcredit Regulatory Authority (MRA) Circular No. 51 (2019) and MRA Rules (2010)). }  Further, this is a first step towards analysing the issue in its full generality.  First, suppose there is a single MFI. Given that the MFI is profit-maximizing,  the loan amount will be the maximal possible amount such that there is no default.  In that case  the optimal level of investment $k'$ satisfies 
  \begin{eqnarray}
\frac{\delta}{1-\delta}\frac{Z(k',\overline r)}{k'}=  m.\label{X1}
\end{eqnarray} 
Next suppose $n=2$. Under the symmetric equilibrium the optimal level of investment $k''$ solves 
  \begin{eqnarray}
\frac{\delta}{1-\delta}\frac{\widetilde Z(k'',\overline r)-\widetilde Z(\frac{k''}{2},\overline r)}{k''-\frac{k''}{2}}=  m.\label{X2}
\end{eqnarray} 
We can mimic the  argument in Proposition \ref{M2:Coop}  to show that both MFI, as well as moneylender lending is smaller  compared to the case where there is a single MFI.\footnote{The argument follows  since from \eqref{X1}, \eqref{X2} and, from the concavity of $\widetilde Z(k,r)$, that    
$\frac{Z(k, \overline r)}{k}>\frac{\widetilde Z(k,\overline r)-\widetilde Z(\frac{k}{2},\overline r)}{k-\frac{k}{2}}$.} Finally, consider the asymmetric SSPE.  In this case  
  the optimal level of investment $k'''$ solves
\begin{equation}\frac{\delta}{1-\delta} \frac{Z^*(k''',\overline r ,\overline r)}{k'''} = 2.\label{X3}\end{equation} 
Compared to the single MFI case, it is clear that aggregate MFI loans are larger whenever $m>2$.\footnote{From \eqref{X1} and \eqref{X3},  and the fact that $m>2$, we find that  $\frac{Z(k',\overline r)}{k'}>\frac{Z^*(k''',\overline r ,\overline r)}{k'''} $. The claim now follows from  the concavity of $f(.)$.} Thus, whenever the MFI rate of  interest is exogenously given,   the analysis goes through qualitatively even when the MFIs are profit maximisers.

\end{remark}

\FloatBarrier

\section{Empirical Analysis}\label{sec:empirics}

To test the model’s predictions, we analyze data from a cluster-randomized trial in which the MFI RDRS expanded into seven subdistricts in northern and northwestern Bangladesh. The expansion was randomized across 146 villages -- 97 assigned to treatment, 49 to control -- stratified by the number of MFIs present at baseline. A village census first recorded the number of MFIs operating in each village, followed by a baseline household survey in which respondents were asked whether they would be interested in borrowing if a new MFI began lending in their village. 

Six months later, RDRS began operations in treatment villages, offering credit exclusively to individuals who had expressed interest during the baseline survey. Three years post-baseline, an endline survey targeted 2,903 households—20 per village, comprising 10 randomly selected from those who had expressed interest at baseline and 10 randomly selected from those who had not. 

Figure \ref{fig:mfi_cdf} shows the distribution of MFI presence before and after the intervention. At baseline, villages had on average 5.2 MFIs, with only two villages having none. The analysis in this paper thus assesses the effects of MFI competition rather than entry. Further details on the experiment can be found in \citet{chowdhury2020contract}.


\FloatBarrier


\subsection{Econometric approach}\label{subsec:econometricapproach}

Our primary outcomes are the number of loans and amounts borrowed by households, across credit sources.\footnote{We did not collect data on default (rates) from MFIs, which would be necessary to analyze that outcome.} Given that these outcomes are nonnegative and feature a large proportion of zeros, an exponential mean model is more appropriate than a linear specification.\footnote{Transformations such as the inverse hyperbolic sine can compromise interpretability, as resulting estimates are not invariant to scale (e.g., whether amounts are measured in Taka or thousands of Taka) \citep{mullahy2024transform}.} Because MFI competition can affect both borrowing incidence and intensity, separating total effects into extensive and intensive margins is problematic without additional assumptions \citep{staub2014causal}. We therefore focus on total effects using the IV Poisson model and report extensive‐margin results for completeness.


The structural model linking MFI competition (measured by the number of MFIs in village $j$, $MFI_{j}$) to household-level borrowing outcomes $y_{ij}$ for household $i$ in village $j$ is:

\begin{align}\label{eq:specification1}
\begin{split}
E(y_{ij}|MFI_{j},X_{ij},\Omega_{ij}) &= exp(\alpha + \beta MFI_{j} + X_{ij} \gamma + \Omega_{ij}) \\
         &= exp(\alpha + \beta MFI\_competition_{j} + X_{ij} \gamma)\eta_{ij}
\end{split}
\end{align}

where $X_{ij}$ is a vector of baseline covariates (including baseline realizations of the outcome to improve precision \citep{mckenzie2012beyond}), $\Omega_{ij}$ captures unobserved village-level heterogeneity, with $\eta_{ij} = exp(\Omega_{ij}) > 0$. This leads to the exponential regression model by \citet{mullahy1997instrumental}\footnote{A conceptual advantage of the multiplicative model is that it treats observables and unobservables (omitted variables and measurement error) symmetrically: all are inside the exponential mean function \eqref{eq:specification1}. In contrast, the additive IV Poisson model is hard to motivate directly, with all observed variables having multiplicative effects but the unobservables having an additive effect.}:

\begin{equation}\label{eq:specification1}
E(y_{ij}|X_{ij},\eta_{ij}) = exp(\alpha + \beta MFI\_competition_{j} + X_{ij} \gamma)\eta_{ij} + \varepsilon_{ij}
\end{equation}

where $E(\varepsilon_{ij} | X_{ij}, \eta_{ij}) = 0$ by construction.

Since $MFI_{j}$ is potentially endogenous (e.g., MFIs may prioritize poorer villages \citep{salim2013revealed}), we instrument $MFI_{j}$ with randomized treatment assignment ($assignment_{j}$). We estimate equation \eqref{eq:specification1} via GMM following \citet{mullahy1997instrumental}, corresponding to quasi–maximum likelihood (QML) estimation of the conditional mean. Random assignment ensures $E(\varepsilon_{ij} | X_{ij}, assignment_{j}) = 0$ and $E(\eta_{ij} | assignment_{j})$ is constant, normalized to equal 1. The coefficient of interest $\beta$ captures the marginal effect of one additional MFI having entered a village 2-3 years before the endline survey. We report transformed coefficients $\beta^{*} = exp(\beta) - 1$, interpreted as semi-elasticities (i.e., 100*$\beta^{*}$ equals the percentage change in expected outcomes per additional MFI).

We estimate specification \ref{eq:specification1} on both the full sample and the subsample of households who expressed interest in borrowing at baseline. While the subsample might be more reactive to changes in the credit market \citep{mckenzie2025designing}, the sample size is reduced. For completeness, we also report village-level estimates. For moneylender interest rates, which are selectively observed among a changing pool of moneylender borrowers, we favor these village-average-based estimates over household-level estimates.

To explore heterogeneity in treatment effects across the distribution of $MFI_j$, we extend the multiplicative exponential mean model \eqref{eq:specification1} by replacing the linear term with a smooth nonparametric function $h(\cdot)$:

\begin{align}\label{eq:specification2}
\begin{split}
E(y_{ij}|MFI_{j},X_{ij},\Omega_{ij}) &= exp(\alpha + h(MFI_{j}) + X_{ij} \gamma)\eta_{ij}
\end{split}
\end{align}

Identification in this setting\textemdash with a binary instrument, a count-valued endogenous regressor, and discrete or semi-continuous outcomes\textemdash has recently been established by \citet{chernozhukov2024estimating} under a Copula Invariance condition: the joint dependence between unobserved determinants of treatment intensity and outcomes remains invariant across instrument values. Given that MFI entry was randomly assigned across villages, this condition is particularly plausible. The instrument also shifts the full support of $MFI_j$ (Figure \ref{fig:mfi_cdf}), satisfying their support condition. We estimate $h(\cdot)$ via two-step IV using penalized splines \citep{wood2017generalized}.

For both the full and ‘interested in borrowing’ samples, baseline characteristics and attrition are balanced across treatment and control villages (Table \ref{tab:summarystats1}-\ref{tab:summarystats2})).

\FloatBarrier

\subsection{Results}

Table \ref{tab:nrloans} reports the estimates of MFI competition on household borrowing as measured by the number of loans taken from a range of sources. An additional MFI operating in a village significantly reduces the number of loans from moneylenders by 30–40\%, relative to a baseline of 0.24–0.27 loans per household (column (3)), consistent with the model’s prediction.

We cannot reject the null that the number of loans from MFIs, the number of loans from friends and family, or the total number of loans are unaffected by MFI competition (columns (1), (4) and (5)).\footnote{Loans are classified into MFI and non-MFI loans. The non-MFI category includes moneylender loans ( \% of all loans), loans from acquintances (family and friends; \% of all loans), commercial banks ( \% of all loans), and other, less frequent sources.} Power simulations described in Appendix~\ref{app:power} indicate that the design has 80\% power to detect a reduction in the number of MFI loans of around 11.6\%, or 0.08 loans.

The estimated negative effect for moneylender loans is robust to accounting for attrition (Table \ref{tab:attrition}), and to linear specifications (Table \ref{tab:nrloanslinear}). In these linear IV specifications, the Kleibergen–Paap rk Wald F-statistics are around 20, comfortably above the conventional threshold of 10, confirming the strength of the randomized MFI-entry instrument. For completeness, we also estimate the extensive margin (whether a household has at least one loan from a given source) using an IV-LPM and an IV-Probit model; the estimates point in the same direction as those for the number of loans (Table \ref{tab:extensive}). Among households expressing an `interested in borrowing' at baseline, the effect on having at least one loan from any source is negative, consistent with our theoretical predictions. The semiparametric estimates shown in Figure \ref{fig:semip} show that the decrease in borrowing from moneylenders occurs and is discernible when the number of MFIs in the village increases beyond five.

\begin{table}[!htbp]
\small
\caption{\small{Effect of MFI competition on the number of loans by credit source
\label{tab:nrloans}}}
\centering
  \begin{threeparttable}
\begin{tabular}{l*{5}{>{\centering\arraybackslash}p{2.2cm}}}
\toprule
\multicolumn{1}{c}{} & \multicolumn{1}{c}{(1)}                                          & \multicolumn{1}{c}{(2)}                                             & \multicolumn{1}{c}{(3)}                                                 & \multicolumn{1}{c}{(4)}       & \multicolumn{1}{c}{(5)}                               \\
 & \begin{tabular}[c]{@{}c@{}}\# of loans \\ from MFIs \textcolor{white}{iii}\end{tabular} & \begin{tabular}[c]{@{}c@{}}\# of loans\\ from non-MFIs\end{tabular} & \begin{tabular}[c]{@{}c@{}}\# of loans from\\ moneylenders\end{tabular} & \begin{tabular}[c]{@{}c@{}}\# of loans from\\ acquintances\end{tabular} & \begin{tabular}[c]{@{}c@{}}\# of loans\\ (any)\end{tabular} \\
 \midrule
\multicolumn{6}{@{}l}{\textit{Panel A: Full sample}}\\
\vspace{-.7em}\\
\# of MFIs  &        0.01   &       -0.08*  &       -0.31***&       -0.02   &       -0.06   \\
            &      (0.06)   &      (0.05)   &      (0.13)   &      (0.07)   &      (0.04)   \\
\vspace{-.7em}\\
E(dep. var.)&        0.72   &        1.20   &        0.24   &        0.76   &        1.92   \\
(st. dev.)  &        0.80   &        1.16   &        0.57   &        0.99   &        1.53   \\
Zeros (\%)  &        46.1   &        32.0   &        81.4   &        51.4   &        18.3   \\
Observations&        2735   &        2735   &        2735   &        2735   &        2735   \\
\\
\multicolumn{6}{@{}l}{\textit{Panel B: `Willing to borrow' sample}}\\
\vspace{-.7em}\\
\# of MFIs  &        0.06   &       -0.10*  &       -0.44***&       -0.06   &       -0.06   \\
            &      (0.06)   &      (0.06)   &      (0.22)   &      (0.10)   &      (0.04)   \\
\vspace{-.7em}\\
E(dep. var.)&        0.76   &        1.23   &        0.27   &        0.76   &        1.99   \\
(st. dev.)  &        0.80   &        1.14   &        0.59   &        0.99   &        1.51   \\
Zeros (\%)  &        42.6   &        30.7   &        79.0   &        51.6   &        16.0   \\
Observations&        1374   &        1374   &        1374   &        1374   &        1374   \\
\\
\multicolumn{6}{@{}l}{\textit{Panel C: Village-level averages}}\\
\vspace{-.6em}\\
\# of MFIs  &       -0.00   &       -0.08   &       -0.36** &       -0.01   &       -0.06   \\
            &      (0.07)   &      (0.05)   &      (0.21)   &      (0.08)   &      (0.05)   \\
\\
E(dep. var.)&        0.71   &        1.20   &        0.24   &        0.76   &        1.91   \\
(st. dev.)  &        0.37   &        0.45   &        0.20   &        0.37   &        0.64   \\
Zeros (\%)  &         1.4   &         0.0   &        14.4   &         0.0   &         0.0   \\
Observations&         146   &         146   &         146   &         146   &         146   \\
\bottomrule
\end{tabular}
    \begin{tablenotes}
\item \emph{Notes:} Estimates are based on GMM estimation of Mullahy’s (1997) IV Poisson model of the marginal effect of MFI competition—proxied by the number of MFIs in a village—on household loan counts in the year preceding the endline survey. The estimation is Poisson QML–based, requiring only correct specification of the conditional mean and being robust to excess zeros \citep{staub2013consistent}. The endogenous regressor is instrumented with randomized MFI entry (RDRS). Reported coefficients are $exp(\beta) - 1$ and can be interpreted as semi-elasticities. Columns show effects on loans from: MFI credit providers (1); any non-MFI lenders including moneylenders and acquaintances (2); moneylenders (3); friends, family, and acquaintances (4); and all sources combined (5). E(dep. var.) refers to the baseline mean of the dependent variable in the estimation sample. Baseline controls include the outcome’s baseline value; gender, age, literacy, and education of the household head; household size and composition, number of income sources, landholdings, wealth index, saving behavior, household- and village-level shock indices, and district fixed effects (see \ref{app:construction}). Panel C uses baseline village averages of these variables. Robust standard errors clustered at the village level are in parentheses. $^{*}$, $^{**}$, $^{***}$ denote statistical significance at the 10, 5, and 1 percent levels, respectively.
    \end{tablenotes}
\end{threeparttable}
\end{table}

Regarding amounts borrowed, MFI competition does not have a statistically significant effect on total borrowing (column (5)), borrowing from MFIs (column (1)), non-MFIs (column (2)), or friends and acquaintances (column (4)) (Table \ref{tab:amts}). However, the amount borrowed from moneylenders declines by roughly 20–35 percent, with statistical significance concentrated among households who expressed interest in borrowing at baseline (column 3). The result largely holds when accounting for attrition (Table \ref{tab:attrition}) or when applying a different or no winsorization to the outcome \ref{tab:winsor1}. Linear IV estimates - not our preferred specification, carry the same sign but lack statistical significance (Table \ref{tab:amtslinear}). The semiparametric estimates are shown in Figure \ref{fig:semip_amts}.

Romano-Wolf p-values reported in Table \ref{tab:tableMHT} indicate that the results for the number of moneylender loans retain statistical significance (in the full sample) when accounting for multiple hypothesis testing, whereas those on amount borrowed from moneylenders do not.

Only 18.6\% of households had a moneylender loan at endline; because of the resulting small and selective sample, we de-emphasize results on moneylender interest rates, which are considered suggestive only. Some of the estimates are negative and marginally statistically significant, while lacking robustness (Table \ref{tab:interest}).

\vspace{1cm}

\begin{figure}[!htbp]
    \centering
    \subfloat{\includegraphics[width=.495\textwidth]{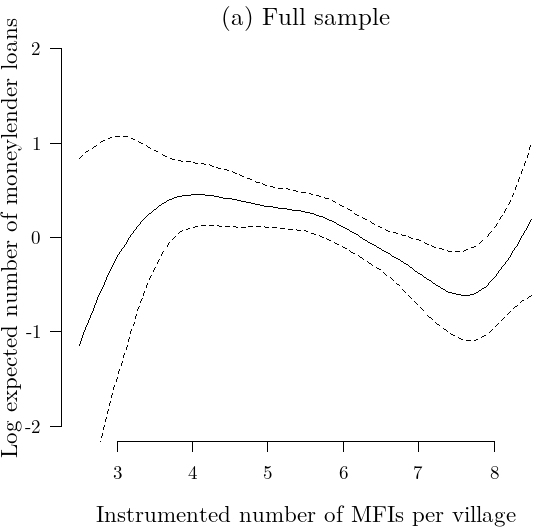}}
    \hspace{-.2cm}
    \subfloat{\includegraphics[width=.495\textwidth]{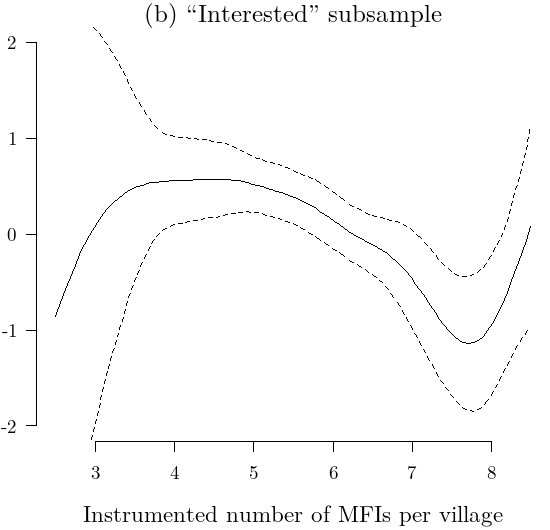}}
    \caption{Estimated smooth effect of MFI competition in (a) the full sample, and (b) the "interested" in MFI borrowing subsample. The plotted range excludes sparsely populated tails where the semiparametric estimates become imprecise.}
    \label{fig:semip}
\end{figure}

\FloatBarrier

\begin{table}[!htbp]
\small
\caption{\small{Effect of MFI competition on total amounts borrowed by credit source \label{tab:amts}}}
\centering
  \begin{threeparttable}
\begin{tabular}{l*{5}{>{\centering\arraybackslash}p{2.2cm}}}
\toprule
\multicolumn{1}{l}{} & \multicolumn{1}{c}{(1)}                                          & \multicolumn{1}{c}{(2)}                                             & \multicolumn{1}{c}{(3)}                                                 & \multicolumn{1}{c}{(4)}      & \multicolumn{1}{c}{(5)}                    \\
 & \begin{tabular}[c]{@{}c@{}}Amount \\(1000 Tk) \\borrowed \\ from MFIs \textcolor{white}{iii}\end{tabular} & \begin{tabular}[c]{@{}c@{}}Amount \\(1000 Tk) \\borrowed \\ from non-MFIs\end{tabular} & \begin{tabular}[c]{@{}c@{}}Amount \\(1000 Tk) \\borrowed from\\ moneylenders\end{tabular} & \begin{tabular}[c]{@{}c@{}}Amount \\(1000 Tk) \\borrowed from\\ acquintances\end{tabular} & \begin{tabular}[c]{@{}c@{}}Amount \\(1000 Tk) \\borrowed\\(any source) \end{tabular} \\
 \midrule
\multicolumn{6}{@{}l}{\textit{Panel A: Full sample}}\\
\vspace{-.7em}\\
\# of MFIs         &        0.02   &        0.03   &       -0.22*  &        0.02   &        0.01   \\
                    &      (0.06)   &      (0.08)   &      (0.15)   &      (0.10)   &      (0.05)   \\
\vspace{-.6em}\\
E(dep. var.)  & 12.36 & 7.34 & 1.54 & 3.69 & 20.07 \\
(st. dev.)    & (16.81) & (14.59) & (4.81) & (8.47) & (26.00) \\
\vspace{-.6em}\\
Zeros (\%)  & 46.1\% & 32.0\% & 81.4\% & 51.4\% & 18.3\% \\
\vspace{-.6em}\\
Observations                   &        2735   &        2735   &        2735   &        2735  &        2735 \\
\\
\multicolumn{6}{@{}l}{\textit{Panel B: `Willing to borrow' sample}}\\
\vspace{-.7em}\\
\# of MFIs         &        0.07   &       -0.06   &       -0.35** &       -0.01   &        0.01   \\
                    &      (0.07)   &      (0.08)   &      (0.20)   &      (0.13)   &      (0.05)   \\
\vspace{-.6em}\\
E(dep. var.)  &   12.90 &   7.18   &   1.73    &   3.51   &  20.42   \\
(st. dev.)    & (16.70) &   (13.65) &   (5.04) &   (7.98) &   (25.09) \\
\vspace{-.6em}\\
Zeros (\%)  &  42.6\% &   30.7\%  &   79.0\%  &   51.6\%  &   0.00\%  \\
\vspace{-.6em}\\
Observations                  &        1374   &        1374   &        1374   &        1374 &        1374   \\
\\
\multicolumn{6}{@{}l}{\textit{Panel C: Village-level averages}}\\
\vspace{-.7em}\\
\# of MFIs          &        0.01   &        0.03   &       -0.14   &        0.01   &       -0.00   \\
                    &      (0.07)   &      (0.07)   &      (0.17)   &      (0.09)   &      (0.07)   \\
\vspace{-.6em}\\
E(dep. var.)  & 12.20 & 7.32 & 1.55 & 3.68 & 19.88 \\
(st. dev.)    & (7.11)  & (4.45) & (1.53) & (2.50) & (9.39) \\
\vspace{-.6em}\\
Zeros (\%)  & 1.4\% & 0.0\% & 14.4\% & 0.0\% & 0.0\% \\
\vspace{-.6em}\\
Observations                 &         146   &         146   &         146   &         146   &         146   \\    
\bottomrule
\end{tabular}
    \begin{tablenotes}
\item \emph{Notes:} Estimates are based on GMM estimation of Mullahy’s (1997) IV Poisson model of the marginal effect of MFI competition—proxied by the number of MFIs in a village—on the total amounts borrowed by households in the year preceding the endline survey. The estimation is Poisson QML–based, requiring only correct specification of the conditional mean and being robust to excess zeros \citep{staub2013consistent}. The endogenous regressor is instrumented with randomized MFI entry (RDRS). Reported coefficients are $exp(\beta) - 1$ and can be interpreted as semi-elasticities. Columns show effects on amounts borrowed from: MFI credit providers (1); any non-MFI lenders including moneylenders and acquaintances (2); moneylenders (3); friends, family, and acquaintances (4); and all sources combined (5). E(dep. var.) refers to the baseline mean of the dependent variable in the estimation sample. Baseline controls include the outcome’s baseline value; gender, age, literacy, and education of the household head; household size and composition, number of income sources, landholdings, wealth index, saving behavior, household- and village-level shock indices, and district fixed effects (see \ref{app:construction}). Panel C uses baseline village averages of these variables. The dependent variable is winsorized at the 99$^{\text{th}}$ percentile to limit the influence of outliers. Robust standard errors clustered at the village level are in parentheses. $^{*}$, $^{**}$, $^{***}$ denote statistical significance at the 10, 5, and 1 percent levels, respectively.
    \end{tablenotes}
\end{threeparttable}
\end{table}

\clearpage

\normalsize

\FloatBarrier

\section{Conclusion}\label{sec:conclusion}

This paper examines the evolution of moneylending in the face of intensifying competition between socially motivated MFIs. We develop a dynamic model of ex-post moral hazard where repayment is enforced through the threat of future credit exclusion. Our analysis identifies a critical functional complementarity: moneylenders leverage their informational and other advantages to provide ``bridge loans" when MFI-funded projects fail, allowing borrowers to maintain MFI access. This symbiosis helps explain the persistence of high-interest moneylending despite global MFI expansion.

We demonstrate that MFI competition affects credit markets through two diverging channels. First, competition can weaken dynamic incentives by making exclusion less consequential, thereby tightening MFI credit constraints and reducing the amount lent by both MFIs and moneylenders. Second, borrowers may shift to MFIs for bridge loans.  Despite such bridge-loans also being susceptible to moral hazard, we show that aggregate MFI loans increase, while moneylender loans go to zero.  Consequently, while MFI competition consistently erodes the moneylender’s market share, its impact on total MFI lending remains theoretically ambiguous. 

This framework    also contributes to the longstanding debate on whether MFI and moneylender loans are substitutes or complements (e.g., \citet{banerjee2015six,banerjee2025can}). We show that the relationship depends both on whether the market is underserved or already served by MFIs and on whether MFIs themselves provide bridge finance. MFI entry into underserved markets substitutes productive lending away from moneylenders, whereas in already-served markets moneylender bridge loans can be complementary to productive MFI lending. When MFIs themselves provide bridge finance, however, they directly substitute for moneylenders.

We test this prediction using a cluster-randomized trial in Bangladesh, where the randomized entry of an additional MFI into already-served villages increased local competition. Moneylender borrowing—consistent with the model—declines, reflected in fewer loans and lower total amounts, while MFI borrowing is unaffected. Effects on moneylender interest rates are inconclusive.

 
These insights carry two main policy implications. First, socially motivated MFIs may achieve greater impact by expanding into underserved areas rather than competing in saturated markets. Second, credit bureaus can help sustain both formal and informal lending in competitive markets by mitigating ex-post moral hazard \citep{luoto2007credit,de2010supply}. More broadly, our findings demonstrate how competition can reshape imperfect credit markets by weakening repayment incentives.

\newpage

\bibliographystyle{plainnat}
\bibliography{bibliography}

\newpage


\clearpage






\clearpage




\begin{titlepage}
  \centering
  \vskip 60pt
  \LARGE Online Appendix \\ Microfinance Competition in the Presence of Moneylenders: Theory and Evidence

  \vskip 3em
  \centering
\large Shyamal Chowdhury \hspace{.9cm} Prabal Roy Chowdhury \hspace{.9cm} Joeri Smits
\end{titlepage}

\appendix
\setcounter{page}{1}


\clearpage

\appendix

\counterwithin{figure}{section}
\counterwithin{table}{section}
\renewcommand\thefigure{\thesection\arabic{figure}}
\renewcommand\thetable{\thesection\arabic{table}}

\renewcommand{\thesection}{Appendix \Alph{section}}
\renewcommand{\thesubsection}{\Alph{section}.\arabic{subsection}}

\FloatBarrier

\section{Theory}\label{appsec:theoryappendix}

\normalsize

\emph{Proof of Proposition \ref{M1}.}   1. and 2. We prove the first two parts simultaneously. 

\emph{ Existence}:  We first show that the strategies described in Proposition \ref{M1} constitute an SSPE.  The proof proceeds in several steps.\smallskip

\emph{Step 1}. If \eqref{D2} holds, then so does \eqref{D1} (since $m>1$). Thus it is sufficient to maximize $Z(k,r)$ subject to  \eqref{D2} and \eqref{BE}. \smallskip

\emph{Step 2}. Next note that a decrease in $r$ relaxes   \eqref{D2}, while it  increases $Z(k,r)$. Thus, for any $k$, the optimum involves selecting the minimum $r$ such that $r\ge 0$, i.e. $r=0$.  \smallskip

\emph{Step 3}. The optimisation problem now simplifies to maximising $Z(k,0)$ subject to  \begin{eqnarray}
\frac{\delta}{1-\delta}\frac{Z(k,0)}{k}\ge  m.\label{D2''''}
\end{eqnarray} 

\emph{Step 4}. Clearly $\widehat k(1)=\overline k$ whenever  $\overline k$ satisfies the no default condition \eqref{D2''''}.  \smallskip

\emph{Step 5}. Next suppose that $\overline k$ violates \eqref{D2''''}.  There are two cases. 
 
 Case (i) $\frac{\delta}{1-\delta}Z'(0,0)\le  m$: Given that $Z(k,0)$  is concave in $k$,  $\frac{Z(k,0)}{k}$ is decreasing in $k$, and thus $\frac{Z(k,0)}{k}<\lim_{k\rightarrow 0} \frac{Z(k,0)}{k}=Z'(0,0)$. Hence  $\frac{\delta}{1-\delta}Z'(0,0)<  m$ for all $k$ positive. Thus 
  \eqref{D2''''}  fails for all $k$ positive. Thus $\widehat k(1)=0$.
 
 Case (ii) $\frac{\delta}{1-\delta} Z'(0,0)> m$:  Given that $Z(k,0)$ is concave, and maximised at $\overline k$, $Z(k,0)$ is increasing in $k$ for all $k<\overline k$. Thus the $\widehat k(1)$ is the largest $k$ that satisfies \eqref{D2''''}. Finally, given that  $\frac{Z(k,0)}{k}$ is decreasing in $k$, the  optimal investment level  solves \eqref{D2''''}  with an equality, i.e.  $\underline k(1,\delta)$. \smallskip

\emph{Step 6}. Finally we show that there exists $0<\widetilde\delta(1) <1$ such that, for $\delta=\widetilde \delta(1)$, 
$$\frac{\delta}{1-\delta}\frac{Z(\overline k, 0)}{\overline k}=m.$$
This follows since $\lim_{\delta\rightarrow 0}\frac{\delta}{1-\delta}\frac{Z(\overline k, 0)}{\overline k}=0<m$,  $\lim_{\delta\rightarrow 1}\frac{\delta}{1-\delta}\frac{Z(\overline k, 0)}{\overline k}=\infty, $ and $\frac{\delta}{1-\delta}\frac{Z(\overline k, 0)}{\overline k}$ is increasing in $\delta$. 
Thus  the no default condition \eqref{D2''''}  binds iff $\delta<\widetilde \delta(1)$.

 \emph{Uniqueness}:  Uniqueness follows since the preceding argument shows that $(\widehat k(1),\widehat r(1))$ is the unique solution to  M1's optimisation problem. 

3. The result follows since, by definition, $\widehat k(1) \le\min\{\overline k,\underline k(1,\delta)\} \le  \overline k<k^*$.
\qed\vskip 1em

\emph{Proof of Proposition \ref{M2:Coop}}.  \emph{  Existence}: We first show that the strategies described in  Proposition  \ref{M2:Coop} constitutes an SSPE.

The argument  proceeds in various steps.\smallskip 

\emph{Step 1}.    $\frac{\widetilde Z(k,r) -\widetilde Z(\frac{xk}{2},r)}{k-\frac{xk}{2}}$ is decreasing in $x$. \smallskip 

Note that for any $K',K$,  where $K'>K$,
{\small\begin{eqnarray}
\frac{d}{dK}\left( \frac{\widetilde Z(K',r) -\widetilde Z(K,r)}{K'-K}\right)=\frac {\frac{\widetilde Z(K',r) -\widetilde Z(K,r)}{K'-K} -\widetilde Z_1(K,r)}{K'-K}<0,\label{K}
\end{eqnarray}}since $\widetilde Z(K,r)$ is concave in $K$, and $K'>K$.  Hence    the claim follows. \smallskip 

\emph{Step 2}. We next argue that one can ignore the incentive constraints when the project succeeds.\smallskip 

Suppose there is project success. The incentive constraints such that the borrower wants to repay all the MFIs, rather than only  $x$  of them,  $0\le x\le 1$, defaulting on the others, \emph{including}   M2 are:
{\small \begin{eqnarray}
\hskip -2em  f(k_i+\frac{1}{2}  \widehat k(2)) -k_i(1+r_i)
-\frac{1}{2}  \widehat k(2)+\frac{\delta}{1-\delta}\widetilde Z(\widehat k(2),0)\ge    f(k_i+\frac{1}{2}  \widehat k(2))
 - \frac{x}{2}  \widehat k(2)+\frac{\delta}{1-\delta}\widetilde Z(\frac{x\widehat k(2)}{2},0),\;\forall x.\label{IC1}\end{eqnarray}    }
Similarly, the incentive constraint ensuring that the borrower repays all the MFIs, rather than only  M2  is  given by:
{\small  \begin{eqnarray}
\hskip -2em   f(k_i+\frac{1}{2}  \widehat k(2))
-k_i(1+r_i)-\frac{1}{2}  \widehat k(2)+ \frac{\delta}{1-\delta}\widetilde Z(\widehat k(2),0)
\ge     f(k_i+\frac{1}{2}  \widehat k(2))
-k_i(1+r_i)+  \frac{\delta}{1-\delta}\widetilde Z(\frac{\widehat k(2)}{2},0).\label{IC2}\end{eqnarray}}

Next suppose there is project failure.  The incentive constraints such that the borrower wants to repay all the MFIs, rather than only  $x$  of them,  $0\le x\le 1$, defaulting on the others, \emph{including}   M2 are:
\begin{eqnarray}  
\hskip 0em -mk_2(1+r_2)-m\frac{1}{2}  \widehat k(2)+\frac{\delta}{1-\delta}\widetilde Z(\widehat k(2),0)\ge    
-m \frac{x}{2}  \widehat k(2)+\frac{\delta}{1-\delta}\widetilde Z(\frac{x\widehat k(2)}{2},0),\forall x\le 1.\label{IC}\end{eqnarray}    
Next  consider  the incentive constraint ensuring that the borrower repays all the MFIs, rather than only  M2 in case of project failure:
 \begin{eqnarray}
 -mk_2(1+r_2)-m\frac{1}{2}  \widehat k(2)+ \frac{\delta}{1-\delta}\widetilde  Z(\widehat k(2),0)\ge    
-mk_2(1+r_2)+  \frac{\delta}{1-\delta}\widetilde Z(\frac{\widehat k(2)}{2},0).\label{IC'''}\end{eqnarray}

Given that $m>1$, \eqref{IC1} holds whenever \eqref{IC} holds, and  \eqref{IC2} holds whenever \eqref{IC'''} holds.


\smallskip

 Thus M2's  problem is to maximize (\ref{objno}), subject to (\ref{IC}), (\ref{IC'''}),   and $r_2\ge 0$.\smallskip

\emph{Step 3}. $r_2=0$. \smallskip

 Note that a reduction in $r_2$     increases the maximand  \eqref{objno}, relaxes   (\ref{IC}), while it has no effect on (\ref{IC'''}). Thus in equilibrium  $r_2=0$. \smallskip
 
\emph{ Step 4}. Given that $r_2=0$, (\ref{IC})  can be re-arranged as follows,
{\small \begin{equation} \frac{\delta}{1-\delta}  \frac{\left[\widetilde Z(\widehat k(2), 0)-\widetilde Z(\frac{x\widehat k(2)}{2},0)\right]}    {[k_2+\frac{1}{2}  \widehat k(2)]-\frac{x  \widehat k(2)}{2}} \ge  m,\;\;\forall 0\le x\le 1,\label{DICshock'}
 \end{equation}}
 while (\ref{IC'''})  can be re-arranged as follows:
{\small \begin{equation}  \frac{\delta}{1-\delta}  \frac{\left[\widetilde Z(\widehat k(2), 0)-\widetilde Z(\frac{\widehat k(2)}{2},0)\right]}     {  \widehat k(2)-\frac{ \widehat k(2)}{2}} \ge  m.\label{DICshocki'}
 \end{equation}}

\emph{Step 5}. We proceed by ignoring   (\ref{DICshocki'}), and solving the resultant relaxed problem, i.e. maximize (\ref{objno}), subject to \eqref{DICshock'}.  We shall later show that the solution to the relaxed problem solves (\ref{DICshocki'}).

 First suppose  that   $\overline k\le \underline k(2,\delta).$  From the concavity of $\widetilde Z(k,0)$, the LHS of \eqref{DICshock'} is decreasing in $k$, and hence   $\overline k$ satisfies  \eqref{DICshock'}.  Further, given that     $\overline k\le \underline k(2,\delta)$, by definition $\widehat k(2)=\overline k$, so that M1 offers $(\frac{\overline k}{2},0)$. Thus   $k_2=\frac{\overline k}{2}$ maximizes \eqref{objno}, and    it satisfies  \eqref{DICshock'}.  Further, given that $k_2=\frac{\overline k}{2}$,   for $x=1$   \eqref{DICshock'}  coincides with \eqref{DICshocki'}, so that \eqref{DICshocki'} is also satisfied.
 
 Next suppose  that $\overline k>\underline k(2,\delta).$  Then, by definition,  $\widehat k(2)=\underline k(2,\delta)$, so that M1 offers $(\frac{\underline k(2,\delta)}{2},0)$.   Thus, any $k_2>\frac{\underline k(2,\delta)}{2}$ violates    \eqref{DICshock'} for $x=1$.   Thus given that the maximand is increasing in $k$ for $k\le \overline k$, optimally  $k_2=\frac{\underline k(2,\delta)}{2}$.  This satisfies \eqref{DICshock'} for $x=1$, and thus \eqref{DICshocki'}.  Finally, given that  \eqref{DICshock'} is satisfied  for $x=1$, from Step 1 earlier,  \eqref{DICshock'} is satisfied  for $x=0.$  \smallskip
  
\emph{  Step 6}. We next  show that there exists $0<\widetilde\delta(2) <1$ such that, for $\delta=\widetilde \delta(2)$, 
$$\frac{\delta}{1-\delta}\frac{\widetilde Z(\overline k, 0)-\widetilde Z(\frac{\overline k}{2})}{\overline k-\frac{\overline k}{2}}=m.$$

This follows since $\lim_{\delta\rightarrow 0}\frac{\delta}{1-\delta}\frac{\widetilde Z(\overline k, 0)-\widetilde Z(\frac{\overline k}{2})}{\overline k-\frac{\overline k}{2}}=0<m$,  $\lim_{\delta\rightarrow 1}\frac{\delta}{1-\delta}\frac{\widetilde Z(\overline k, 0)-\widetilde Z(\frac{\overline k}{2})}{\overline k-\frac{\overline k}{2}}=\infty, $ and $\frac{\delta}{1-\delta}\frac{\widetilde Z(\overline k, 0)-\widetilde Z(\frac{\overline k}{2})}{\overline k-\frac{\overline k}{2}}$ is increasing in $\delta$. \medskip
 
\emph{Uniqueness}: We next argue that no other symmetric SSPE exists.  In any symmetric SSPE, it must be that $\hat r(2)=0$. Otherwise, either MFI can lower the rate of interest. This would increase borrower utility, and relax the no default conditions.  Next, suppose   there exists a symmetric SSPE where the aggregate loan amount is $k'$, where $k'<\widehat k(2)$, and both MFIs offer $k'/2$. Clearly, the no default condition is slack at $k'$. But then M1, say, can offer a loan package $(k'/2+\epsilon,0)$. For $\epsilon$ small, the no default conditions are satisfied. Further, borrower utility is higher as $k'<\overline k$.

3. The result follows since, by definition, $\widehat k(2)\le\min\{\overline k,\underline k(2,\delta)\}\le  \overline k<k^*$.\qed\vskip 1em

   \emph{Proof of Proposition \ref{Compp}}. If $\delta\le \widehat \delta$, then aggregate MFI loan is zero in both cases. Next suppose that $\delta> \widehat \delta$. Given that $Z(k,0)$ is concave, we have that $\frac{Z(k,0)}{k}>\frac{Z(k,0)-Z(\frac{k}{2},0)}{k-\frac{k}{2}}$.  This ensures that (a) $\widetilde \delta(1)<\widetilde\delta (2)$, and   (b)  $\underline k(1,\delta)>\underline k(2,\delta)$. Hence, for $\delta\ge \widetilde\delta(2)$, $\widehat k(1)=\widehat k(2)=\overline k$. For $\widetilde\delta (1)\le \delta< \widetilde\delta(2)$,  $\widehat k(1)=\overline k>\underline k(2,\delta)=\widehat k(1)$. Finally, for  $ \delta< \widetilde\delta(1)$,  $\widehat k(1)=\underline k(1,\delta)>\underline k(2,\delta)=\widehat k(1)$, whenever $\underline k(1,\delta)>0$.  
   \qed\vskip 1em

  \emph{Proof of Proposition \ref{Mbl}}. 1. and 2.  \emph{Existence}: The proof proceeds via several steps. \smallskip

\emph{ Step 1}. Consider the no default condition in case of project success:
\begin{equation}
ICSuccess: f(k)-(1+r_1)k+\frac{\delta}{1-\delta} Z^*(k,r_1,r_2) \ge f(k) +(1+r_1)k \label{F1}.\end{equation} If  \eqref{F2} holds, then so does \eqref{F1}. So, while considering the optimisation problem facing M1 (resp. M2)  it is sufficient to consider constraints \eqref{F2}, \eqref{F3} and that $r_1\ge 0$ (resp. $r_2\ge 0$).\smallskip

\emph{Step 2}. We next argue that optimally $r_1$ and $r_2$ both equal zero. This follows since  a reduction in both $r_1$ and $r_2$ increases the maximand,  and relaxes \eqref{F2}. 
How does a reduction in $r_1$ or $r_2$ affect \eqref{F3}? Note that  \eqref{F3} can be re-written as 
\begin{eqnarray}
\frac{\delta}{1-\delta} \frac{(1-p)[m-(1+r_1)]}{1+r_1}\ge 1.\end{eqnarray}
Thus a decrease in $r_1$ relaxes \eqref{F3}, whereas a decrease in $r_2$ does not affect \eqref{F3}.   Thus optimally $r_1=r_2=0$. \smallskip
 
\emph{ Step 3}.  Given \emph{Step 2},  the optimisation problem simplifies to the following:
\begin{subequations}\label{Coop}\begin{align}
&\max_{k}\;  Z^*(k,0,0),\\
&s.t.\nonumber\\
&\frac{\delta}{1-\delta} \frac{Z^*(k,0,0)}{k} \ge 2,\label{F2'''}\\
 & \frac{\delta}{1-\delta} (1-p)(m-1)\ge 1.\label{F3''}
\end{align}
\end{subequations}

\emph{Step 4}. Note that if \eqref{F3''} fails, i.e. 
$$\delta <\delta^0\equiv  \frac{1}{1+(1-p)(m-1)},$$   then the optimisation problem does not have a solution.\smallskip

 \emph{ Step 5}.   Let $\widetilde \delta^*(2)$ solve
\begin{equation}\frac{\delta}{1-\delta} \frac{Z^*(k^*,0,0)}{k^*} = 2.\end{equation} 
$\widetilde \delta^*(2)$ is well defined  since $\lim_{\delta\rightarrow 0}\frac{\delta}{1-\delta} =0<$,  $\lim_{\delta\rightarrow 1}\frac{\delta}{1-\delta}=\infty, $ and $\frac{\delta}{1-\delta}$ is increasing in $\delta$. 
  \smallskip

\emph{Step 6}.  Given \emph{Step 4,}  assume that \eqref{F3''} holds, i.e. $\delta \ge\delta^0$. 
 
(i)  In case $\delta\ge \widetilde \delta^*(2)$, the first best level of investment $k^*$ satisfies  the no default condition  \eqref{F2'''}. This follows since $ \widetilde \delta(2)^*$ satisfies   $\frac{\delta}{1-\delta} \frac{Z^*(k,0,0)}{k} = 2$, and $\frac{\delta}{1-\delta}$ is increasing in $\delta$.      Hence    the optimal level of investment  equals $k^*$.\smallskip

(ii)  Next suppose that $k^*$ violates  \eqref{F2'''}.  

First, if $\delta\le \widehat\delta^*$, then   $\frac{\delta}{1-\delta} Z^*(0,0,0) \le  2$.  This follows since $ \widehat\delta^*$ satisfies   $\frac{\delta}{1-\delta} Z^*(0,0,0) =  2$, and $\frac{\delta}{1-\delta}$ is increasing in $\delta$. From the concavity of $Z^*(k,0,0)$,  $\frac{Z^*(k,0,0)}{k}$ is decreasing in $k$. Thus  \eqref{F2'''} fails for all $k$ positive.   

 Next  if $\delta> \widehat\delta^*$, then   $\frac{\delta}{1-\delta} Z^*(0,0,0) >  2$.
    Given that $Z^*(k,0,0)$ is concave, and is maximised at $k^*$, $Z(k,0)$ is increasing in $k$ for all $k<k^*$. Moreover, from the concavity of $Z(k,0)$, $\frac{Z(k,0)}{k}$ is decreasing in $k$.  Hence the  optimal solution  involves the maximal $k$ such that $\frac{\delta}{1-\delta}\frac{Z^*(k,0,0)}{k}\ge  2$. Hence the optimal investment  level $\widehat k^*(2)=k^*$  if $k^*$ satisfies \eqref{F2'''}, and $\widehat k^*(2)=\underline k^*(2,\delta)$ otherwise. \medskip
    
\emph{Uniqueness}: We next argue that no other asymmetric SSPE exists.  In any ssymmetric SSPE, it must be that $\hat r^*_1(2)=\widehat r^*_2(2) =0$. Otherwise, either MFI can lower the rate of interest. This would increase borrower utility, and relax the no default conditions.  Next, suppose   there exists an asymmetric SSPE where the aggregate loan amount is $k'$, where $k'<\widehat k^*(2)$. Clearly, the no default condition is slack at $k'$. But then M1, say, can offer a loan package $(k'+\epsilon,0)$. For $\epsilon$ small, the no default conditions are satisfied. Further, borrower utility is higher as $k'<k^*$.

3. The result follows since, by definition, $\widehat k^*(2) \le\min\{k^*,\underline k^*(2,\delta)\} \le  k^*$.\qed\vskip 1em

\emph{Proof of Proposition \ref{Mbl:comp}.}  Consider productive loans first.  If  $\widehat {k}^*(2)=k^*$, then $\widehat {k}^*(2) =k^*>\overline k\ge \widehat k(1)$, so that the claim follows. Next suppose that  $\widehat {k}^*(2) =\underline k^*(2,\delta)$.  It is sufficient to show that $\underline k^*(2,\delta)> \underline k(1,\delta)$.   Given   $\underline k(1,\delta)$ is positive, and solves  \eqref{D2''}:
\begin{equation}
\frac{f(k)}{k}=\frac{m(1-\delta)}{p\delta}+\frac{p+(1-p)m}{p}.\end{equation}
From \eqref{F2'''},   $\underline k^*(2,\delta)$ solves 
\begin{equation}
\frac{f(k)}{k}=\frac{2(1-\delta)}{p\delta}+\frac{1}{p}.\end{equation}
Next, given that $m>2$, we have that $\frac{m(1-\delta)}{p\delta}+\frac{p+(1-p)m}{p}>\frac{2(1-\delta)}{p\delta}+\frac{1}{p}$, which in turn implies that $\frac{f(\underline k(1,\delta))}{\underline k(1,\delta)}>\frac{f(\underline  k^*(2))}{\underline  k^*(2).}$ Given that $f(k,0)$ is concave in $k$, $\underline k(1,\delta)<\underline k^*(2,\delta)$. 

Finally, while moneylender loans are strictly positive when there is a single MFI,  under the asymmetric SSPE the borrower takes  no moneylender loans. \qed\vskip 1em

\clearpage

\section{Empirics}\label{appsec:empiricalappendix}

\subsection*{Construction of Control Variables}\label{app:construction}

Baseline controls are included to improve estimation precision and to adjust for potential chance imbalances across groups, as recommended even in randomized evaluations \citep{bruhn2009pursuit}.

The household head's \textit{educational attainment} is measured in completed years of schooling.

Household \textit{landholdings} are measured in decimals, where one decimal equals one-hundredth of an acre ($\approx$ 40.5 m$^{2}$).

The \textit{wealth index} is based on the first principal component from a polychoric principal component analysis (PCA), following \citep{kolenikov2009socioeconomic}, to accommodate the mix of binary indicators and asset counts. Inputs include counts of household assets (e.g., radios, TVs, sewing machines, rickshaws, bicycles, tubewells, fishing nets, ornaments, furniture, and mobile phones); livestock counts (poultry, sheep, goats, cows, and oxen); housing materials (wall and roof type); and homeownership status. The index is standardized to have mean zero and unit variance in the baseline sample. Household \textit{landholdings} in decimals are included as a separate regressor.

The \textit{household composition equivalence scale} follows the functional form 
$(\#\text{ adults} + \beta_1 \times \#\text{ children})^{\beta_2}$, 
as proposed by \citet{yatchew2003}, with parameters calibrated for Bangladesh by \citet{hasan2015}: 
$\beta_1 = 0.74$ and $\beta_2 = 0.17$. This base-independent summary measure of household composition 
is included to capture variation in household labor supply and consumption needs that may correlate with borrowing behavior.

The \textit{savings mechanism} indicator equals 1 if the household reports saving with a bank, NGO program, or savings group, and 0 if it either reports not to save or to only save at home.

The \textit{occupation} of the household head is classified into five categories: agricultural self-employment, agricultural wage labor, non-agricultural self-employment, non-agricultural wage/salaried labor, and not economically active. To capture diversification and income risk, the household’s \textit{number of income sources} is the count of distinct income-generating activities held by any member at baseline (either as their primary or secondary income source).

The \textit{shock environment}, which may affect demand for emergency credit, is measured at both the household and village level. At the household level, we include the number of shocks experienced in the year preceding baseline, across 19 types recorded in the survey (e.g., death of a main earner, illness expenses, crop disease, flood, house loss). To complement the household-specific realizations, we construct a community-level shock index by calculating the proportions of households experiencing each type of shock and extracting the first principal component from these proportions using PCA.

\subsection*{Power simulations}\label{app:power}

To assess whether the absence of statistically significant effects of MFI competition on MFI borrowing could be explained by limited statistical power, we conduct Monte Carlo power simulations calibrated to the empirical setting and estimation strategy used in the paper.

Simulated outcomes are generated from an exponential mean model consistent with the IV Poisson specification employed in the main analysis. Specifically, for each household $i$ in village $j$,
\[
E(y_{ij}) = \exp\!\left(\ln(\overline y_0) + \beta\,(MFI_j - \overline{\mathrm{MFI}}_0)\right),
\]
where $\overline y_0$ denotes the mean outcome in control villages at endline and $\overline{\mathrm{MFI}}_0$ is the corresponding control-group mean number of MFIs. Outcomes are drawn from a Poisson distribution with this conditional mean.

For each candidate effect size, we re-estimate the IV Poisson model using the same instrument, controls, and village-level clustering of standard errors as in the main specification. Power is computed as the share of 1{,}000 replications in which the null hypothesis of no effect is rejected at the 5 percent level.

The simulations indicate that the design has 80\% power to detect a reduction in the number of MFI loans of around 11.6\%, or 0.08 loans.

\subsection*{Appendix Tables and Figures}\label{app:tablesfigures}

\begin{figure}[!htbp]
    \centering
    \includegraphics[width=\textwidth]{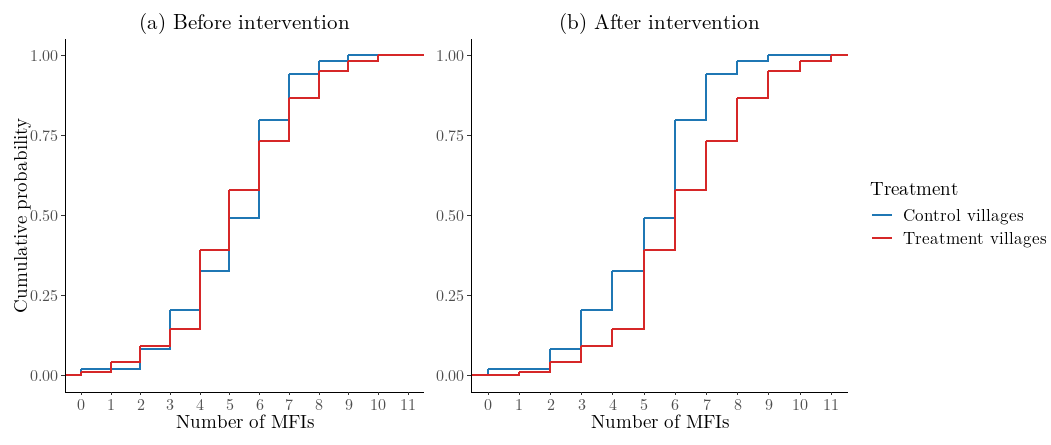}
    \caption{
The distribution of the number of MFIs at baseline (prior to the randomized intervention, panel (a)) and after the randomized entry of RDRS in the treatment villages (panel (b)).}
    \label{fig:mfi_cdf}
\end{figure}

\begin{table}[!htbp]
\vspace{-1cm}
\caption{Baseline summary statistics and balance tests, on the full sample\label{tab:summarystats1}}
  \centering
    \vspace{-.35cm}
  \begin{adjustbox}{max width=\textwidth}
  \begin{threeparttable}
  {\renewcommand{\arraystretch}{0.93}
\begin{tabular}{@{}lllllllll@{}}
\toprule
& \multicolumn{4}{c}{Full baseline sample (n=2903)} & & \multicolumn{3}{c}{Subsample re-interviewed}\\
& \multicolumn{4}{c}{} & & \multicolumn{3}{c}{at endline (n=2735)}\\
\cmidrule(l){2-5} \cmidrule(l){7-9}
\begin{tabular}[c]{@{}l@{}} \\ \\ \end{tabular}     & \begin{tabular}[c]{@{}l@{}}Full\\sample\\ (n=2903)\end{tabular} & \begin{tabular}[c]{@{}l@{}}Control\\ villages\\ (n=975)\end{tabular} & \begin{tabular}[c]{@{}l@{}}Treatment\\ villages\\ (n=1928)\end{tabular} & \begin{tabular}[c]{@{}l@{}}Treated -\\ Controls\\(p-value)\end{tabular} & & \begin{tabular}[c]{@{}l@{}}Control\\ villages\\ (n=914)\end{tabular} & \begin{tabular}[c]{@{}l@{}}Treatment\\ villages\\ (n=1921)\end{tabular} & \begin{tabular}[c]{@{}l@{}}Treated -\\ Controls\\(p-value)\end{tabular} \\
\cmidrule(l){2-9}
\vspace{-2em}\\
\textit{Panel A: Baseline outcomes} & & & &  \\
\vspace{-1.5em}\\
\# of MFI loans             & \begin{tabular}[c]{@{}l@{}}0.57\\ (0.71)\end{tabular} & \begin{tabular}[c]{@{}l@{}}0.54\\(0.68)\end{tabular} & \begin{tabular}[c]{@{}l@{}}0.58\\(0.72)\end{tabular} & 0.564 && \begin{tabular}[c]{@{}l@{}}0.78\\ (1.45)\end{tabular} & \begin{tabular}[c]{@{}l@{}}0.88\\ (1.58)\end{tabular} & 0.348 \\
\vspace{-0.95em}\\
\# of non-MFI loans             & \begin{tabular}[c]{@{}l@{}}0.92\\(1.07)\end{tabular} & \begin{tabular}[c]{@{}l@{}}0.92\\(1.06)\end{tabular} & \begin{tabular}[c]{@{}l@{}}0.92\\(1.07)\end{tabular} & 0.952  && \begin{tabular}[c]{@{}l@{}}0.92\\ (1.07)\end{tabular} & \begin{tabular}[c]{@{}l@{}}0.92\\ (1.07)\end{tabular} & 0.982 \\
\vspace{-0.95em}\\
\# of moneylender loans             & \begin{tabular}[c]{@{}l@{}}0.20\\(0.56)\end{tabular} & \begin{tabular}[c]{@{}l@{}}0.23\\(0.60)\end{tabular} & \begin{tabular}[c]{@{}l@{}}0.18\\(0.53)\end{tabular} & 0.155  && \begin{tabular}[c]{@{}l@{}}0.23\\ (0.61)\end{tabular} & \begin{tabular}[c]{@{}l@{}}0.19\\ (0.54)\end{tabular} & 0.173 \\
\vspace{-0.95em}\\
\# of loans from acquaintances       & \begin{tabular}[c]{@{}l@{}}0.62\\(0.90)\end{tabular} & \begin{tabular}[c]{@{}l@{}}0.61\\(0.90)\end{tabular} & \begin{tabular}[c]{@{}l@{}}0.62\\(0.90)\end{tabular} & 0.846 && \begin{tabular}[c]{@{}l@{}}0.60\\ (0.89)\end{tabular} & \begin{tabular}[c]{@{}l@{}}0.62\\ (0.90)\end{tabular} & 0.718 \\
\vspace{-0.95em}\\
\# of (any) loans             & \begin{tabular}[c]{@{}l@{}}1.49\\(1.26)\end{tabular} & \begin{tabular}[c]{@{}l@{}}1.47\\(1.23)\end{tabular} & \begin{tabular}[c]{@{}l@{}}1.50\\(1.27)\end{tabular} & 0.727  && \begin{tabular}[c]{@{}l@{}}1.70\\ (1.75)\end{tabular} & \begin{tabular}[c]{@{}l@{}}1.80\\ (1.90)\end{tabular} & 0.372 \\
\vspace{-0.95em}\\
\begin{tabular}[c]{@{}l@{}}Amount borrowed from\\MFIs (1000 Tk)\end{tabular}   &\begin{tabular}[c]{@{}l@{}}6.27\\(9.32)\end{tabular} & \begin{tabular}[c]{@{}l@{}}6.21\\(9.53)\end{tabular} & \begin{tabular}[c]{@{}l@{}}6.29\\(9.21)\end{tabular} & 0.917  && \begin{tabular}[c]{@{}l@{}}9.04\\ (19.44)\end{tabular} & \begin{tabular}[c]{@{}l@{}}9.65\\ (19.04)\end{tabular} & 0.667 \\  
\vspace{-0.95em}\\
\begin{tabular}[c]{@{}l@{}}Amount borrowed from\\non-MFIs\end{tabular}   &\begin{tabular}[c]{@{}l@{}}3.57\\(7.93)\end{tabular} & \begin{tabular}[c]{@{}l@{}}3.41\\(7.27)\end{tabular} & \begin{tabular}[c]{@{}l@{}}3.65\\(8.25)\end{tabular} & 0.526   && \begin{tabular}[c]{@{}l@{}}3.48\\ (7.42)\end{tabular} & \begin{tabular}[c]{@{}l@{}}3.73\\ (8.42)\end{tabular} & 0.546 \\  
\vspace{-0.95em}\\
\begin{tabular}[c]{@{}l@{}}Amount borrowed from\\moneylenders\end{tabular}   &\begin{tabular}[c]{@{}l@{}}0.81\\(3.63)\end{tabular} & \begin{tabular}[c]{@{}l@{}}0.91\\(3.59)\end{tabular} & \begin{tabular}[c]{@{}l@{}}0.77\\(3.64)\end{tabular} & 0.447 && \begin{tabular}[c]{@{}l@{}}0.92\\ (3.61)\end{tabular} & \begin{tabular}[c]{@{}l@{}}0.80\\ (3.73)\end{tabular} & 0.502 \\  
\vspace{-0.95em}\\
\begin{tabular}[c]{@{}l@{}}Amount borrowed from\\acquintances\end{tabular}   &\begin{tabular}[c]{@{}l@{}}2.33\\(6.63)\end{tabular} & \begin{tabular}[c]{@{}l@{}}2.19\\(6.13)\end{tabular} & \begin{tabular}[c]{@{}l@{}}2.45\\(6.88)\end{tabular} & 0.491   && \begin{tabular}[c]{@{}l@{}}2.23\\ (6.29)\end{tabular} & \begin{tabular}[c]{@{}l@{}}2.44\\ (7.02)\end{tabular} & 0.494 \\  
\vspace{-0.95em}\\
\begin{tabular}[c]{@{}l@{}}Total amount borrowed\\(any source)\end{tabular}   &\begin{tabular}[c]{@{}l@{}}9.84\\(12.46)\end{tabular} & \begin{tabular}[c]{@{}l@{}}9.62\\(12.44)\end{tabular} & \begin{tabular}[c]{@{}l@{}}9.95\\(12.47)\end{tabular} & 0.693 && \begin{tabular}[c]{@{}l@{}}12.52\\ (21.13)\end{tabular} & \begin{tabular}[c]{@{}l@{}}13.38\\ (20.89)\end{tabular} & 0.550 \\    
\vspace{-1em}\\
\textit{Panel B: Control variables} & & & &  \\
\vspace{-1.4em}\\
Female head of HH          & \begin{tabular}[c]{@{}l@{}}0.09\\ (0.29)\end{tabular}   & \begin{tabular}[c]{@{}l@{}}0.07\\ (0.26)\end{tabular}   & \begin{tabular}[c]{@{}l@{}}0.10\\ (0.30)\end{tabular}   & 0.038 && \begin{tabular}[c]{@{}l@{}}0.07\\ (0.25)\end{tabular} & \begin{tabular}[c]{@{}l@{}}0.10\\ (0.30)\end{tabular} & 0.020 \\    
\vspace{-0.95em}\\
Head of HH's age          & \begin{tabular}[c]{@{}l@{}}39.72\\ (10.91)\end{tabular}   & \begin{tabular}[c]{@{}l@{}}39.47\\ (10.98)\end{tabular}   & \begin{tabular}[c]{@{}l@{}}39.85\\ (10.88)\end{tabular}   & 0.454 && \begin{tabular}[c]{@{}l@{}}39.66\\ (11.04)\end{tabular} & \begin{tabular}[c]{@{}l@{}}40.11\\ (10.8)\end{tabular} & 0.377 \\    
\vspace{-0.95em}\\
Head of HH's education (years)          & \begin{tabular}[c]{@{}l@{}}2.16\\ (3.25)\end{tabular}   & \begin{tabular}[c]{@{}l@{}}2.12\\ (3.24)\end{tabular}   & \begin{tabular}[c]{@{}l@{}}2.18\\ (3.25)\end{tabular}   & 0.713  && \begin{tabular}[c]{@{}l@{}}2.15\\ (3.25)\end{tabular} & \begin{tabular}[c]{@{}l@{}}2.18\\ (3.28)\end{tabular} & 0.814 \\    
\vspace{-0.95em}\\
Literate household                  & \begin{tabular}[c]{@{}l@{}}0.86\\ (0.34)\end{tabular}   & \begin{tabular}[c]{@{}l@{}}0.86\\ (0.35)\end{tabular}   & \begin{tabular}[c]{@{}l@{}}0.87\\ (0.34)\end{tabular}   & 0.482  && \begin{tabular}[c]{@{}l@{}}2.15\\ (3.25)\end{tabular} & \begin{tabular}[c]{@{}l@{}}2.18\\ (3.28)\end{tabular} & 0.814 \\    
\vspace{-0.95em}\\
Muslim household       & \begin{tabular}[c]{@{}l@{}}0.78\\ (0.42)\end{tabular}   & \begin{tabular}[c]{@{}l@{}}0.80\\ (0.40)\end{tabular}   & \begin{tabular}[c]{@{}l@{}}0.77\\ (0.42)\end{tabular}   & 0.554   && \begin{tabular}[c]{@{}l@{}}0.79\\ (0.41)\end{tabular} & \begin{tabular}[c]{@{}l@{}}0.76\\ (0.43)\end{tabular} & 0.525 \\    
\vspace{-0.95em}\\
Household size             & \begin{tabular}[c]{@{}l@{}}4.13\\ (1.33)\end{tabular}   & \begin{tabular}[c]{@{}l@{}}4.16\\ (1.31)\end{tabular}   & \begin{tabular}[c]{@{}l@{}}4.12\\ (1.35)\end{tabular}   & 0.548  && \begin{tabular}[c]{@{}l@{}}4.18\\ (1.33)\end{tabular} & \begin{tabular}[c]{@{}l@{}}4.15\\ (1.35)\end{tabular} & 0.577 \\    
\vspace{-0.95em}\\
HH composition equiv. scale             & \begin{tabular}[c]{@{}l@{}}1.39\\ (0.08)\end{tabular}   & \begin{tabular}[c]{@{}l@{}}4.16\\ (1.31)\end{tabular}   & \begin{tabular}[c]{@{}l@{}}4.12\\ (1.35)\end{tabular}   & 0.548  && \begin{tabular}[c]{@{}l@{}}4.18\\ (1.33)\end{tabular} & \begin{tabular}[c]{@{}l@{}}4.15\\ (1.35)\end{tabular} & 0.539 \\    
\vspace{-0.95em}\\
Landholdings (in decimals)              & \begin{tabular}[c]{@{}l@{}}13.48\\ (22.67)\end{tabular} & \begin{tabular}[c]{@{}l@{}}13.97\\ (22.18)\end{tabular} & \begin{tabular}[c]{@{}l@{}}13.23\\ (22.90)\end{tabular} & 0.566  && \begin{tabular}[c]{@{}l@{}}14.42\\ (22.64)\end{tabular} & \begin{tabular}[c]{@{}l@{}}13.65\\ (23.16)\end{tabular} & 0.552 \\    
\vspace{-0.95em}\\
HH's wealth index               & \begin{tabular}[c]{@{}l@{}}0.00\\ (1.00)\end{tabular}   & \begin{tabular}[c]{@{}l@{}}0.04\\ (1.64)\end{tabular}  & \begin{tabular}[c]{@{}l@{}}0.02\\ (1.74)\end{tabular}   & 0.868 && \begin{tabular}[c]{@{}l@{}}0.04\\ (0.96)\end{tabular} & \begin{tabular}[c]{@{}l@{}}0.02\\ (1.03)\end{tabular} & 0.853 \\    
\vspace{-0.95em}\\
HH uses a savings mechanism          & \begin{tabular}[c]{@{}l@{}}0.15\\ (0.36)\end{tabular}   & \begin{tabular}[c]{@{}l@{}}0.16\\ (0.37)\end{tabular}   & \begin{tabular}[c]{@{}l@{}}0.15\\ (0.35)\end{tabular}   & 0.401  && \begin{tabular}[c]{@{}l@{}}0.16\\ (0.37)\end{tabular} & \begin{tabular}[c]{@{}l@{}}0.15\\ (0.35)\end{tabular} & 0.325 \\    

\vspace{-0.95em}\\
HH's \# of income sources         & \begin{tabular}[c]{@{}l@{}}0.69\\ (0.89)\end{tabular}   & \begin{tabular}[c]{@{}l@{}}0.73\\ (0.92)\end{tabular}   & \begin{tabular}[c]{@{}l@{}}0.67\\ (0.87)\end{tabular}   & 0.226  && \begin{tabular}[c]{@{}l@{}}0.67\\ (0.87)\end{tabular} & \begin{tabular}[c]{@{}l@{}}0.73\\ (0.93)\end{tabular} & 0.348 \\    
\vspace{-0.95em}\\
HH's \# of shocks experienced         & \begin{tabular}[c]{@{}l@{}}0.69\\ (0.89)\end{tabular}   & \begin{tabular}[c]{@{}l@{}}4.16\\ (1.31)\end{tabular}   & \begin{tabular}[c]{@{}l@{}}4.12\\ (1.35)\end{tabular}   & 0.367  && \begin{tabular}[c]{@{}l@{}}3.62\\ (0.81)\end{tabular} & \begin{tabular}[c]{@{}l@{}}3.57\\ (0.84)\end{tabular} & 0.390 \\    
\vspace{-0.95em}\\
Village shock exposure index         & \begin{tabular}[c]{@{}l@{}}0\\ (1)\end{tabular}   & \begin{tabular}[c]{@{}l@{}}0.06\\ (0.97)\end{tabular}   & \begin{tabular}[c]{@{}l@{}}-0.03\\ (1.01)\end{tabular}   & 0.590  && \begin{tabular}[c]{@{}l@{}}0.07\\ (0.98)\end{tabular} & \begin{tabular}[c]{@{}l@{}}-0.04\\ (1.02)\end{tabular} & 0.556 \\    

\vspace{-1em}\\
\textit{Panel C: Attrition} & & & &  \\
\vspace{-1.5em}\\
Re-interviewed at endline        & \begin{tabular}[c]{@{}l@{}}0.94\\ (0.23)\end{tabular}   & \begin{tabular}[c]{@{}l@{}}0.94\\ (0.24)\end{tabular}   & \begin{tabular}[c]{@{}l@{}}0.94\\ (0.23)\end{tabular}   & 0.551  && & & \\
\vspace{-1em}\\
\textit{Panel D: Omnibus test} & & & &  && & & \\
\vspace{-0.9em}\\
Joint orthogonality F-test (p-value)        &   &   &  & 0.941 && & & 0.826  \\
\bottomrule
\end{tabular}
}
\begin{tablenotes}
\item \emph{Notes:} Baseline means and standard deviations are shown for treatment and control villages. P-values are based on regressions of the treatment indicator on the variable in the left-most column (or, for the omnibus test in Panel D, on all baseline covariates, the total number of loans, the total amount borrowed, and the re-interview indicator), with standard errors clustered at the village level.
\end{tablenotes}
\end{threeparttable}
\end{adjustbox}
\end{table}

\begin{table}[!htbp]
\vspace{-1.35cm}
\caption{Baseline summary statistics and balance tests, on the subsample that at baseline expressed an interest in borrowing\label{tab:summarystats2}}
  \centering
    \vspace{-.35cm}
  \begin{adjustbox}{max width=1.015\textwidth}
  \begin{threeparttable}
  {\renewcommand{\arraystretch}{0.93}
\begin{tabular}{@{}lllllllll@{}}
\toprule
& \multicolumn{4}{c}{Baseline (interested in borrrowing) sample} & & \multicolumn{3}{c}{Subsample re-interviewed}\\
& \multicolumn{4}{c}{(n=1450)} & & \multicolumn{3}{c}{at endline (n=1374)}\\
\cmidrule(l){2-5} \cmidrule(l){7-9}
\begin{tabular}[c]{@{}l@{}} \\ \\ \end{tabular}     & \begin{tabular}[c]{@{}l@{}}Full\\sample\\ (n=1450)\end{tabular} & \begin{tabular}[c]{@{}l@{}}Control\\ villages\\ (n=487)\end{tabular} & \begin{tabular}[c]{@{}l@{}}Treatment\\ villages\\ (n=963)\end{tabular} & \begin{tabular}[c]{@{}l@{}}Treated -\\ Controls\\(p-value)\end{tabular} & & \begin{tabular}[c]{@{}l@{}}Control\\ villages\\ (n=460)\end{tabular} & \begin{tabular}[c]{@{}l@{}}Treatment\\ villages\\ (n=914)\end{tabular} & \begin{tabular}[c]{@{}l@{}}Treated -\\ Controls\\(p-value)\end{tabular} \\
\cmidrule(l){2-9}
\vspace{-2em}\\
\textit{Panel A: Baseline outcomes} & & & &  \\
\vspace{-1.5em}\\
\# of MFI loans             & \begin{tabular}[c]{@{}l@{}}0.56\\ (0.68)\end{tabular} & \begin{tabular}[c]{@{}l@{}}0.55\\(0.66)\end{tabular} & \begin{tabular}[c]{@{}l@{}}0.56\\(0.69)\end{tabular} & 0.837 && \begin{tabular}[c]{@{}l@{}}0.56\\ (0.66)\end{tabular} & \begin{tabular}[c]{@{}l@{}}0.58\\ (0.70)\end{tabular} & 0.746 \\
\vspace{-0.95em}\\
\# of non-MFI loans              & \begin{tabular}[c]{@{}l@{}}0.95\\(1.08)\end{tabular} & \begin{tabular}[c]{@{}l@{}}1.03\\(1.10)\end{tabular} & \begin{tabular}[c]{@{}l@{}}0.91\\(1.06)\end{tabular} & 0.172  && \begin{tabular}[c]{@{}l@{}}1.02\\ (1.10)\end{tabular} & \begin{tabular}[c]{@{}l@{}}0.91\\ (1.07)\end{tabular} & 0.203 \\
\vspace{-0.95em}\\
\# of moneylender loans             & \begin{tabular}[c]{@{}l@{}}0.22\\(0.60)\end{tabular} & \begin{tabular}[c]{@{}l@{}}0.26\\(0.65)\end{tabular} & \begin{tabular}[c]{@{}l@{}}0.20\\(0.57)\end{tabular} & 0.160  && \begin{tabular}[c]{@{}l@{}}0.26\\ (0.66)\end{tabular} & \begin{tabular}[c]{@{}l@{}}0.21\\ (0.58)\end{tabular} & 0.200 \\
\vspace{-0.95em}\\
\# of loans from acquaintances       & \begin{tabular}[c]{@{}l@{}}0.62\\(0.90)\end{tabular} & \begin{tabular}[c]{@{}l@{}}0.67\\(0.93)\end{tabular} & \begin{tabular}[c]{@{}l@{}}0.59\\(0.89)\end{tabular} & 0.230 && \begin{tabular}[c]{@{}l@{}}0.65\\ (0.92)\end{tabular} & \begin{tabular}[c]{@{}l@{}}0.58\\ (0.89)\end{tabular} & 0.290 \\
\vspace{-0.95em}\\
\# of (any) loans             & \begin{tabular}[c]{@{}l@{}}1.51\\(1.23)\end{tabular} & \begin{tabular}[c]{@{}l@{}}1.58\\(1.24)\end{tabular} & \begin{tabular}[c]{@{}l@{}}1.47\\(1.23)\end{tabular} & 0.204  && \begin{tabular}[c]{@{}l@{}}1.58\\ (1.25)\end{tabular} & \begin{tabular}[c]{@{}l@{}}1.49\\ (1.24)\end{tabular} & 0.277 \\
\vspace{-0.95em}\\
\begin{tabular}[c]{@{}l@{}}Amount borrowed from\\MFIs (1000 Tk)\end{tabular}   &\begin{tabular}[c]{@{}l@{}}6.01\\(8.72)\end{tabular} & \begin{tabular}[c]{@{}l@{}}8.34\\(16.11)\end{tabular} & \begin{tabular}[c]{@{}l@{}}5.96\\(8.59)\end{tabular} & 0.856  && \begin{tabular}[c]{@{}l@{}}6.14\\ (8.71)\end{tabular} & \begin{tabular}[c]{@{}l@{}}8.72\\ (16.56)\end{tabular} & 0.981 \\  
\vspace{-0.95em}\\
\begin{tabular}[c]{@{}l@{}}Amount borrowed from\\non-MFIs\end{tabular}   &\begin{tabular}[c]{@{}l@{}}3.66\\(8.26)\end{tabular} & \begin{tabular}[c]{@{}l@{}}6.10\\(8.99)\end{tabular} & \begin{tabular}[c]{@{}l@{}}3.55\\(8.29)\end{tabular} & 0.517   && \begin{tabular}[c]{@{}l@{}}4.01\\ (8.43)\end{tabular} & \begin{tabular}[c]{@{}l@{}}3.63\\ (8.46)\end{tabular} & 0.490 \\  
\vspace{-0.95em}\\
\begin{tabular}[c]{@{}l@{}}Amount borrowed from\\moneylenders\end{tabular}   &\begin{tabular}[c]{@{}l@{}}0.91\\(3.70)\end{tabular} & \begin{tabular}[c]{@{}l@{}}1.05\\(3.77)\end{tabular} & \begin{tabular}[c]{@{}l@{}}0.85\\(3.66)\end{tabular} & 0.441 && \begin{tabular}[c]{@{}l@{}}1.08\\ (3.86)\end{tabular} & \begin{tabular}[c]{@{}l@{}}0.88\\ (3.75)\end{tabular} & 0.464 \\  
\vspace{-0.95em}\\
\begin{tabular}[c]{@{}l@{}}Amount borrowed from\\acquaintances\end{tabular}   &\begin{tabular}[c]{@{}l@{}}2.30\\(7.05)\end{tabular} & \begin{tabular}[c]{@{}l@{}}2.43\\(7.11)\end{tabular} & \begin{tabular}[c]{@{}l@{}}2.24\\(7.03)\end{tabular} & 0.653   && \begin{tabular}[c]{@{}l@{}}2.49\\ (7.30)\end{tabular} & \begin{tabular}[c]{@{}l@{}}2.28\\ (7.17)\end{tabular} & 0.626 \\  
\vspace{-0.95em}\\
\begin{tabular}[c]{@{}l@{}}Total amount borrowed\\(any source)\end{tabular}   &\begin{tabular}[c]{@{}l@{}}9.67\\(12.14)\end{tabular} & \begin{tabular}[c]{@{}l@{}}9.99\\(12.60)\end{tabular} & \begin{tabular}[c]{@{}l@{}}9.51\\(11.90)\end{tabular} & 0.599 && \begin{tabular}[c]{@{}l@{}}10.16\\ (12.76)\end{tabular} & \begin{tabular}[c]{@{}l@{}}9.77\\ (12.07)\end{tabular} & 0.673 \\    
\vspace{-1em}\\
\textit{Panel B: Control variables} & & & &  \\
\vspace{-1.4em}\\
Female head of HH          & \begin{tabular}[c]{@{}l@{}}0.11\\ (0.31)\end{tabular}   & \begin{tabular}[c]{@{}l@{}}0.08\\ (0.28)\end{tabular}   & \begin{tabular}[c]{@{}l@{}}0.12\\ (0.32)\end{tabular}   & 0.094 && \begin{tabular}[c]{@{}l@{}}0.08\\ (0.27)\end{tabular} & \begin{tabular}[c]{@{}l@{}}0.12\\ (0.32)\end{tabular} & 0.033 \\    
\vspace{-0.95em}\\
Head of HH's age          & \begin{tabular}[c]{@{}l@{}}39.17\\ (10.23)\end{tabular}   & \begin{tabular}[c]{@{}l@{}}38.77\\ (10.02)\end{tabular}   & \begin{tabular}[c]{@{}l@{}}39.36\\ (10.34)\end{tabular}   & 0.347 && \begin{tabular}[c]{@{}l@{}}38.99\\ (10.00)\end{tabular} & \begin{tabular}[c]{@{}l@{}}39.56\\ (10.33)\end{tabular} & 0.364 \\    
\vspace{-0.95em}\\
Head of HH’s education (years)  & \begin{tabular}[c]{@{}l@{}}2.02\\ (3.14)\end{tabular}   & \begin{tabular}[c]{@{}l@{}}1.86\\ (3.05)\end{tabular}   & \begin{tabular}[c]{@{}l@{}}2.11\\ (3.19)\end{tabular}   & 0.179  && \begin{tabular}[c]{@{}l@{}}1.86\\ (3.05)\end{tabular} & \begin{tabular}[c]{@{}l@{}}2.11\\ (3.20)\end{tabular} & 0.178 \\    
\vspace{-0.95em}\\
Literate household & \begin{tabular}[c]{@{}l@{}}0.85\\ (0.35)\end{tabular}   & \begin{tabular}[c]{@{}l@{}}0.85\\ (0.36)\end{tabular}   & \begin{tabular}[c]{@{}l@{}}0.86\\ (0.35)\end{tabular}   & 0.692   && \begin{tabular}[c]{@{}l@{}}0.85\\ (0.36)\end{tabular} & \begin{tabular}[c]{@{}l@{}}0.86\\ (0.35)\end{tabular} & 0.524 \\    
\vspace{-0.95em}\\
Muslim household       & \begin{tabular}[c]{@{}l@{}}0.78\\ (0.41)\end{tabular}   & \begin{tabular}[c]{@{}l@{}}0.82\\ (0.39)\end{tabular}   & \begin{tabular}[c]{@{}l@{}}0.76\\ (0.43)\end{tabular}   & 0.325   && \begin{tabular}[c]{@{}l@{}}0.81\\ (0.39)\end{tabular} & \begin{tabular}[c]{@{}l@{}}0.76\\ (0.43)\end{tabular} & 0.334 \\    
\vspace{-0.95em}\\
Household size             & \begin{tabular}[c]{@{}l@{}}4.10\\ (1.33)\end{tabular}   & \begin{tabular}[c]{@{}l@{}}4.17\\ (1.26)\end{tabular}   & \begin{tabular}[c]{@{}l@{}}4.07\\ (1.36)\end{tabular}   & 0.273  && \begin{tabular}[c]{@{}l@{}}4.19\\ (1.27)\end{tabular} & \begin{tabular}[c]{@{}l@{}}4.11\\ (1.36)\end{tabular} & 0.335 \\    
\vspace{-0.95em}\\
HH composition equiv. scale& \begin{tabular}[c]{@{}l@{}}1.39\\ (0.08)\end{tabular}   & \begin{tabular}[c]{@{}l@{}}1.39\\ (0.08)\end{tabular}   & \begin{tabular}[c]{@{}l@{}}1.38\\ (0.08)\end{tabular}   & 0.152  && \begin{tabular}[c]{@{}l@{}}1.39\\ (0.77)\end{tabular} & \begin{tabular}[c]{@{}l@{}}1.39\\ (0.08)\end{tabular} & 0.207 \\    
\vspace{-0.95em}\\
Landholdings (in decimals)              & \begin{tabular}[c]{@{}l@{}}12.35\\ (21.51)\end{tabular} & \begin{tabular}[c]{@{}l@{}}12.81\\ (21.62)\end{tabular} & \begin{tabular}[c]{@{}l@{}}12.12\\ (21.46)\end{tabular} & 0.652  && \begin{tabular}[c]{@{}l@{}}13.20\\ (22.08)\end{tabular} & \begin{tabular}[c]{@{}l@{}}12.46\\ (21.69)\end{tabular} & 0.640 \\    
\vspace{-0.95em}\\
HH's wealth index               & \begin{tabular}[c]{@{}l@{}}-0.04\\ (1.00)\end{tabular}   & \begin{tabular}[c]{@{}l@{}}-0.06\\ (0.93)\end{tabular}  & \begin{tabular}[c]{@{}l@{}}-0.03\\ (1.04)\end{tabular}   & 0.751 && \begin{tabular}[c]{@{}l@{}}-0.04\\ (0.94)\end{tabular} & \begin{tabular}[c]{@{}l@{}}-0.01\\ (1.04)\end{tabular} & 0.639 \\    
\vspace{-0.95em}\\
HH uses a savings mechanism               & \begin{tabular}[c]{@{}l@{}}0.16\\ (0.37)\end{tabular}   & \begin{tabular}[c]{@{}l@{}}0.18\\ (0.39)\end{tabular}  & \begin{tabular}[c]{@{}l@{}}0.15\\ (0.36)\end{tabular}   & 0.185 && \begin{tabular}[c]{@{}l@{}}0.18\\ (0.39)\end{tabular} & \begin{tabular}[c]{@{}l@{}}0.15\\ (0.36)\end{tabular} & 0.175 \\   
\vspace{-0.95em}\\
HH's \# of income sources               & \begin{tabular}[c]{@{}l@{}}3.55\\ (0.83)\end{tabular}   & \begin{tabular}[c]{@{}l@{}}3.59\\ (0.79)\end{tabular}  & \begin{tabular}[c]{@{}l@{}}3.53\\ (0.85)\end{tabular}   & 0.261 && \begin{tabular}[c]{@{}l@{}}3.59\\ (0.79)\end{tabular} & \begin{tabular}[c]{@{}l@{}}3.54\\ (0.85)\end{tabular} & 0.388 \\   
\vspace{-0.95em}\\
HH's \# of shocks experienced              & \begin{tabular}[c]{@{}l@{}}0.71\\ (0.92)\end{tabular}   & \begin{tabular}[c]{@{}l@{}}0.78\\ (0.96)\end{tabular}  & \begin{tabular}[c]{@{}l@{}}0.67\\ (0.89)\end{tabular}   & 0.179 && \begin{tabular}[c]{@{}l@{}}0.77\\ (0.96)\end{tabular} & \begin{tabular}[c]{@{}l@{}}0.67\\ (0.89)\end{tabular} & 0.212 \\   
\vspace{-0.95em}\\
Village shock exposure index              & \begin{tabular}[c]{@{}l@{}}0.00\\ (1.00)\end{tabular}   & \begin{tabular}[c]{@{}l@{}}0.06\\ (0.98)\end{tabular}  & \begin{tabular}[c]{@{}l@{}}-0.03\\ (1.01)\end{tabular}   & 0.582 && \begin{tabular}[c]{@{}l@{}}0.07\\ (0.98)\end{tabular} & \begin{tabular}[c]{@{}l@{}}-0.03\\ (1.02)\end{tabular} & 0.573 \\   
\vspace{-1em}\\
\textit{Panel C: Attrition} & & & &  \\
\vspace{-1.5em}\\
Re-interviewed at endline        & \begin{tabular}[c]{@{}l@{}}0.95\\ (0.22)\end{tabular}   & \begin{tabular}[c]{@{}l@{}}0.94\\ (0.23)\end{tabular}   & \begin{tabular}[c]{@{}l@{}}0.95\\ (0.22)\end{tabular}   & 0.739 && & & \\
\vspace{-1em}\\
\textit{Panel D: Omnibus test} & & & &  && & & \\
\vspace{-0.9em}\\
Joint orthogonality F-test (p-value)        &   &   &  & 0.396 && & & 0.375  \\
\bottomrule
\end{tabular}
}
\begin{tablenotes}
\item \emph{Notes:} Baseline means and standard deviations are shown for treatment and control villages. P-values are based on regressions of the treatment indicator on the variable in the left-most column (or, for the omnibus test in Panel D, on all baseline covariates, the total number of loans, the total amount borrowed, and the re-interview indicator), with standard errors clustered at the village level.
\end{tablenotes}
\end{threeparttable}
\end{adjustbox}
\end{table}

\begin{table}[t]
\centering
\caption{Attrition robustness checks: IPW weighting and Lee bounds}
\vspace{-0.6em} 
\label{tab:attrition}

\newlength{\colW}   \setlength{\colW}{0.135\textwidth}
\newlength{\colOne} \setlength{\colOne}{0.05\textwidth}
\newlength{\labW}
\setlength{\labW}{\dimexpr\textwidth - \colOne - 3\colW - 8\tabcolsep\relax}

\newcolumntype{Y}{>{\centering\arraybackslash}p{\colW}}
\newcolumntype{O}{>{\centering\arraybackslash}p{\colOne}}
\newcolumntype{L}{>{\raggedright\arraybackslash}p{\labW}}

\footnotesize
\begin{tabular}{@{} L O Y Y Y @{}} 
\toprule
                     & (1) & (2) & (3) & (4) \\
                     &     &     & \multicolumn{2}{c}{Lee bounds} \\  \cmidrule(lr){4-5}
                     & Point    & Point est. & \multicolumn{2}{c}{$[$LB , UB$]$}  \\ 
                     &  est.          &   IPW                       & $\{$LB 95\% CI$\}$ & $\{$UB 95\% CI$\}$ \\
\midrule
\multicolumn{5}{l}{\textit{\hspace{-1em} Panel A: Full sample}}\\
\multirow{2}{=}{\# of loans from MFIs}
   & 0.01 & 0.01 & \multicolumn{2}{c}{[-0.008 ; 0.016]} \\
    [-1.15em]\\
  & (0.06) & (0.06) & \{-0.151,0.160\} & \{-0.140,0.180\} \\
\addlinespace
[-0.72em]\\
\multirow{2}{=}{\# of loans from non-MFIs}
   & -0.08 & -0.08 & \multicolumn{2}{c}{[-0.103 ; -0.077]} \\
   [-1.1em]\\
   & (0.05) & (0.05) & \{-0.218,0.017\} & \{-0.172,0.030\} \\
\addlinespace
[-0.72em]\\
\multirow{2}{=}{\# of loans from moneylenders}
   & -0.31 & -0.32 & \multicolumn{2}{c}{[-0.361 ; -0.306]} \\
    [-1.15em]\\
   & (0.13) & (0.13) & \{-0.662,-0.123\} & \{-0.598,-0.134\} \\
\addlinespace
[-0.72em]\\
\multirow{2}{=}{\# of loans from acquaintances}
   & -0.02 & -0.02 & \multicolumn{2}{c}{[-0.063 ; -0.015]} \\
    [-1.15em]\\
  & (0.07) & (0.07) & \{-0.209,0.200\} & \{-0.146,0.214\} \\
\addlinespace
[-0.72em]\\
\multirow{2}{=}{\# of loans (any)}
   & -0.06 & -0.06 & \multicolumn{2}{c}{[-0.075 ; -0.056]} \\
    [-1.15em]\\
   & (0.04) & (0.04) & \{-0.178,0.012\} & \{-0.152,0.028\} \\
\addlinespace
[-0.72em]\\
\multirow{2}{=}{Amount borrowed from MFIs}
   & 0.01 & -0.01 & \multicolumn{2}{c}{[-0.015 ; 0.018]} \\
    [-1.15em]\\
   & (0.06) & (0.07) & \{-0.149,0.168\} & \{-0.123,0.219\} \\
\addlinespace
[-0.72em]\\
\multirow{2}{=}{Amount borrowed from non-MFIs}
   & 0.02 & 0.03 & \multicolumn{2}{c}{[-0.043 ; 0.029]} \\
    [-1.15em]\\
   & (0.07) & (0.08) & \{-0.311,0.339\} & \{-0.132,0.255\} \\
\addlinespace
[-0.72em]\\
\multirow{2}{=}{Amount borrowed from moneylenders}
   & -0.27 & -0.27 & \multicolumn{2}{c}{[-0.429 ; -0.264]} \\
    [-1.15em]\\
   & (0.17) & (0.18) & \{-0.749,0.130\} & \{-0.633,0.055\} \\
\addlinespace
[-0.72em]\\
\multirow{2}{=}{Amount borrowed from acquaintances}
   & 0.01 & 0.02 & \multicolumn{2}{c}{[-0.054 ; 0.021]} \\
    [-1.15em]\\
   & (0.09) & (0.09) & \{-0.341,0.408\} & \{-0.163,0.300\} \\
\addlinespace
[-0.72em]\\
\multirow{2}{=}{Amount borrowed (any source)}
   & -0.00 & 0.00 & \multicolumn{2}{c}{[-0.035 ; 0.005]} \\
    [-1.15em]\\
   & (0.05) & (0.05) & \{-0.182,0.135\} & \{-0.106,0.151\} \\
\addlinespace
[-0.72em]\\
\multicolumn{5}{l}{\textit{\hspace{-1em} Panel B: `Willing to borrow' sample}}\\
\multirow{2}{=}{\# of loans from MFIs}
   & 0.06 & 0.05 & \multicolumn{2}{c}{[-0.054 ; 0.069]} \\
    [-1.15em]\\
   & (0.06) & (0.06) & \{-0.114,0.261\} & \{-0.117,0.266\} \\
\addlinespace
[-0.72em]\\
\multirow{2}{=}{\# of loans from non-MFIs}
   & -0.10 & -0.10 & \multicolumn{2}{c}{[-0.111 ; -0.100]} \\
    [-1.15em]\\
   & (0.06) & (0.06) & \{-0.247,0.014\} & \{-0.220,0.019\} \\
\addlinespace
[-0.72em]\\
\multirow{2}{=}{\# of loans from moneylenders}
   & -0.44 & -0.44 & \multicolumn{2}{c}{[-0.451 ; -0.433]} \\
    [-1.15em]\\
   & (0.22) & (0.23) & \{-0.871,-0.194\} & \{-0.867,-0.206\} \\
\addlinespace
[-0.72em]\\
\multirow{2}{=}{\# of loans from acquaintances}
   & -0.06 & -0.06 & \multicolumn{2}{c}{[-0.081 ; -0.050]} \\
    [-1.15em]\\
   & (0.10) & (0.10) & \{-0.271,0.245\} & \{-0.233,0.223\} \\
\addlinespace
[-0.72em]\\
\multirow{2}{=}{\# of loans (any)}
   & -0.06 & -0.06 & \multicolumn{2}{c}{[-0.068 ; -0.054]} \\
    [-1.15em]\\
   & (0.04) & (0.04) & \{-0.179,0.029\} & \{-0.155,0.029\} \\
\addlinespace
[-0.72em]\\
\multirow{2}{=}{Amount borrowed from MFIs}
   & 0.07 & 0.04 & \multicolumn{2}{c}{[0.060 ; 0.079]} \\
    [-1.15em]\\
   & (0.07) & (0.06) & \{-0.106,0.291\} & \{-0.091,0.301\} \\
\addlinespace
[-0.72em]\\
\multirow{2}{=}{Amount borrowed from non-MFIs}
   & -0.06 & -0.05 & \multicolumn{2}{c}{[-0.097 ; -0.051]} \\
    [-1.15em]\\
   & (0.09) & (0.09) & \{-0.328,0.332\} & \{-0.225,0.200\} \\
\addlinespace
[-0.72em]\\
\multirow{2}{=}{Amount borrowed from moneylenders}
   & -0.38 & -0.39 & \multicolumn{2}{c}{[-0.385 ; -0.374]} \\
    [-1.15em]\\
   & (0.21) & (0.23) & \{-0.937,-0.098\} & \{-0.883,-0.129\} \\
\addlinespace
[-0.72em]\\
\multirow{2}{=}{Amount borrowed from acquaintances}
   & 0.02 & 0.00 & \multicolumn{2}{c}{[-0.094 ; 0.026]} \\
    [-1.15em]\\
   & (0.13) & (0.13) & \{-0.372,0.758\} & \{-0.247,0.560\} \\
\addlinespace
[-0.72em]\\
\multirow{2}{=}{Amount borrowed (any source)}
   & 0.01 & -0.00 & \multicolumn{2}{c}{[-0.022 ; 0.012]} \\
    [-1.15em]\\
   & (0.05) & (0.05) & \{-0.131,0.144\} & \{-0.103,0.155\} \\
\midrule
\multicolumn{5}{@{}p{\textwidth}@{}}{\footnotesize
\textit{Notes:} Columns (1) and (2) report baseline and inverse-probability weighted (IPW) point estimates, respectively, with cluster-robust standard errors in parentheses. The IPW estimates re-weight observations by the inverse of the estimated probability of being re-interviewed at endline, defined as $w(z,x) = [Pr(R=1|z,x)/Pr(R=1|x)]^{-1}$, where $R=1$ if the household was re-interviewed, $x$ are the covariates included in the regression whose robustness is evaluated, and $z$ are auxiliary variables predicting attrition \citep{john1998analysis,wooldridge2002inverse}. The set $z$ includes the respondent’s gender, age, literacy status, and years of education, enumerator indicators, and an indicator for whether the baseline interview took place on a Friday (the day of congregational prayer for Muslims). Columns (3)–(4) report Lee bounds with 95\% bootstrap confidence intervals (in braces), based on 1,000 bootstrap replications.}\\
\end{tabular}
\end{table}

\begin{table}[!htbp]
\small
\caption{Linear IV estimates of the effect of MFI competition on loan counts by credit source\label{tab:nrloanslinear}}
\centering
  \begin{threeparttable}
\begin{tabular}{llllll}
\toprule
\multicolumn{1}{c}{} & \multicolumn{1}{c}{(1)}                                          & \multicolumn{1}{c}{(2)}                                             & \multicolumn{1}{c}{(3)}                                                 & \multicolumn{1}{c}{(4)}       & \multicolumn{1}{c}{(5)}                               \\
 & \begin{tabular}[c]{@{}l@{}}\# of loans \\ from MFIs \textcolor{white}{iii}\end{tabular} & \begin{tabular}[c]{@{}l@{}}\# of loans\\ from non-MFIs\end{tabular} & \begin{tabular}[c]{@{}l@{}}\# of loans from\\ moneylenders\end{tabular} & \begin{tabular}[c]{@{}l@{}}\# of loans from\\ acquintances\end{tabular} & \begin{tabular}[c]{@{}l@{}}\# of loans\\ (any)\end{tabular} \\
\midrule
\multicolumn{6}{@{}l}{\textit{Panel A: Full sample}}\\
\vspace{-.6em}\\
\# of MFIs  &        0.00   &       -0.09   &       -0.04*  &       -0.04   &       -0.09   \\
            &      (0.03)   &      (0.06)   &      (0.03)   &      (0.05)   &      (0.07)   \\
\vspace{-.6em}\\
E(dep. var.)  & 0.72 & 1.20 & 0.24 & 0.76 & 1.92 \\
(st. dev.)    & (0.80) & (1.16) & (0.57) & (0.99) & (1.53) \\
\vspace{-.6em}\\
\% of zeros in dep. var. & 46.1\% & 32.0\% & 81.4\% & 51.4\% & 18.3\% \\
\vspace{-.6em}\\
Observations                   &        2735   &        2735   &        2735   &       2735   &        2735   \\
\vspace{-.6em}\\
Kleibergen--Paap F-stat.  & 20.93 & 21.09 & 20.96 & 21.10 & 20.80 \\
\\
\multicolumn{6}{@{}l}{\textit{Panel B: `Willing to borrow' sample}}\\

\vspace{-.6em}\\
\# of MFIs  &        0.01   &       -0.10   &       -0.08** &       -0.02   &       -0.08   \\
            &      (0.04)   &      (0.07)   &      (0.04)   &      (0.06)   &      (0.08)   \\
\vspace{-.6em}\\
E(dep. var.)  & 0.76 & 1.23 & 0.27 & 0.76 & 1.99 \\
(st. dev.)    & (0.80) & (1.14) & (0.59) & (0.99) & (1.51) \\
\vspace{-.6em}\\
Zeros (\%)  & 42.6\% & 30.7\% & 79.0\% & 51.6\% & 16.0\% \\
\vspace{-.6em}\\
Observations                   &        1374   &        1374   &        1374   &       1374   &        1374   \\
\vspace{-.6em}\\
Kleibergen--Paap F-stat.  & 21.18 & 21.01 & 21.05 & 21.02 & 21.07 \\
\\

\multicolumn{6}{@{}l}{\textit{Panel C: Village-level}}\\
\vspace{-.6em}\\

\# of MFIs  &       -0.01   &       -0.07   &       -0.05   &       -0.02   &       -0.08   \\
            &      (0.04)   &      (0.06)   &      (0.03)   &      (0.05)   &      (0.08)   \\
\vspace{-.6em}\\
E(dep. var.)  & 0.71 & 1.20 & 0.24 & 0.76 & 1.91 \\
(st. dev.)    & (0.37) & (0.45) & (0.20) & (0.37) & (0.64) \\
\vspace{-.6em}\\
Zeros (\%)  & 1.4\% & 0.7\% & 14.4\% & 1.4\% & 0.0\% \\
\vspace{-.6em}\\
Observations                   &         146   &         146   &         146   &        146   &         146   \\
\vspace{-.6em}\\
Kleibergen--Paap F-stat. & 15.12 & 16.26 & 15.20 & 16.20 & 15.00 \\
\bottomrule
\end{tabular}
    \begin{tablenotes}
\item \emph{Notes:} The table reports linear IV estimates of the effect of MFI competition on household loan counts by credit source. The number of MFIs in a village is instrumented with randomized MFI entry (RDRS). E(dep. var.) refers to the baseline mean of the dependent variable in the estimation sample. All specifications include the same baseline controls as in Table~\ref{tab:nrloans}. Panel C uses baseline village averages of these variables. Robust standard errors clustered at the village level are in parentheses. Kleibergen--Paap rk Wald F-statistics are reported at the bottom of each panel. $^{*}$, $^{**}$, $^{***}$ denote statistical significance at the 10, 5, and 1 percent levels, respectively.
\end{tablenotes}
\end{threeparttable}
\end{table}

\begin{landscape}

\begin{table}[]
\centering
\begin{threeparttable}
\caption{Effects of MFI competition on the extensive margin of borrowing (probability of having at least one loan from a given source)}
\label{tab:extensive}
\footnotesize
\begin{tabular}{@{}lllllllllll@{}}
\toprule
 & \multicolumn{2}{l}{$\geq 1$ MFI loan} & \multicolumn{2}{l}{$\geq 1$ non-MFI loan} & \multicolumn{2}{l}{\begin{tabular}[c]{@{}l@{}}$\geq 1$ moneylender \\ loan\end{tabular}} & \multicolumn{2}{l}{\begin{tabular}[c]{@{}l@{}}$\geq 1$ loan\\ from acquintances\end{tabular}} & \multicolumn{2}{l}{\begin{tabular}[c]{@{}l@{}}$\geq 1$ loan\\ (any source)\end{tabular}} \\ \midrule
 & 2SLS          & IV-probit          & 2SLS            & IV-probit            & 2SLS                                    & IV-probit                                   & 2SLS                                      & IV-probit                                      & 2SLS                                    & IV-probit                                   \\

\midrule
\textit{Panel A: Full sample}  &&&&&&&&&&\\     
\# of MFIs    &       0.012         &       0.034         &      -0.033         &      -0.099         &      -0.032         &      -0.113\sym{*}  &       0.010         &       0.030         &      -0.024         &      -0.102         \\
            &     (0.022)         &     (0.060)         &     (0.023)         &     (0.069)         &     (0.021)         &     (0.067)         &     (0.023)         &     (0.062)         &     (0.017)         &     (0.072)         \\
\vspace{-.6em}\\
Observations        &        2903         &        2903         &        2903         &        2903         &        2903         &        2903         &        2903         &        2903         &        2903         &        2903         \\
\\
\begin{tabular}[c]{@{}l@{}}\textit{Panel B: `Willing to }\\ \textit{borrow' sample}\end{tabular}  &&&&&&&&&&\\   

\# of MFIs    &       0.028         &       0.087         &      -0.056\sym{*}  &      -0.164\sym{*}  &      -0.050\sym{*}  &      -0.193\sym{**} &       0.003         &       0.010         &      -0.039\sym{*}  &      -0.178\sym{*}  \\
            &     (0.025)         &     (0.078)         &     (0.030)         &     (0.086)         &     (0.028)         &     (0.089)         &     (0.029)         &     (0.080)         &     (0.021)         &     (0.092)         \\
\vspace{-.6em}\\
Observations          &        1374         &        1374         &        1374         &        1374         &        1374         &        1374         &        1374         &        1374         &        1374         &        1374         \\
\\
\begin{tabular}[c]{@{}l@{}}\textit{Panel C: Village-level}\\ \textit{averages}\end{tabular}  &&&&&&&&&&\\     
\# of MFIs    &       0.011         &&      -0.023         &&      -0.028         &&       0.030         &&      -0.025\sym{*}  \\
            &     (0.018)         &&     (0.031)         &&     (0.030)         &&     (0.021)         &&     (0.015)         \\
\vspace{-.6em}\\
Observations &         146         &&         146         &&         146         &&         146         &&         146         \\
\bottomrule
\end{tabular}
\begin{tablenotes}[flushleft]
\footnotesize
\item \emph{Notes:} The table reports estimates of the effect of MFI competition on the probability that a household has at least one loan from a given source. Odd-numbered columns report IV linear probability model (2SLS) estimates, and even-numbered columns report IV probit estimates. The number of MFIs in a village is instrumented with randomized MFI entry (RDRS). E(dep. var.) refers to the baseline mean of the dependent variable in the estimation sample. All specifications include the same baseline controls as in Table~\ref{tab:nrloans}. Robust standard errors clustered at the village level are in parentheses. $^{*}$, $^{**}$, $^{***}$ denote statistical significance at the 10, 5, and 1 percent levels, respectively.

\end{tablenotes}

\end{threeparttable}
\end{table}

\begin{table}[htbp]
\centering
\begin{threeparttable}
\caption{Robustness to winsorization: Effect of MFI competition on amounts borrowed}
\label{tab:winsor1}

\footnotesize                           
\setlength{\tabcolsep}{6.75pt}           
\renewcommand{\arraystretch}{1.05}      

\begin{tabular}{l*{15}{c}}              
\toprule
Amount borrowed from:           & \multicolumn{3}{c}{\shortstack{MFIs}} 
            & \multicolumn{3}{c}{\shortstack{Non-MFIs}} 
            & \multicolumn{3}{c}{\shortstack{Moneylenders}} 
            & \multicolumn{3}{c}{\shortstack{Acquaintances}} 
            & \multicolumn{3}{c}{\shortstack{Any source}} \\
\cmidrule(lr){2-4}\cmidrule(lr){5-7}\cmidrule(lr){8-10}\cmidrule(lr){11-13}\cmidrule(lr){14-16}
(1000 Tk.)            &  & wins. & wins. &  & wins. & wins.&  & wins. & wins.&  & wins. & wins.&  & wins. & wins.\\
            & raw & 99$^{\text{th}}$ & 98$^{\text{th}}$& raw & 99$^{\text{th}}$ & 98$^{\text{th}}$& raw & 99$^{\text{th}}$ & 98$^{\text{th}}$& raw & 99$^{\text{th}}$ & 98$^{\text{th}}$& raw & 99$^{\text{th}}$ & 98$^{\text{th}}$\\ 
\midrule
\textit{Panel A: Full sample}  &&&&&&&&&&&&&&&\\     
\# of MFIs  &        0.01   &        0.01   &        0.01   &        0.14   &        0.02   &       -0.02   &       -0.15   &       -0.27*  &       -0.28** &        0.07   &        0.01   &       -0.00   &        0.04   &       -0.00   &       -0.01   \\
            &      (0.07)   &      (0.06)   &      (0.06)   &      (0.09)   &      (0.07)   &      (0.07)   &      (0.20)   &      (0.17)   &      (0.16)   &      (0.10)   &      (0.09)   &      (0.09)   &      (0.05)   &      (0.05)   &      (0.05)   \\
\vspace{-.7em}\\
E(dep. var.)&       12.98   &       12.36   &       12.14   &        8.68   &        7.34   &        6.77   &        1.95   &        1.54   &        1.38   &        4.67   &        3.69   &        3.50   &       21.66   &       20.07   &       19.35   \\
(st. dev.)  &       23.76   &       16.81   &       15.95   &       31.09   &       14.59   &       11.56   &       10.34   &        4.81   &        3.94   &       24.00   &        8.47   &        7.46   &       41.42   &       26.00   &       22.79   \\
\vspace{-.6em}\\
Zeros (\%)  &        46.1   &        46.1   &        46.1   &        32.0   &        32.0   &        32.0   &        81.4   &        81.4   &        81.4   &        51.4   &        51.4   &        51.4   &        18.3   &        18.3   &        18.3   \\
\vspace{-.6em}\\
Observations&         2735   &         2735 &         2735  &         2735  &         2735  &         2735  &         2735  &         2735  &         2735  &         2735  &         2735  &         2735  &         2735  &         2735  &   2735 \\     
\\

\begin{tabular}[c]{@{}l@{}}\textit{Panel B: `Willing to }\\ \textit{borrow' sample}\end{tabular}  &&&&&&&&&&&&&&&\\     
\# of MFIs  &        0.08   &        0.07   &        0.08   &       -0.03   &       -0.06   &       -0.07   &       -0.34** &       -0.38** &       -0.38** &       -0.03   &        0.02   &       -0.01   &        0.01   &        0.01   &        0.01   \\
            &      (0.07)   &      (0.07)   &      (0.07)   &      (0.10)   &      (0.09)   &      (0.09)   &      (0.20)   &      (0.21)   &      (0.20)   &      (0.13)   &      (0.13)   &      (0.13)   &      (0.06)   &      (0.05)   &      (0.05)   \\
\vspace{-.7em}\\
E(dep. var.)&       13.92   &       12.90   &       12.69   &        7.87   &        7.18   &        6.73   &        2.38   &        1.73   &        1.55   &        4.10   &        3.51   &        3.36   &       21.79   &       20.42   &       19.81   \\
(st. dev.)  &       28.14   &       16.70   &       15.86   &       20.55   &       13.65   &       11.12   &       13.31   &        5.04   &        4.15   &       14.34   &        7.98   &        7.16   &       37.18   &       25.09   &       22.33   \\
\vspace{-.6em}\\
Zeros (\%)  &        42.6   &        42.6   &        42.6   &        30.7   &        30.7   &        30.7   &        79.0   &        79.0   &        79.0   &        51.6   &        51.6   &        51.6   &        16.0   &        16.0   &        16.0   \\
\vspace{-.6em}\\
Observations&         1374   &         1374 &         1374  &         1374  &         1374  &         1374  &         1374  &         1374  &         1374  &         1374  &         1374  &         1374  &         1374  &         1374  &         1374  \\
\\
\begin{tabular}[c]{@{}l@{}}\textit{Panel C: Village-level}\\ \textit{averages}\end{tabular}  &&&&&&&&&&&&&&&\\     
\# of MFIs  &       -0.03   &        0.00   &        0.00   &        0.11   &        0.03   &       -0.01   &        0.08   &       -0.19   &       -0.25   &        0.07   &        0.02   &       -0.00   &        0.01   &       -0.01   &       -0.02   \\
            &      (0.08)   &      (0.07)   &      (0.07)   &      (0.08)   &      (0.07)   &      (0.07)   &      (0.23)   &      (0.23)   &      (0.23)   &      (0.10)   &      (0.09)   &      (0.09)   &      (0.07)   &      (0.06)   &      (0.06)   \\
\vspace{-.7em}\\
E(dep. var.)&       12.84   &       12.20   &       11.98   &        8.61   &        7.32   &        6.75   &        1.96   &        1.55   &        1.39   &        4.65   &        3.68   &        3.49   &       21.45   &       19.88   &       19.17   \\
(st. dev.)  &        8.35   &        7.11   &        6.86   &        7.63   &        4.45   &        3.70   &        2.76   &        1.53   &        1.30   &        5.42   &        2.50   &        2.27   &       12.35   &        9.39   &        8.73   \\
\vspace{-.6em}\\
Zeros (\%)  &         1.4   &         1.4   &         1.4   &         0.0   &         0.0   &         0.0   &        14.4   &        14.4   &        14.4   &         0.0   &         0.0   &         0.0   &         0.0   &         0.0   &         0.0   \\
\vspace{-.6em}\\
Observations&         146   &         146   &         146   &         146   &         146   &         146   &         146   &         146   &         146   &         146   &         146   &         146   &         146   &         146   &         146   \\
\bottomrule
\end{tabular}
\begin{tablenotes}[flushleft]
\footnotesize
\item \emph{Notes:} Each column reports IV Poisson (GMM) estimates of the effect of MFI competition on the amount borrowed from the indicated source. The number of MFIs in a village is instrumented with randomized MFI entry (RDRS). Columns labeled “raw” use the untransformed outcome, while columns labeled “wins.” winsorize the dependent variable at the indicated percentile. E(dep. var.) refers to the baseline mean of the dependent variable in the estimation sample. All specifications include the same baseline controls as in Table~\ref{tab:nrloans}. Panel C uses baseline village averages of these variables as controls. Robust standard errors clustered at the village level are in parentheses. Cells report $\exp(\widehat\beta)-1$. $^{*}$, $^{**}$, $^{***}$ denote statistical significance at the 10, 5, and 1 percent levels, respectively.
\end{tablenotes}
\end{threeparttable}
\end{table}

\end{landscape}

\begin{table}[!htbp]
\small
\caption{Linear IV estimates of the effect of MFI competition on total amounts borrowed by credit source \label{tab:amtslinear}}
\centering
  \begin{threeparttable}
\begin{tabular}{l*{5}{>{\centering\arraybackslash}p{2.2cm}}}
\toprule
\multicolumn{1}{l}{} & \multicolumn{1}{c}{(1)}                                          & \multicolumn{1}{c}{(2)}                                             & \multicolumn{1}{c}{(3)}                                                 & \multicolumn{1}{c}{(4)}      & \multicolumn{1}{c}{(5)}                    \\
 & \begin{tabular}[c]{@{}c@{}}Amount \\(1000 Tk) \\borrowed \\ from MFIs \textcolor{white}{iii}\end{tabular} & \begin{tabular}[c]{@{}c@{}}Amount \\(1000 Tk) \\borrowed \\ from non-MFIs\end{tabular} & \begin{tabular}[c]{@{}c@{}}Amount \\(1000 Tk) \\borrowed from\\ moneylenders\end{tabular} & \begin{tabular}[c]{@{}c@{}}Amount \\(1000 Tk) \\borrowed from\\ acquintances\end{tabular} & \begin{tabular}[c]{@{}c@{}}Amount \\(1000 Tk) \\borrowed\\(any source) \end{tabular} \\
\midrule

\multicolumn{6}{@{}l}{\textit{Panel A: Full sample}}\\
\vspace{-.6em}\\
\# of MFIs  &        0.07   &        0.19   &       -0.23   &       -0.08   &        0.08   \\
            &      (0.61)   &      (0.54)   &      (0.21)   &      (0.33)   &      (0.87)   \\
\vspace{-.6em}\\
E(dep. var.)&       12.36   &        7.34   &        1.54   &        3.69   &       20.07   \\
(st. dev.)  &       16.81   &       14.59   &        4.81   &        8.47   &       26.00   \\
\vspace{-.6em}\\
Zeros (\%)  &        46.1   &        32.0   &        81.4   &        51.4   &        18.3   \\
\vspace{-.6em}\\
Observations&        2735   &        2735   &        2735   &        2735   &        2735   \\
\vspace{-.6em}\\
Kleibergen--Paap F-stat.  & 21.15 & 21.09 & 21.04 & 21.07 & 20.86 \\
\\
\multicolumn{6}{@{}l}{\textit{Panel B: `Willing to borrow' sample}}\\
\vspace{-.6em}\\
\# of MFIs  &        0.38   &       -0.22   &       -0.45   &        0.02   &       -0.01   \\
            &      (0.75)   &      (0.66)   &      (0.32)   &      (0.39)   &      (1.06)   \\
\vspace{-.6em}\\
E(dep. var.)&       12.90   &        7.18   &        1.73   &        3.51   &       20.42   \\
(st. dev.)  &       16.70   &       13.65   &        5.04   &        7.98   &       25.09   \\
\vspace{-.6em}\\
Zeros (\%)  &        42.6   &        30.7   &        79.0   &        51.6   &        16.0   \\
\vspace{-.6em}\\
Observations&        1374   &        1374   &        1374   &        1374   &        1374   \\
\vspace{-.6em}\\
Kleibergen--Paap F-stat. & 21.38 & 21.08 & 21.13 & 21.07 & 21.16 \\
\\
\multicolumn{6}{@{}l}{\textit{Panel C: Village-level averages}}\\
\vspace{-.6em}\\
\# of MFIs  &       -0.03   &        0.39   &       -0.20   &        0.01   &        0.04   \\
            &      (0.73)   &      (0.57)   &      (0.23)   &      (0.35)   &      (1.03)   \\
\vspace{-.6em}\\
E(dep. var.)&       12.20   &        7.32   &        1.55   &        3.68   &       19.88   \\
(st. dev.)  &        7.11   &        4.45   &        1.53   &        2.50   &        9.39   \\
\vspace{-.6em}\\
Zeros (\%)  &         1.4   &         0.0   &        14.4   &         0.0   &         0.0   \\
\vspace{-.6em}\\
Observations&         146   &         146   &         146   &         146   &         146   \\
\vspace{-.6em}\\
Kleibergen--Paap F-stat.  & 16.70 & 16.89 & 16.16 & 16.53 & 15.03 \\
\bottomrule
\end{tabular}

\begin{tablenotes}
\small
\item \emph{Notes:} The table reports linear IV (2SLS) estimates of the effect of MFI competition on household borrowing amounts. The number of MFIs in a village is instrumented with randomized MFI entry (RDRS). The dependent variable is the total amount borrowed (in 1,000 Taka) from: (1) MFI credit providers, (2) non-MFIs (including moneylenders and acquaintances), (3) moneylenders, (4) acquaintances, and (5) all sources combined. E(dep. var.) refers to the baseline mean of the dependent variable in the estimation sample. All specifications include the same baseline controls as in Table~\ref{tab:nrloans}. Panel~C uses baseline village averages of these variables as controls. The dependent variable is winsorized at the 99$^{\text{th}}$ percentile, as in Table~\ref{tab:amts}. Robust standard errors clustered at the village level are in parentheses. Kleibergen--Paap rk Wald F-statistics are reported at the bottom of each panel. $^{*}$, $^{**}$, $^{***}$ denote statistical significance at the 10, 5, and 1 percent levels, respectively.
\end{tablenotes}
\end{threeparttable}
\end{table}

\newpage

\begin{figure}
    \centering
    \subfloat{\includegraphics[width=.495\textwidth]{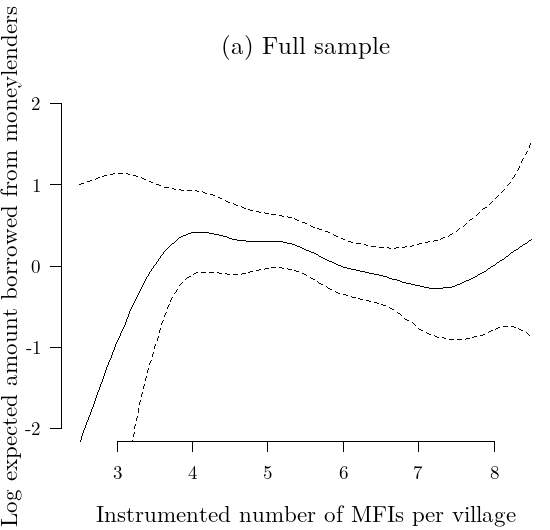}}
    \hspace{-.2cm}
    \subfloat{\includegraphics[width=.495\textwidth]{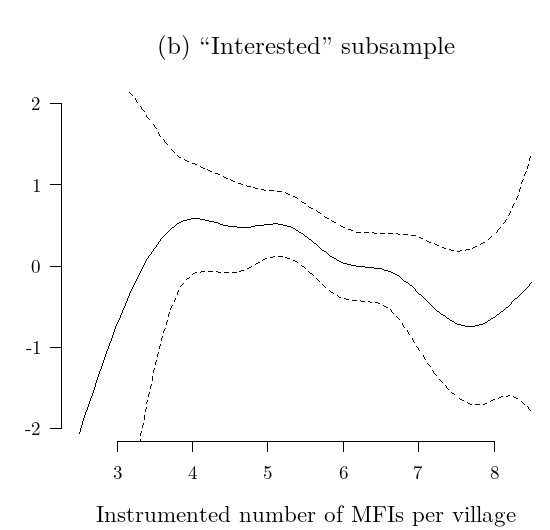}}
    \caption{Estimated smooth effect of MFI competition on amount borrowed from moneylenders in (a) the full sample, and (b) the "interested" in MFI borrowing subsample. The plotted range excludes sparsely populated tails where the semiparametric estimates become imprecise.}
    \label{fig:semip_amts}
\end{figure}

\begin{landscape}

\begin{table}[!htbp]
\footnotesize
\caption{Effects of MFI competition on borrowing outcomes with Romano–Wolf step-down adjusted p-values\label{tab:tableMHT}}
\centering
  \begin{threeparttable}
  \setlength{\tabcolsep}{7pt}
\begin{tabular}{lllllllllllll}
\toprule
\multicolumn{1}{l}{} & \multicolumn{1}{l}{(1)}                                          & \multicolumn{1}{l}{(2)}                                             & \multicolumn{1}{l}{(3)}                                                 & \multicolumn{1}{l}{(4)}      & \multicolumn{1}{l}{(5)}                    && \multicolumn{1}{l}{(6)}                                          & \multicolumn{1}{l}{(7)}                                             & \multicolumn{1}{l}{(8)}                                                 & \multicolumn{1}{l}{(9)} & \multicolumn{1}{l}{(10)} \\
           & \multicolumn{5}{c}{\# of loans from...}                                                                                                                                                       &  & \multicolumn{5}{c}{Amount borrowed (1,000 Tk) from...}                                                                                                                                            \\ \cmidrule(lr){2-6} \cmidrule(l){8-12} 
           & MFIs & non-MFIs & \begin{tabular}[c]{@{}l@{}}money-\textcolor{white}{ab}\\ lenders\end{tabular} & \begin{tabular}[c]{@{}l@{}}acquain-\\ tances\end{tabular} & \begin{tabular}[c]{@{}l@{}}any \\ source\end{tabular} &  & MFIs & non-MFIs & \begin{tabular}[c]{@{}l@{}}money-\textcolor{white}{ab}\\ lenders\end{tabular} & \begin{tabular}[c]{@{}l@{}}acquain-\\ tances\end{tabular} & \begin{tabular}[c]{@{}l@{}}any \\ source\end{tabular} \\ 

\midrule
\multicolumn{6}{@{}l}{\textit{Panel A: Full sample}}\\
\vspace{-.7em}\\
\# of MFIs \textcolor{white}{abcd} &        0.01   &       -0.08  &       -0.31&       -0.02   &       -0.06      &&        0.02   &        0.03   &       -0.22  &        0.02   &        0.01   \\
            &   (0.8837)   &   (0.0754)   &   (0.0035)   &   (0.7595)   &   (0.1078)   &&   (0.8744)   &   (0.7743)   &   (0.0625)   &   (0.8762)   &   (0.9736)   \\
            &   [0.9989]   &   [0.3027]   &   [0.0385]   &   [0.9979]   &   [0.3872]   &&   [0.9989]   &   [0.9979]   &   [0.3027]   &   [0.9989]   &   [0.9989]   \\
\vspace{-.7em}\\
E(dep. var.)&        0.72   &        1.20   &        0.24   &        0.76   &        1.92   && 12.36 & 7.34 & 1.54 & 3.69 & 20.07 \\
(st. dev.)  &        0.80   &        1.16   &        0.57   &        0.99   &        1.53  && (16.81) & (14.59) & (4.81) & (8.47) & (26.00) \\
\vspace{-.6em}\\
Zeros (\%)      &        46.1   &        32.0   &        81.4   &        51.4   &        18.3   && 46.1\% & 32.0\% & 81.4\% & 51.4\% & 18.3\% \\
\vspace{-.6em}\\
Observations&        2735   &        2735   &        2735   &        2735   &        2735      &&        2735   &        2735   &        2735   &        2735  &        2735 \\
\\
\multicolumn{6}{@{}l}{\textit{Panel B: `Willing to borrow' sample}}\\
\vspace{-.7em}\\
\# of MFIs  &        0.06   &       -0.10  &       -0.44&       -0.06   &       -0.06  &&        0.07   &       -0.06   &       -0.35 &       -0.01   &        0.01   \\
            &   (0.3373)   &   (0.0533)   &   (0.0088)   &   (0.5522)   &   (0.1474)   &&   (0.3052)   &   (0.5039)   &   (0.0226)   &   (0.8987)   &   (0.8949)   \\
            &   [0.8231]   &   [0.2670]   &   [0.0923]   &   [0.9231]   &   [0.5374]   &&   [0.8044]   &   [0.9231]   &   [0.1747]   &   [0.9868]   &   [0.9868]   \\
\vspace{-.6em}\\
E(dep. var.)  &   12.90 &   7.18   &   1.73    &   3.51   &  20.42    &&   12.90 &   7.18   &   1.73    &   3.51   &  20.42   \\
(st. dev.)    & (16.70) &   (13.65) &   (5.04) &   (7.98) &   (25.09)   && (16.70) &   (13.65) &   (5.04) &   (7.98) &   (25.09) \\
[0.2em]
\vspace{-.6em}\\
Zeros (\%)      &  42.6\% &   30.7\%  &   79.0\%  &   51.6\%  &   0.00\%   &&  42.6\% &   30.7\%  &   79.0\%  &   51.6\%  &   0.00\%  \\
\vspace{-.6em}\\
Observations  &        1374   &        1374   &        1374   &        1374 &        1374 &&        1374   &        1374   &        1374   &        1374 &        1374   \\
\bottomrule
\end{tabular}
\begin{tablenotes}
\small
\item \emph{Notes:} The table reports estimates of the effect of MFI competition on borrowing outcomes, together with step-down adjusted p-values that are robust to multiple hypothesis testing using the Romano--Wolf procedure \citep{romano2005exact,romano2005stepwise}. The number of MFIs in a village is instrumented with randomized MFI entry (RDRS). For each coefficient, model-based p-values computed using robust standard errors clustered at the village level are shown in parentheses, and the corresponding Romano--Wolf step-down adjusted p-values are shown in square brackets. Columns (1)--(5) report effects on the number of loans, and columns (6)--(10) report effects on amounts borrowed (in 1,000 Taka). E(dep. var.) refers to the baseline mean of the dependent variable in the estimation sample. All specifications include the same baseline controls as in Table~\ref{tab:nrloans}. Amount outcomes are winsorized at the 99$^{\text{th}}$ percentile, as in Table~\ref{tab:amts}. Statistical significance indicators are omitted.
\end{tablenotes}
\end{threeparttable}
\end{table}

\end{landscape}

\begin{table}[!htbp]
\caption{Suggestive IV estimates of the effect of MFI competition on moneylender interest rates}
\label{tab:interest}
\centering
  \begin{threeparttable}
\footnotesize
\setlength{\tabcolsep}{10pt}
\begin{tabular}{lllllll}
\toprule
\multicolumn{1}{l}{} & \multicolumn{1}{l}{(1)} & \multicolumn{1}{l}{(2)}  & \multicolumn{1}{l}{(3)}  && \multicolumn{1}{l}{(4)}  & \multicolumn{1}{l}{(5)} \\

 & \multicolumn{3}{c}{Moneylender interest rate (levels)} && 
   \multicolumn{2}{c}{log moneylender interest rate} \\
 \cline{2-4} \cline{6-7}
 & raw \textcolor{white}{abcd} & wins. 99$^{\text{th}}$ & raw && raw \textcolor{white}{abcde}& raw \\
    
\midrule
\multicolumn{6}{@{}l}{\textit{Panel A: Full sample}}\\
\vspace{-.6em}\\
\# of MFIs  &      -22.51*  &      -14.62   &      -11.60   &&       -0.08   &       -0.02   \\
            &     (13.61)   &     (10.40)   &     (10.67)   &&      (0.07)   &      (0.07)   \\
\vspace{-.6em}\\
E(dep. var.)&      135.57   &      131.89   &      131.27   &&        4.65   &        4.66   \\
(st. dev.)  &      134.45   &      104.59   &      102.17   &&        0.71   &        0.68   \\
\vspace{-.6em}\\        
Observations&         503   &         503   &         474   &&         503   &         451   \\
Baseline mean(Y) as control &               &               &   \checkmark    &&               &   \checkmark  \\  
\\
\multicolumn{6}{@{}l}{\textit{Panel B: `Willing to borrow' sample}}\\
\vspace{-.6em}\\
\# of MFIs  &       -5.21   &       -0.75   &        4.13   &&        0.04   &        0.12   \\
            &     (12.46)   &     (11.28)   &     (11.21)   & &     (0.07)   &      (0.07)   \\
\vspace{-.6em}\\
E(dep. var.)&      137.38   &      135.11   &      134.54   &  &      4.67   &        4.68   \\
(st. dev.)  &      124.78   &      109.43   &      107.84   &   &     0.70   &        0.69   \\
\vspace{-.6em}\\        
Observations&         285   &         285   &         270   &   &      285   &         257   \\
Baseline mean(Y) as control &               &               &   \checkmark    &&               &   \checkmark  \\  
\\
\multicolumn{6}{@{}l}{\textit{Panel C: Village-level averages}}\\
\vspace{-.6em}\\
\# of MFIs  &      -63.37*  &      -35.41** &      -15.76   &&       -0.17*  &       -0.05   \\
            &     (35.98)   &     (17.13)   &     (17.77)   & &     (0.11)   &      (0.08)   \\
\vspace{-.6em}\\
E(dep. var.)&      147.70   &      136.63   &      135.39   &  &      4.68   &        4.69   \\
(st. dev.)  &      164.01   &       75.19   &       62.93   &   &     0.50   &        0.41   \\
\vspace{-.6em}\\        
Observations&         123   &         123   &         107   &    &     123   &         107   \\
Baseline mean(Y) as control &               &               &   \checkmark    & &              &   \checkmark  \\  
\bottomrule
\end{tabular}
\begin{tablenotes}
\footnotesize
\item \emph{Notes:} The table reports IV (2SLS) estimates of the effect of MFI competition on moneylender interest rates. 
The number of MFIs operating in a village is instrumented with randomized MFI entry (RDRS). 
The dependent variable is the moneylender interest rate in levels (columns (1)--(3)) or its logarithm (columns (4)--(5)). 
Columns (4)--(5) report exponentiated coefficients, $\exp(\beta)-1$, which can be interpreted as semi-elasticities.

In Panels A and B, the unit of observation is a household that reported at least one moneylender loan in the year preceding the endline survey and the associated interest rate; 
in Panel C, the unit of observation is a village with at least one reported moneylender loan. 
Because interest-rate information is observed for only a limited and selective subset of households 
(approximately 18.6\% of endline moneylender borrowers), columns (1) and (2) omit the baseline realization of the outcome to preserve sample size, 
while columns (3) and (5) include baseline covariates at the village level. E(dep. var.) refers to the baseline mean of the dependent variable in the estimation sample.

All specifications include the same baseline controls as in Table~\ref{tab:nrloans}, where applicable. 
The dependent variable in column (2) is winsorized at the 99$^{\text{th}}$ percentile to limit the influence of outliers. 
Robust standard errors clustered at the village level are reported in parentheses. 
$^{*}$, $^{**}$, and $^{***}$ denote statistical significance at the 10, 5, and 1 percent levels, respectively.
\end{tablenotes}
\end{threeparttable}
\end{table}


 \end{document}

  {\bf Many missing, many redundant, many incomplete.}

 Aleem (1990)
 
 Amin et al. (2003)
 
 Baquero, Hamadi, and Heinin (2018)
 
 Barman, D., Mathur, H. \& V. Kalra (2009) Role of Microfinance Interventions in Financial Inclusion: A Comparative Study of Microfinance Models, Journal of Business Perspective, 13(3): 51-59.
 
  Berg et al. (2013)

 Bose (1998)
 
Bulow and Rogoff (1989)
 
 de Janvry, A., McIntosh, C., Sadoulet, E., 2010. Journal of Development Economics, 93, 173-188.
 
 Demont (2012)
 
 de Quidt, Fetzer, and Ghatak (2018)
 
 Fernando, N.A. 2006. Understanding and Dealing with High Interest Rates on Microcredit: A Note to Policy Makers in Asia and the Pacific Region. Asian Development Bank.
 
 Gertler and Gruber (2002)

  Gertler et al. (2009)

   Ghosh and Ray (2016)
 
 Heckman, James J., Hidehiko Ichimura, Petra E. Todd. Matching as an Econometric Evaluation Estimator: Evidence from Evaluating a Job Training Programme. Review of Economic Studies 64: 605-654.
 
 Hoff, K., Stiglitz, J.E., 1998. Money-lenders and Bankers: Price-Increasing Subsidies in a Monopolistically Competitive Market. Journal of Development Economics 55: 485-518.
 
 Islam, Asadul. 2011. Medium- and Long-term Participation in Microcredit: An Evaluation Using a New Panel Dataset from Bangladesh. American Journal of Agricultural Economics 93(3): 847-866.
 
    Islam and Maitra (2012)
 
   Islam, Nguyen and Smyth (2014)
   
   Iqbal (1988)
 
 Jain, S. and G. Mansuri. 2003. A Little at a Time: The Use of Regularly Scheduled Repayments in Microfinance Programs, Journal of Development Economics 72(2): 253-279
 
  Khandker (2000)
 
  Kocher (1997)
 
 Mallick, Debdulal. 2012. Microfinance and Moneylender Interest Rate: Evidence from Bangladesh. World Development 40 (6): 1181-1189.
 
 Menkhoff and  Rungruxsirivorn (2011)
 
 Marconi, R., Mosley, P., 2005. Bolivia during the global crisis 1998-2004: towards a macroeconomics of microfinance. Sheffield Economic Research Paper Series 2005-007.
 
 McIntosh, C., Wydick, B., 2005a. Competition and microfinance. Journal of Development Economics 78, 271-298.
 
 McIntosh, C., de Janvry, A., Sadoulet, E., 2005b. How increasing competition among microfinance institutions affects incumbent lenders. Economic Journal 115, 987-1004.

 McKernan, S.-M., Pitt, M. M., \& Moskowitz, D. 2005. Use of the formal and informal financial sectors: does gender matter? Empirical evidence from rural Bangladesh. World Bank Policy Research Paper 3491, World Bank, Washington, DC.
 
 Meyer, R. L. 2002. The demand for flexible microfinance products: Lessons from Bangladesh. Journal of International Development, 14(3), 351-368.
 
 Mookherjee, D., A. Motta. 2016: A Theory of Interactions Between MFIs and Informal Lenders. Journal of Development Economics, 121: 191-200.  
 
 Morduch, J. 1998. Does microfinance really help the poor? New evidence from flagship programs in Bangladesh. New York University.

 Munshi and Rosenzweig (2009)
 
 Navajas, S., Conning, J., Gonzalez-Vega, C., 2003. Lending Technologies, Competition and Consolidation in the Market for Microfinance in Bolivia, Journal of International Development 15: 747-770.
 
 Peterson and Rajan (1995)
 
 Pitt, Mark M., Shahidur R. Khandker. 1998. The Impact of Group-Based Credit Programs on Poor Households in Bangladesh: Does the Gender of Participants Matter. Journal of Political Economy, 106 (5): 958-996.
 
 Porteous, D., 2006. Competition and Microcredit interest Rates. CGAP Focus Note No. 33. February 2006.
 
 Ravallion, Martin, and Shaohua Chen. 2005. Hidden Impact: Household Saving in Response to a Poor-Area Development Project, Journal of Public Economics, 89: 2183-2204.
 
 Rodman, D., and J. Morduch. 2014. The impact of microcredit on the poor in Bangladesh: revisiting the evidence. Journal of Development Studies. 50(4): 583-604.
 
 Rosenbaum, Paul R. and Donald B. Rubin. 1983. The Cenral Role of the Propensity Score in Observational Studies for Causal Effects. Biometrika 70 (1): 41-55.
 
 Salim, Mir M. 2013. Revealed Objective Functions of Microfinance Institutions: Evidence from Bangladesh. Journal of Development Economics 104: 34-55..

 Sinha, S., \& Matin, I. (1998). Informal credit transactions of microcredit borrowers in rural Bangladesh. IDS Bulletin, 29(4), 66-80.
 
 Tedeschi, Gwendolyn Alexander; Karlan, Dean. 2010. Cross-Sectional Impact Analysis: Bias from Dropouts. Perspectives on Global Development and Technology, 9 (3-4): 270-291(22)
 
 United Nations Interagency Committee on Integrated Rural Development for Asia and the Pacific, 1992. Partners in Rural Poverty Alleviation: MFI Cooperation. United Nations,
New York.
 
 van Tassel, E., 2002. Signal Jamming in New Credit Markets. Journal of Money, Credit and Banking 34: 469-490.
 
 Vogelgesang, Ulrike. 2003. Microfinance in the Times of Crisis: The Effects of Competition, Rising Indebtedness, and Economic Crisis on Repayment Behavior. World Development 31(12): 2085-2114.
 
 Zeller, M., Sharma, M., Ahmed, A. U., \& Rashid, S. (2001). Group-based financial institutions for the rural poor in Bangladesh: An institutional and household-level analysis. Research Report No. 120, IFPRI, Washington, DC.

Amin et al. (2003) argue that in Bangladesh households that join microfinance have better access to insurance and smoothing devices.
 find that in Bangladesh, households that have access to microfinance are less likely to sale livestock in response to health shocks.\footnote{In the context of India however, Munshi and Rosenzweig (2009)  argue that within caste loans are more important for self-insurance, rather than micro-credit.   } In the context of Indonesia, while
 \noindent $\square$ \textbf{Co-existence of MFIs and moneylenders.}  Jain and Mansuri (2003) suggest that the MFIs use  repayment schemes to ensure that borrowers are forced to take bridge loans from the moneylenders, thereby allowing the MFIs to tap into the greater information that moneylenders have regarding the borrowers.   On the other hand
 
 \newpage

 \section{Appendix 3: Profit-maximizing MFIs}\label{As3}

The case of interest is thus when the MFIs are extremely profit-oriented. For ease of exposition we in fact focus on the case where    $\lambda=0$.      Let us denote the equilibrium by $(\widehat k(n),\widehat r(n))$. Consider the problem where the $i$-th MFI  maximizes $k_ir_i$ subject to the incentive compatibility conditions for the case when there is a consumption shock and $x=n-1$.\footnote{Note that from Proposition \ref{profit} later, we find that the equilibrium loan amount is less than $k^E$, so that we can argue as earlier that satisfying  (\ref{20}) and   (\ref{21}) ensures that all the DICs are satisfied. Further, note that at the equilibrium,  the limited liability constraint cannot bind because in that case the incentive compatibility constraints  would be violated. Hence in what follows we ignore (\ref{LL}). This follows because if (\ref{LL}) binds then the LHS of both  (\ref{20}) and (\ref{21}) are negative. Hence (\ref{20}) and (\ref{21}) are violated.} Hence  the MFI's problem can be written as:
 \footnotesize{\begin{eqnarray}\label{problem}
&& \max_{k_i,r_i}\; k_i(1+r_i),\nonumber\\
&& \textrm{subject to},\nonumber\\
 && \frac{\delta}{1-\delta}\frac{\{f(k_i+\frac{n-1}{n}\widehat k)-k_i(1+r_i)- \frac{(n-1)\widehat\widehat k}{n}(1+\widehat r)\}-\{f(\frac{n-1}{n}\widehat k)- \frac{(n-1)\widehat k)}{n}(1+\widehat r)\}}    {\{k_i(1+r_i)+\frac{n-1}{n}\widehat k(1+\widehat r)\}-\frac{(n-1)\widehat k}{n}(1+\widehat r)}
  \ge  \frac{m(1+c)}{p+(1-p)m(1+c)},\label{20}\nonumber\\\\
{\textrm and,}    && \frac{\delta}{1-\delta}\frac{\{f(k_i+\frac{n-1}{n}\widehat k)-k_i(1+r_i)- \frac{(n-1)\widehat k}{n}(1+\widehat r)\}-\{f(\frac{n-1}{n}\widehat k)- \frac{(n-1)\widehat k(1+\widehat r))}{n}\}}    {\widehat k(1+\widehat r)-\frac{(n-1)\widehat k}{n}(1+\widehat r)}   \ge  \frac{m(1+c)}{p+(1-p)m(1+c)}.\label{21}\nonumber\\\end{eqnarray}}\normalsize

We then argue that there exists an equilibrium where the aggregate loan amount is the efficient one, i.e. $\underline k$, and the interest rate is given by
\begin{equation}
\widehat r=  \frac{Z}{1+Z}\frac{f(\widehat k)-f(\frac{(n-1)\widehat k}{n})}{\widehat k}-1,\label{mission}\end{equation}
where $Z=\frac{\delta [p+(1-p)m(1+c)]}{(1-\delta)m(1+c)},$   where $\underline k$ solves $f'(k)=0$.

Sketch of proof: At the equilibrium, the slope of (\ref{20}) and (\ref{21}) in the $(k_i,r_i)$ plane  must be identical at the solution value, otherwise $k_ir_i$ can be increased.
Thus
$$\frac{dk_i}{dr_i}|_{(\ref{20})}=\frac{(Z+1)k_i}{Zf'(.)-(Z+1)(1+r_i)},$$ and
$$\frac{dk_i}{dr_i}|_{(\ref{21})}=\frac{Zk_i}{Zf'(.)-Z(1+r_i)},$$
Equating the two,  $f'(\widehat k)=0$.

Implication of a change in $n$: As $n$ increases $\widehat r$ decreases. Thus the borrower has more disposable income, and thus takes less loans from the moneylenders, thus the moneylender rate of interest increases. All the implications are thus reversed.

\end{document}
 
 \section{Introduction}


Micro-credit offered by microfinance institutions (MFIs) has emerged as one of the most significant source of credit, especially for poor households living in rural areas in developing countries who otherwise would not have access to formal sector credit. According to the State of the Micro-credit Summit Campaign Report 2013 (http://stateofthecampaign.org/the-report/executive-summary/) by 2011, there were 195 million micro-credit borrowers. In Bangladesh, based on the nationally representative Household Income and Expenditure Survey (HIES) 2010  conducted by the  of Bangladesh Bureau of Statistics, by 2010, MFI loan constituted more than 50\% of total credit for rural households, and for households with smaller landholding, such as 0.5 acre or less, it constituted approximately 54\% of total credit. \footnote{The second most important source of credit for poor household was the informal sector interest bearing loans that constituted approximately 24\% of total credit in 2010.} Furthermore, approximately 33\% of all rural households took MFI loans, and the average number of loans in 2010 stood at 1.2 loans per household.

The increased reach of MFIs to rural unbanked borrowers has accompanied with competition among MFIs. The objective of our study is to understand the effects of competition among MFIs on borrowers' MFI memberships, loan size and loan amount offered by MFIs, and loans received from informal sector and its interest rates, and borrower's debt burden in rural areas of Bangladesh. In addition, our focus is to understand the interactions between informal lenders that traditionally dominated rural sector and MFIs that newly emerged as a major source of credit, and how their interactions affected loans received from informal sector and its interest rates for rural borrowers with limited access to formal sector credit.

Competition among MFIs in micro-credit market and its consequences on MFIs and borrowers is a growing concern for policy makers and development practitioners in Bangladesh and many other developing countries (PKSF 2006, Vogelsang 2003, McIntosh et al. 2005, de Quidt et al. 2013). In Bangladesh and elsewhere, MFIs have achieved phenomenal success in expanding credit and reaching the poor (Gonzalez 2010). Encouraged by the success, recent years have witnessed a large influx of competing MFIs in the rural sector. With borrowers having access to multiple MFIs, the restriction of 'one borrower from one household' is breaking down and in 2005, the overlapping/double-dipping rate (the proportion of borrowers with multiple lending sources) stood at around 40\% (PKSF, 2006). \textcolor{red}{This contradicts \citet{mahmud2022microfinance}, who find that over time the double dipping rate goes up to 23\% in 2014.} Such a process has purportedly worsened MFI performance and has emerged as an important policy concern in Bangladesh and other matured micro-credit markets such as India (Srinivasan 2010), Bolivia (Vogelsang 2003) and elsewhere (McIntosh et al. 2005).

The rural credit market in Bangladesh has historically been served by a range of informal providers such as landlords, local moneylenders, shopkeepers, friends and relatives. During the 1970s, the government expanded its state-owned banks' branches to rural areas and until 1982, acted as the major source of credit in the rural areas. However, the financial liberalization that had started in 1980s resulted in either the denationalization of state-owned banks or closure of their rural branches (Khalily and Meyer, 1993). However, the void created by the retreat of state-owned banks has not been fulfilled by private commercial banks. Instead, microfinance institutions (MFIs) supported by national governments and donor agencies have rapidly expanded credit to a segment of (poor) rural households.


Understanding how MFI competition may affect lenders and borrowers is perhaps nowhere more important than in Bangladesh. Figure 1 depicts the growth of MFIs and their borrowers from 1999 to 2014 in Bangladesh. Based on Credit and Development Forum (CDF) that collates annual data on MFIs and microfinance(MF) MFIs, while the MFIs and MF-MFIs grew at an annual rate of 0.4\%, the number of borrowers borrowing from them increased at a rate of 16.7\%. Similarly, based on Mix Market Data\footnote {http://www.mixmarket.org/mfi/country/Bangladesh accessed in July 2014}, 82 MFIs listed there had a total of US\$3.9 billion outstanding loans disbursed among more than 16.5 million micro-credit borrowers \footnote{Borrower numbers for Grameen Bank â€“ the largest micro lender in the country, and few others were missing in the mix market data.} (Taka 18,900 per borrower). The five largest MFIs that cover most districts had an outstanding loan of US\$100 million or more, and each of the four largest MFIs had more than a million borrowers. While the high number of MFIs in aggregate suggests that microfinance is very a competitive industry at national level, at local level the actual number of MFIs vary substantially. Moreover, at village level, that matters the most for borrowers access to MFI loans \footnote{MFIs do not offer credit to borrowers from outside villages}, even in 2012, there were villages that still remained outside the operation of MFIs (PKSF Annual Report 2013).

The literature on MFI competition is relatively nascent. However, the general presumption here is that competition among MFIs weakens repayment incentives since borrowers having access to multiple lenders reduces any given MFIâ€™s ability to discipline a borrower (Hoff and Stiglitz 1998). Competition also creates problem due to limited information sharing among MFIs, absence of borrowerâ€™s credit history and costly verification. Indeed, focus group discussions with MFIsâ€™ credit officers in Bangladesh suggest that misreporting and manipulation from borrowers have been on the rise and presumably increased competition among MFIs has led to lower repayment rate. This point has also been argued in McIntosh et al. (2005). However, the loan repayment rate of MFIs in Bangladesh at aggregate level does not show any apparent downward trend in recent years marked by increased competition.  Based on two rounds of survey data conducted on MFI members and non-members collected from villages characterized by single and multiple MFIs, we find that as opposed to conventional wisdom, competition of MFIs is associated with fewer numbers of MFI memberships, smaller loan size and smaller amount of MFI loans. In addition, we also find that while the role of informal sector credit and its interest rate have reduced significantly between the survey periods, however, the difference in their roles between monopoly and competitive MFI market is not evident.

This paper seeks to understand the consquences of competition among MFIs on bor- rowers and informal moneylenders. We develop a theoretical model which studies the effects of increase in the number of the MFIs on aggregate MFI loan size, informal moneylender loan size and interest rate. We also study the effects of increase in MFI competition on the welfare of the borrowers as well as on the profit of the moneylenders. We find that the aggregate loan size of MFI decreases with increase in the number of the MFIs. The moneylender?s loan amount decreases and interest rate increases with increase in competition among MFIs. The borrower?s welfare decreases and profit of the moneylender increases as a consequence of increase in the number of MFIs. These predictions of our theoretical model are supported by empirical findings, in particular an empirical study conducted by Chowdhury et al. (2017) find similar results.

We develop a model similar to Jain and Mansuri (2003) where a moneylender pro- vides bridge-loans which are used to repay MFI loan(s). However in their model bridge- loans are required because of frequent repayment schedule of the MFIs, another widely used mechanism by the MFIs. In our framework, the bridge-loan is required because the borrower invests in a risky project, outcome of which is observed by the moneylender but not by the MFI(s), so in case of project failure the moneylender provides bridge-loan to the poor borrower which enables her to repay MFI loan(s). In case the borrower does not repay a particular MFI, that MFI terminates the contract and renegotiation is not allowed.

Due to this structure, the contracts with any MFI will be terminated with positive probability if the moneylender does not provide bridge-loans. Hence, the moneylender plays a very important role in our model. Though the MFIs do not provide contracts to the moneylender directly, but they know the moneylender?s strategy and design contracts accordingly.
We assume that the MFIs do not share information, so if a borrower defaults on a particular MFI loan, only that MFI terminates its contract. So as the number of MFIs increases, the effect of punishment for default on one MFI loan decreases, this increases the incentive for default. Further implication of which is that the maximum aggregate MFI loan amount which is incentive compatible decreases with increase in the number of MFIs. So at the optimum, we observe decrease in aggregate MFI loan size with increase in the number of MFIs.

Since the job of the moneylender, in our framework, is to provide bridge-loans above discussion further implies that moneylender loan size decreases with increase in the number of MFIs. On the other hand, we find that the moneylender interest rate increases with increase in MFI competition. Intuitively this can be explained in two ways - first, as discussed above as the number of MFIs increases, effects of threat of terminating credit contracts decreases, or in other words moneylender?s bargaining power increases with the increase in the number of MFIs which results in increase in moneylender?s interest rate. Second, we get this result due to our assumption of concave production function - since the moneylender is profit maximizer, the moneylender extracts the entire borrower surplus, whenever that is possible. This implies that the difference in moneylender interest rate, as a result of increase in the number of MFIs, is exactly equal to the difference in the proportion of return from total investment to total investment. Now due to our assumption of concave production function, the proportion of return to investment increases with decrease in investment. Hence the result.

We find that the borrower?s welfare decreases with the increase in the number of MFIs and profit of the moneylender increases in that case. This is due to the fact that the borrower?s present discounted value of lifetime utility is an increasing function of the loan size which decreases with MFI competition. On the other hand, moneylender?s profit increases because increase in profit due to the increase in interest rate is dominated by the decrease in profit due to the decrease in loan size which happens due to increase in the number of MFIs.
Finally we find that when the agents (we assume that the moneylender and the borrower have common discount factor) are very patient, there is over borrowing - this happens because the objective of the MFI is to increase borrower?s present discounted value of lifetime utility which is an increasing function of investment till it reaches the effcient amount (without any risk) and the risk of the project is borne by the

\section{Literature on MFI Competition}

\subsection{Theoretical Models and their Testable Implications}

Theoretical papers that examined the mechanics of competition among MFIs on borrowers' outcomes are relatively new. Three such papers that make testable predictions are Hoff and Stiglitz (1998), Jain and Mansuri (2003), and McIntosh and Wydick (2005).

In Hoff and Stiglitz (1998), an increase in the number of lenders adversely affects borrowers' incentives to repay. If enforcement is done through collection effort and reputation effects of borrowers, the later (reputation effects) gets weakened with increased in (the number of) lenders. This in turn increases lenders enforcement costs, and may result in higher interest rates. Hoff and Stiglitz provide testable predictions; with increased competition among MFIs (uncollateralized lenders) the interest rates and default on loans will go up. However, the first outcome cannot be directly tested since interest rate is more of an administratively set and it is fairly constant across MFIs in Bangladesh (Salim 2013). In the case of the second outcome, the default rate of MFI loans is close to zero. In the specific case of our data, the total number of default loans among the sample households stood at 10 and 4 in year 1998 and 2010, respectively, which represented less than 1\% of the total number of MFI loans. Hoff and Stiglitz (1998) predictions may be relevant if we think, given the regulatory constraints, MFIs adjust their loan size at the margin. In this case, competition may reduce loan size, which, as we will see, is supported by our empirical findings.

In McIntosh and Wydick (2005), while competition among MFIs may make the wealthier and impatient borrowers better off by offering them multiple loans, it makes the poorer and more patient borrowers worse off. The entry of a new MFI reduces the ability of the incumbent MFI to generate rents from the profitable borrowers and to cross-subsidize the poorest borrowers. Similar to Hoff and Stiglitz (1998), competition in McIntosh and Wydick (2005) reduces information sharing among lenders about borrowers which leads impatient borrowers to take multiple loans, which in turn, reduces repayment rate, and makes patient borrowers worse off. As said, repayment rates remain very high, almost perfect, in Bangladesh. However, in our empirical analysis, we will see if borrowers' heterogeneity matters or not. In particular, if a poor (land ownership less than 0.5 acre), female-headed, and older (household head is above 50 years old) borrowers are worse off in competition.

Jain and Mansuri (2003) is perhaps the first paper that gazed at the competition in rural credit market between incumbent informal lenders and an entrant MFI. However instead of competition, they focused on one specific but very important feature of micro-credit contracts, weekly/regular repayment schedule often imposed on clients, and showed how this particular feature may complement MFIs and informal lenders. In their model, the repayment schedule that the entrant MFI offers to its client requires the client to borrow from an informal lender, who (informal lender) ensures that the client does not engage in ex-post moral hazardous behavior. As such, Jain and Mansuri predict positive complementarities between informal lenders and MFIs, and the possibility of crowding-in of informal lenders and increased interest rate in the informal sector. Though Jain and Mansuri did not discuss competition among MFIs, one can conjecture that with competition, borrowers may not need to borrow from informal lenders, and it seems our data supports this conjecture. While borrowing from informal sector went up for MFI members in monopoly MFI market, MFI members in competitive MFI market borrowed less from the informal sector.

The testable implications of competition among MFIs derived from above three theoretical papers are summarized in Table 1.

Competition among MFIs can lead borrowers to borrow from more than one MFIs, which is termed as double dipping in McIntosh and Wydick (2005). In PKSF data, information on the total number of loans taken from MFIs and the total number of MFI memberships are available, and we have used them to test this prediction. Related to double dipping is the debt burden on borrowers often portrayed as â€œover-indebtednessâ€ due to MFI competition (Vogelgesang 2003). From PKSF data, we have constructed debt-ratio (total outstanding credit over total income) and an indicator if the client had to take a loan to repay another loan. The effect of competition on repayment of MFI loans is predicated to be negative. However, as is well known in the microfinance literature, repayment rates have always been very high, and in the particular case of Bangladesh, the repayment rates remain as high as 99\%, and hence this cannot be tested using the PKSF data. Similarly, in Bangladesh, interest rate is regulated and the variations in interest rate charged by MFIs are almost non-existent, and hence cannot be tested.

How competition reshapes the average loan size that MFIs offer is not discussed in any of the above three papers, and it seems that they all assume that MFIs do not adjust this important instrument. However, in PKSF data, it can be seen that the loan size varies over time and across space, and we will associate this change with a change in the market structure.

[Two kinds of dropped out borrowers are: first, leaving microfinance industry altogether and, second, dropping out from a particular MFI and joining another MFI. For the first one, we need to compare borrowers who dropped out after 1998 with those who continued. How does competition affected them? This group is already identified and we have included them in one of comparison groups. However, there are issues with the second kind of dropouts; should we look at household level or at individual borrower level? In addition, recall bias is a major issue here. Borrowers who switched MFIs cannot recall the reason and date of such switches.]

\subsection{Empirical Literature}
\textbf{On multiple loan-taking/double-dipping:} Multiple loan-taking has been associated with increased default rates in Vogelgesang (2003) and Marconi and Mosley (2005) in Bolivia during a period of economic crisis. However, Vogelgesang (2003) also found positive effect of credit market competition on repayment behavior of microfinance clients.  McIntosh et al. (2005) in Uganda found the evidence of double dipping through an increase in default rate at group level.  

\textbf{On over-indebtedness:} Vogelgesang (2003) associated increased indebtedness of microfinance clients in Bolivia with increased competition among MFIs.

\textbf{Loan size offered by MFI:} McIntosh et al. (2005) examined the effect of competition on loan size and found no effect. Again, the outcome is measured at group level, not at individual borrower level.

\textbf{On dropout from MFIs:} McIntosh et al. (2005) found that increased competition among lenders in rural credit market significantly increases the likelihood of dropout of clients from the incumbent lenders. However, lenders considered in McIntosh et al. (2005) were heterogeneous and often offered differentiated products (e.g., repayment terms were different).  

\textbf{On interest rate:} Fernando (2006) found that increased competition among MFIs led to a reduction in interest rates charged by MFIs over 2003-2006 in Cambodia. Porteus (2006) found a similar effect of competition on interest rates over the 1990s in Bolivia. However, Vogelgesang (2003) in the same context found the opposite effect â€“ with increased competition clients had to pay high interest rates in Bolivia.

\textbf{On informal lending:} Lending from MFI may work as a substitute to other lending sources and hence may reduce the borrowing from other sources including moneylenders. In fact, this was the basis for expansion of state-owned banks in many developing countries in the 1960s and 1970s, and this has been the stated objective of many MFIs including Grameen Bank.\footnote{'Freeing' poor people from the clutches of 'evil moneylenders' is one of the important justifications put forward by MFIs (Meyer, 2002).}

Available empirical evidence relies heavily on observed associations. For example, Sinha and Matin (1998), Zeller et al. (2001), McKernan et al. (2005), and Barman et al. (2009)  suggest that households in Bangladesh that borrow from MFIs often borrow from informal lenders as well. Furthermore, Mallick (2012) finds that moneylender interest rates are higher in the villages where higher percentage of households borrows from MFIs. This supports Jain and Mansuri (2003) that due to frequent, often weekly, repayment requirement, MFI lending generates demand for other lending, especially informal lending, hence works as a complement. Alternatively, MFI lending may make borrowers previously unqualified for informal lending credit-worthy and provide access to informal lending and fulfills their unmet credit demand. However, it is not clear how MFI competition may affect the demand for and supply of informal loans.

\end{document}

\subsection{The MFI offers consumptin loans}

Consider a scenario where the MFIs offer consumption loans as well.  We consider a candidate equilibrium where along the equilibrium path all MFIs initially offer a loan of $(\frac{k^C(n)}{n},0)$,   and in case the borrower reports a consumption shock in stage 2, offers an additional loan of $\frac{S-f(k^C(n))-k^C(n)\}}{n}$. The borrower is supposed to repay $\frac{k^C(n)}{n}$ in case there is no consumption shock, and $\frac{[S-\{f(k^C)-k^C\}]+k^C}{n}$ otherwise.  In order to sustain this as an equilibrium, it is necessary that the borrower has no incentive to ask for a consumption loan even if a shock has not happened, and then default on a single MFI, while repaying  the others. The DIC constraint that ensures that this does not happen is given by
{\footnotesize \begin{eqnarray}
&&f(k^C)-k^C+\frac{\delta}{1-\delta}\left[p[f(k^C)-k^C]-(1-p)[S-\{f(k^C)-k^C\}]\right]\nonumber\\
&\ge &f(k^C)-k^C+\frac{k^C+[S-\{f(k^C)-k^C\}]}{n}+\frac{\delta}{1-\delta}[p[f(\frac{(n-1)k^C}{n})-\frac{(n-1)k^C}{n}]-(1-p)[S-\{f(\frac{(n-1)k^C}{n})-\frac{(n-1)k^C}{n}].\nonumber\\\end{eqnarray}}
Re-arranging, we have that
\begin{eqnarray}
 \frac{\delta}{1-\delta}\frac{f(k^C)-f(\frac{(n-1)k^C}{n})}{\frac{k^C}{n}}\ge 1+\frac{\delta}{1-\delta}+ \frac{S-\{f(k^C)-k^C\}}{k^C}.\label{eqcons}\end{eqnarray}
Next suppose that that $\overline k(n)=k^{DIC(n,n-1)}$, so that   (\ref{11}) binds. We then have that
\begin{eqnarray}
 \frac{\delta}{1-\delta}\frac{f(\overline k)-f(\frac{(n-1)\overline k}{n})}{\frac{\overline k}{n}}= \frac{\delta}{1-\delta}+ \frac{m(1+c)}{p+(1-p)m(1+c)}.\label{30}\end{eqnarray}
Comparing  (\ref{eqcons}) and (\ref{30}), it is clear that $K^C(n)<\overline k(n)$, whenever the costs of mitigating the shock $S$ is large, i.e.
$$1+\frac{S-\{f(k^E)-k^E\}}{k^E}> \frac{m(1+c)}{p+(1-p)m(1+C)}.$$
We can mimic the preceding argument to show that borrower's welfare is higher in case consumption loans are not allowed  whenever
\begin{eqnarray}
f(\overline k)-\overline k -m(1+c)(1-p)S>f(k^C)-k^C-(1-p)S.\end{eqnarray}
Of course, as discussed earlier, there can be other factors that ensure that MFIs do not want to provide consumption loans.

  Example where borrowers worse off in case of consumption shock: ?????

\newpage

\end{document}